\renewcommand{\baselinestretch} {1.2}
\makeatletter \setcounter{page}{1}
\def\bbeta{{\boldsymbol\beta}}
\def\bmu{{\boldsymbol\mu}}
\def\singlespace{\def\baselinestretch{1}\@normalsize}
\def\non{\nonumber}
\def\bse{\begin{eqnarray*}}
\def\ese{\end{eqnarray*}}
\def\be{\begin{eqnarray}}
\def\ee{\end{eqnarray}}
\def\bsq{\begin{equation*}}
\def\esq{\end{equation*}}
\def\bq{\begin{equation}}
\def\eq{\end{equation}}
\def\overd{\stackrel{d}\rightarrow}
\def\hat{\widehat}
\numberwithin{equation}{section}
\renewcommand{\theequation} {\arabic{section}.\arabic{equation}}
\newtheorem{condition}{\underline{\bf Assumption}}
\newtheorem{thm}{\underline{\bf Theorem}}[section]
\newtheorem{lem}{\underline{\bf Lemma}}[section]
\newtheorem{remark}{\underline{\bf Remark}}
\def\tit.arg{\textbf{Jackknife Model Averaging for Composite Quantile Regression}}
\DeclareMathOperator*{\argmin}{argmin}
\def\shorttitle.arg{
JMA for Composite Quantile Regression
}
\def\key.arg{Asymptotic optimality; Composite Quantile Regression; Model averaging.
}
\def\abst.arg{Model averaging considers the model uncertainty and is an alternative to model selection. In this paper, we propose a  frequentist model averaging estimator for composite quantile regressions.
In recent years, research on these topics has been added as a separate method, but no study has investigated them in combination.
We apply a delete-one cross-validation method to estimate the model weights,
and prove that the jackknife model averaging estimator is asymptotically optimal in terms of minimizing out-of-sample composite final prediction error. Simulations are conducted to demonstrate the good finite sample properties of our estimator and compare it with commonly used model selection and averaging methods. The proposed method is applied to the analysis of the stock returns data and the wage data and performs well.}
\def\author.arg{Miaomiao Wang$^{a,b,c,}$\footnote{Email: wangmm@amss.ac.cn}, Guohua Zou$^{d}$
\vskip 0.2cm
{\it \small $^{a}$School of Chinese Pharmacy, Beijing University of Chinese Medicine, Beijing, China. \\
$^{b}$Academy of Mathematics and Systems Science, Chinese Academy of Sciences, Beijing, China. \\
$^{c}$University of the Chinese Academy of Sciences, Beijing, China. \\
$^{d}$School of Mathematical Sciences, Capital Normal University, Beijing, China.
}\\
\medskip
}
\begin{document}
\begin{sloppypar}
\thispagestyle{empty}

\begin{center}
{\Large \tit.arg}

\vskip 3mm

\author.arg
\end{center}

\vskip 3mm \centerline{\small ABSTRACT} \abst.arg
\\

\noindent {KEY WORDS:} \key.arg

\clearpage\pagebreak\newpage
\pagenumbering{arabic}
\setcounter{page}{1}

\section{Introduction}\label{sec1}
\setcounter{equation}{0}
\baselineskip=24pt
Quantile regression (QR) incepted by \citep{koenker1978quantile} is now an indispensable tool for modeling
heterogeneous data in different areas including econometrics, finance, microarrays, medical and agricultural studies. More details refer to \cite{koenker2005quantile} and \cite{Yu2010Quantile}.

As it does in least squares regression (LSR), variable selection plays an important role in QR studies. Some literatures suggest that the model selection method commonly used in LSR can be applied to QR.
\cite{machado1993robust} suggested a generalization of Schwarz information criterion (SIC) for general M-estimators which are applicable to QR by replacing the squared loss function with the ``check function" introduced by \citep{koenker1978quantile}.
\cite{koenker2005quantile} suggested selecting based on minimizing variant Akaike information criterion (AIC) which may have superior performance for prediction although the AIC overestimates the model dimension.
The asymptotic property of such penalized quantile regression estimators fits well within the convergence framework introduced by \cite{Knight2000Asymptotics}.
Another research focus is on developing QR penalty methods.
\cite{Koenker2004Quantile} employed $L_{1}$ regularization method, which uses the
sum of the absolute values of the coefficients as the penalty, to shrink individual effects towards a common value.
\cite{Li2008L1} evolved an efficient algorithm and an estimator for computing the entire solution path and the effective dimension of the $L_{1}$ penalized QR, respectively.
Moreover, \cite{wu2009quantile} presented penalized quantile regressions with the SCAD and adaptive-LASSO penalties and established the asymptotic properties.
Recently, many researches on this issue have emerged; see for
example, \cite{Alhamzawi2012Bayesian}, \cite{Li2010Bayesian}, \cite{Wu2013Penalized}, \cite{Yuan2010Bayesian}, among others.

Regardless of the data-driven approach, the search for the best model identifies the presence of multiple candidate models, which means that the level of uncertainty associated with model selection is often ignored when reporting accurate estimators. This is a recognized limitation of model selection and has been extensively studied in the literature; see for example, \cite{claeskens2016model} and \cite{leeb2006conditional}.
One method of combining uncertainty in model selection is model averaging, where the estimation of unknown parameters is based on a set of weighted models rather than a single model. There are two branches:
Bayesian model averaging (BMA) \cite{hoeting1999bayesian} and frequentist model averaging (FMA) \cite{hansen2007least}.
When using BMA, the uncertainty of the model is considered by setting the prior probability for each candidate model.
There are some problems with this approach, such as how to set a priori probabilities and how to handle the conflict with each other. In contrast, the FMA method does not require a priori, and the corresponding estimator is completely data-driven. Therefore, the FMA approach has received significant attention over the past decade.
After the pioneering contributions of \cite{hansen2007least} and \cite{hjort2003averaging}, the FMA research results have appeared frequently in just a few years.
Much of the work in this literature is about finding criteria for estimating model weights.
\cite{Buckland1997Model}, \cite{Burnhan2002Model} and \cite{hjort2006focused} constructed model averaging weights based on the values of model selection criterion scores.
\cite{yang2001adaptive} proposed an algorithm called adaptive regression by mixing
to handle multiple candidate error distribution.
\cite{hansen2007least} introduced the Mallows' criterion and \cite{hansen2008joe, wan2010least} also contributed to this.
Other major weight selection methods developed in recent years include
MSE minimization \cite{liang2011optimal, wan2014ijf, zhang2014mixed}, cross-validation (CV) procedures \citep{ando2014model, hansen2012jackknife, lu2015jackknife, zhang2013jackknife}, and minimization of Kullback-Leibler type measures \citep{zhang2016optimal,zhang2015kullback}. A major trend throughout the literature is that for prediction, a combination of competitive models is usually superior to a single model from a sampling theory perspective.

In the existing FMA studies, \cite{lu2015jackknife} and \cite{shanyang2009sinica} are the only two studies that emphasize QR.
\cite{lu2015jackknife} allocated weights to QR models by a leave-one-out CV method, and demonstrated that this method yields an asymptotically optimal FMA estimator in terms of minimizing final prediction error . \cite{shanyang2009sinica} extended the ARM algorithm of \cite{yang2001adaptive} to QR. As we all know, their estimators are based on one quantile.
The composite quantile regression (CQR) method, proposed by \cite{zou2008quantile}, is a useful extension of the conventional quantile regression method. This method can simultaneously consider multiple quantile regression models, and then can derive a more efficient estimator. Recently, many papers considered the applications of the CQR method. \cite{Jiang2012Single} extended the CQR method to a single-index model and \cite{Jiang2012Oracle} studied the model selection for nonlinear models with the weighted CQR method. \cite{Kai2010Local} proposed a local CQR smoothing method for nonparametric regression models. \cite{Zhao2015Empirical} studied empirical likelihood inferences via CQR, and so on. However, as far as known, the model averaging for CQR models have not been considered in the literature.

Taking this issue into account, we propose a model averaging estimator for composite quantile regression. Our results about inference of parameter estimates may be treated as the counterpart to the conclusions in \cite{lu2015jackknife} who focused on linear quantile regression. However, the implementation process are more challenging due to the non-smooth nature of the loss function, coupled with the fact that unlike quantile regression, we consider multiple quantile and the number of quantiles may be infinite.
Because of these, it is very tricky to handle the Lagrange remainder of Taylor expansion. In the proof asymptotic normality of estimators, we apply the Lindeberg-Feller central limit theorem and verifying its conditions is also an important step. Adopting these inference conclusions, we prove that the resultant FMA estimator, using a composite leave-one-out CV method for assigning model weights, possesses an asymptotically optimal property.
Due to the weight selection method combining the strength across multiple quantile,
our proofs of optimality are based on very different technical assumptions.
The results are derived using empirical process theory and depend on the conditions of the covariate dimension and the growth rate of the combined quantile relative to the sample size.

The reminder of the article is organized as follows. Section \ref{sec:MF} describes the model framework and introduces the model averaging estimator for CQR. In Section \ref{sec:mainresults}, we present the weight selection criterion and prove that the FMA estimator based on the proposed weight choice method has an asymptotically optimal property. Section \ref{sec:simu} provides an empirical comparison of the proposed FMA estimator with a host of other estimators including the method in \cite{lu2015jackknife}. In Section \ref{sec:Real}, we apply the proposed method to real example. Some concluding remarks are contained in Section \ref{sec:con}. Proofs of technical results are given in the Appendixes.

\section{Model framework and estimator}\label{sec:MF}
\subsection{Composite quantile regression}
Let $y$ be a scalar dependent variable, $\boldsymbol{x}=(x_{1},x_{2},\ldots,x_{p})^{T}$ be a $p$-dimensional vector of covariates. Consider the following linear model
\begin{eqnarray}
\label{eq2.1}
y=\boldsymbol{x}^{T}\bbeta+\varepsilon,
\end{eqnarray}
where $\bbeta=(\beta_{1},\cdots,\beta_{p})^{T}$ is $p\times1$ unknown regression coefficient vector and $\varepsilon$ is the model error.
Let
 $\mathcal{D}_{n}=\{(\boldsymbol{x}_{i},y_{i}),i=1,\ldots,n\}$ be independent and identically distributed (IID) copies of $(\boldsymbol{x},y)$, where $\boldsymbol{x}_{i}=(x_{i1},x_{i2},\ldots,x_{ip})^{T}$. Suppose that the random sample $(\boldsymbol{x}_{i},y_{i}),i=1,\ldots,n,$ satisfies the following linear regression model:
\begin{eqnarray}
y_{i}=\mu_{i}+\varepsilon_{i},\ i=1,\ldots,n.
\end{eqnarray}
where $\mu_{i}=\boldsymbol{x}_{i}^{T}\bbeta$, $\varepsilon_{i}$ is the model error that satisfies $\mathrm{P}(\varepsilon_{i}\leq b_{k}|\boldsymbol{x}_{i})=\tau_{k}$ and
$b_{k}$ is the $100\tau_{k}$ \% quantile of $\varepsilon$, $k = 1,\cdots ,K$, where $K$ is the number of quantiles. As in \cite{zou2008quantile}, the CQR estimators of $\bbeta$ and $b_{k}$, $k = 1,\cdots ,K$ can be given by solving
\begin{eqnarray}
\label{eq2.2}
(\hat{b}_{1},\cdots,\hat{b}_{K},\hat{\bbeta})=\operatorname*{\argmin\limits}_{b_{1},\cdots,b_{K},\bbeta}
\sum_{k=1}^{K}\left\{\sum_{i=1}^{n}\rho_{\tau_{k}}(y_{i}-b_{k}-\boldsymbol{x}_{i}^{T}\bbeta)\right\},
\end{eqnarray}
where $\rho_{\tau}(\varepsilon)=\varepsilon[\tau-\boldsymbol{1}\{\varepsilon\leq0\}]$ and $\boldsymbol{1}\{\cdot\}$ denotes the usual indicator function.
Note that we consider multiple quantile regression models but the
coeficients are the same across different quantiles.

\subsection{Model averaging estimator}
Write the $m^{th}$ candidate model as
 \begin{eqnarray}
y_{i}=\mu_{i(m)}+\varepsilon_{i(m)}, \ i=1,\ldots,n,
\end{eqnarray}
where $\mu_{i(m)}=\boldsymbol{x}_{i(m)}^{T}\bbeta_{(m)}=\sum_{j=1}^{\tilde{k}_{m}}\theta_{j(m)}x_{ij(m)}$, $\tilde{k}_{m}$ denotes the number of covariates which are the same across different quantile regression models, $\bbeta_{(m)}=(\beta_{1(m)},\ldots,\beta_{\tilde{k}_{m}(m)})^{T}, \boldsymbol{x}_{i(m)}=(x_{i1(m)},\ldots,x_{i\tilde{k}_{m}(m)})^{T}$, $x_{ij(m)}$ is a covariate and $\theta_{j(m)}$ is the corresponding coefficients, $j=1,\ldots,\tilde{k}_{m}$, and
$\varepsilon_{i(m)}=y_{i}-\sum_{j=1}^{\tilde{k}_{m}}\beta_{j(m)}x_{ij(m)}$ satisfies $\mathrm{P}(\varepsilon_{i(m)}\leq b_{k(m)}|\boldsymbol{x}_{i(m)})=\tau_{k}$,
where $b_{k(m)}$ is the $100\tau_{k}$ \% quantile of the error $\varepsilon_{(m)}$ under the $m^{th}$ model, $k = 1,\cdots ,K$.
The CQR estimators of $\bbeta_{(m)}$ and $b_{k(m)}$, $k = 1,\cdots ,K$ in the above model can be given by solving
\begin{eqnarray}
(\hat{b}_{1(m)},\cdots,\hat{b}_{K(m)},\hat{\bbeta}_{(m)})
&=&\operatorname*{\argmin\limits}_{b_{1(m)},\cdots,b_{K(m)},\bbeta_{(m)}}Q_{n(m)}\left\{\left(b_{1},\cdots,b_{K},\bbeta^{T}\right)^{T}\right\}\non\\
&=&\operatorname*{\argmin\limits}_{b_{1(m)},\cdots,b_{K(m)},\bbeta_{(m)}}
\sum_{k=1}^{K}\left\{\sum_{i=1}^{n}\rho_{\tau_{k}}(y_{i}-b_{k(m)}-\boldsymbol{x}_{i(m)}^{T}\bbeta_{(m)})\right\}.
\end{eqnarray}
Let $\boldsymbol{w}\equiv (w_{1},\ldots,w_{M})^{T}$ be a  weight vector in the unit simplex of $\mathbb{R}^{M}$ and $\mathcal{W}\equiv\{\boldsymbol{w}\in[0,1]^{M}: \sum_{m=1}^{M}w_{m}=1\}$. Let $\bmu=\left(\mu_{1},\cdots,\mu_{n}\right)^{T}$. The model averaging estimator of
 $(b_{1},\cdots,b_{K},\bmu^{T})^{T}$ is
\begin{eqnarray}\label{eq:avg2}
&&\left\{\hat{b}_{1}(\boldsymbol{w}),\cdots,\hat{b}_{K}(\boldsymbol{w}),\hat{\bmu}(\boldsymbol{w})^{T}\right\}^{T}\\\non
&&=\left(\sum_{m=1}^{M}w_{m}\hat{b}_{1,m},\cdots,\sum_{m=1}^{M}w_{m}\hat{b}_{K,m},\sum_{m=1}^{M}w_{m}\hat{\bmu}_{i(m)}^{T}\right)^{T},
\end{eqnarray}
where $\hat{\bmu}_{i(m)}=\left(\hat{\mu}_{1(m)},\cdots,\hat{\mu}_{n(m)}\right)^{T}=\left(\boldsymbol{x}_{1(m)}^{T}\hat{\bbeta}_{(m)},\cdots,\boldsymbol{x}_{n(m)}^{T}\hat{\bbeta}_{(m)}\right)^{T}$.
\section{Weight selection and asymptotic property}\label{sec:mainresults}
\subsection{Leave-one-out CV criterion}
We propose selecting $\boldsymbol{w}$ by the jackknife or leave-one-out CV criterion described as follows. For $m=1,\ldots,M$, let $\left(\hat{b}_{1i(m)},\cdots,\hat{b}_{Ki(m)},\hat{\bbeta}_{i(m)}^{T}\right)^{T}$ be the jackknife estimator of $(b_{1},\cdots,b_{K},\bbeta_{(m)}^{T})^{T}$ in model $m$ with the $i^{th}$ observation deleted. Consider the leave-one-out CV criterion
\begin{eqnarray}
\label{cvw}
CV_{n}(\boldsymbol{w})=\frac{1}{nK}\sum_{i=1}^{n}\left\{\sum_{k=1}^{K}\rho_{\tau_{k}}\left(y_{i}-\sum_{m=1}^{M}w_{m}\hat{b}_{ki(m)}-\sum_{m=1}^{M}w_{m}\boldsymbol{x}_{i(m)}^{T}\hat{\bbeta}_{i(m)}\right)\right\}.
\end{eqnarray}
The jackknife weight vector $\widehat{\boldsymbol{w}}=(\widehat{w}_{1},\ldots,\widehat{w}_{M})^{T}$ is obtained by choosing $\boldsymbol{w}\in\mathcal{W}$ such that
\begin{eqnarray}
\widehat{\boldsymbol{w}}=\operatorname*{\argmin\limits}_{\boldsymbol{w}\in\mathcal{W}}CV_{n}(\boldsymbol{w}).
\end{eqnarray}
The calculation of the weight vector $\widehat{\boldsymbol{w}}$ is a straightforward linear programming problem in quantile regression literature.
Substituting $\hat{\boldsymbol{w}}$ for $\boldsymbol{w}$ in (\ref{eq:avg2}) results in the following jackknife model averaging (JMA) estimator of $(b_{1},\cdots,b_{K},\bmu^{T})^{T}$ for CQR:
\begin{eqnarray}
&&\left\{\hat{b}_{1}(\boldsymbol{\hat{w}}),\cdots,\hat{b}_{K}(\boldsymbol{\hat{w}}),\hat{\bmu}(\boldsymbol{\hat{w}})^{T}\right\}^{T}\\\non
&&=\left(\sum_{m=1}^{M}\hat{w}_{m}\hat{b}_{1,m},\cdots,\sum_{m=1}^{M}\hat{w}_{m}\hat{b}_{K,m},\sum_{m=1}^{M}\hat{w}_{m}\hat{\bmu}_{i(m)}^{T}\right)^{T},
\end{eqnarray}
\subsection{Asymptotic property of estimator}
This section is devoted to an investigation of the theoretical properties of the JMA. Specifically, we show that jackknife weight vector $\widehat{\boldsymbol{w}}$ is asymptotically optimal in the sense of minimizing the following out-of-sample composite quantile prediction error (CPE):
\begin{eqnarray}
\mathrm{CPE}_{n}(\boldsymbol{w})
=\mathrm{E}\left\{\frac{1}{K}\sum_{k=1}^{K}\rho_{\tau_{k}}\left(y-\sum_{m=1}^{M}w_{m}\hat{b}_{k,m}-\sum_{m=1}^{M}w_{m}\boldsymbol{x}_{m}^{T}\hat{\bbeta}_{(m)}\right)\Big|\mathcal{D}_{n}\right\},
\end{eqnarray}
where $(y,\boldsymbol{x})$ is an independent copy of $(y_{i},\boldsymbol{x}_{i})$, $\boldsymbol{x}_{m}=\left(x_{1(m)},\ldots,x_{\tilde{k}_{m}(m)}\right)^{T}$ and $\mathcal{D}_{n}=\left\{(\boldsymbol{x}_{i},y_{i}),i=1,\ldots,n\right\},$ $x_{1(m)},\ldots,x_{\tilde{k}_{m}(m)}$ are variables in $\boldsymbol{x}$ that correspond to the $\tilde{k}_{m}$ regressors in the $m^{th}$ model.

Following the notations of \cite{lu2015jackknife}, let $f(\cdot\mid \boldsymbol{x}_{i})$ and $F(\cdot\mid \boldsymbol{x}_{i})$ denote the conditional probability density function (PDF) and cumulative distribution function (CDF) of $\varepsilon_{i}$ given $\boldsymbol{x}_{i}$ respectively, and $f_{y\mid \boldsymbol{x}}(\cdot\mid \boldsymbol{x}_{i})$ the conditional PDF of $y_{i}$ given $\boldsymbol{x}_{i}$. Consider the following pseudo-true parameter
\begin{eqnarray} \label{problem:2}
(b_{1(m)}^{*},\cdots,b_{K(m)}^{*},\bbeta_{(m)}^{*T})^{T}=\operatorname*{\argmin\limits}_{b_{1(m)},\cdots,b_{K(m)},\bbeta_{(m)}}
E\left\{\sum_{k=1}^{K}\rho_{\tau_{k}}\left(y_{i}-b_{k(m)}-\boldsymbol{x}_{i(m)}^{T}\bbeta_{(m)}\right)\right\}.
\end{eqnarray}
Then $$u_{ki(m)}^{b}=\mu_{i}+b_{k}-\left(1,\boldsymbol{x}_{i(m)}^{T}\right)\left(b_{k(m)}^{*},\bbeta_{(m)}^{*T}\right)^{T}$$
is the approximation bias for the $m^{th}$ candidate model when the quantile is $\tau_{k}$.

\begin{thm}\label{th3.1}
Suppose Assumptions \ref{a1}-\ref{a3} hold and $MK^{2}\bar{k}^{2}\log n/n\rightarrow0$. Then $\widehat{\boldsymbol{w}}$ is asymptotically optimal in the sense that
\begin{eqnarray}
\frac{\mathrm{CPE}(\widehat{\boldsymbol{w}})}{\inf\limits_{\boldsymbol{w}\in\mathcal{W}}\mathrm{CPE}(\boldsymbol{w})}=1+o_{p}(1).
\end{eqnarray}

\end{thm}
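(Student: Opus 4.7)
The plan is to follow the asymptotic-optimality template that is now standard in the jackknife-model-averaging literature, adapted to the composite loss $(1/K)\sum_{k}\rho_{\tau_{k}}$ in which both $M$ and $K$ may diverge. Writing $\xi_{n}\equiv\inf_{\boldsymbol{w}\in\mathcal{W}}\mathrm{CPE}_{n}(\boldsymbol{w})$, the target is a uniform expansion
\bsq
\sup_{\boldsymbol{w}\in\mathcal{W}}\bigl|CV_{n}(\boldsymbol{w})-\mathrm{CPE}_{n}(\boldsymbol{w})-C_{n}\bigr|=o_{p}(\xi_{n}),
\esq
with $C_{n}$ a random constant independent of $\boldsymbol{w}$. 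Given this, the standard chain $\mathrm{CPE}_{n}(\widehat{\boldsymbol{w}})\le CV_{n}(\widehat{\boldsymbol{w}})-C_{n}+o_{p}(\xi_{n})\le CV_{n}(\boldsymbol{w}^{*})-C_{n}+o_{p}(\xi_{n})\le \mathrm{CPE}_{n}(\boldsymbol{w}^{*})+o_{p}(\xi_{n})=\xi_{n}+o_{p}(\xi_{n})$, for any minimiser $\boldsymbol{w}^{*}$ of $\mathrm{CPE}_{n}$, followed by dividing through by $\xi_{n}$, delivers the conclusion.

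The first substantive step is to trade the leave-one-out estimators for the full-sample CQR estimators. Using the subgradient characterisation of the composite check loss together with a Bahadur-type linearisation at the pseudo-true parameter in (\ref{problem:2})---valid under the conditional-density smoothness and identification/compactness conditions in Assumptions \ref{a1}--\ref{a3}---one can show that $\max_{i,m}\|\hat{\bbeta}_{i(m)}-\hat{\bbeta}_{(m)}\|$ and $\max_{i,k,m}|\hat{b}_{ki(m)}-\hat{b}_{k(m)}|$ are of order $O_{p}(n^{-1}\sqrt{\bar{k}\log n})$, by combining the usual $n^{-1}$ jackknife perturbation with a union bound over the $nMK$ indices. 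Substituting these into $CV_{n}(\boldsymbol{w})$ perturbs the argument of each $\rho_{\tau_{k}}$ by a deterministic rate; the growth condition $MK^{2}\bar{k}^{2}\log n/n\to 0$ is calibrated precisely so that the cumulative effect is $o_{p}(\xi_{n})$.

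Next I would expand the check loss by Knight's identity,
\bsq
\rho_{\tau}(u-v)-\rho_{\tau}(u)=-v\bigl\{\tau-\boldsymbol{1}\{u\le 0\}\bigr\}+\int_{0}^{v}\bigl\{\boldsymbol{1}\{u\le s\}-\boldsymbol{1}\{u\le 0\}\bigr\}ds,
\esq
applied with $u=y_{i}-b_{k}-\mu_{i}$ and $v$ equal to the perturbation of the averaged linear predictor away from its pseudo-true value. The linear piece is a centred empirical score that is linear in $\boldsymbol{w}$; it decomposes into a $\boldsymbol{w}$-free component (absorbed into $C_{n}$) plus a stochastic remainder controlled by the Lindeberg--Feller CLT together with the simplex bound $|\sum_{m}w_{m}a_{m}|\le\max_{m}|a_{m}|$ applied to the score vectors. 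The integral remainder, after conditional expectation and a first-order Taylor expansion of $F(\cdot\mid\boldsymbol{x})$, produces the quadratic-in-bias leading term of $\mathrm{CPE}_{n}(\boldsymbol{w})$ built from the approximation biases $u_{ki(m)}^{b}$, while its empirical fluctuation is handled by a uniform concentration bound over the class of half-spaces indexed by $(\boldsymbol{w},k)$.

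The hard part will be precisely this uniform control when both $M$ and $K$ grow. Because the argument of $\rho_{\tau_{k}}$ is $\sum_{m}w_{m}\hat{b}_{ki(m)}+\sum_{m}w_{m}\boldsymbol{x}_{i(m)}^{T}\hat{\bbeta}_{i(m)}$, the per-model expansions do not simply add, and the indicator inside Knight's integral is not differentiable in $\boldsymbol{w}$, ruling out a naive Taylor bound. I anticipate that the main technical effort is bounding the bracketing entropy of the induced class of half-spaces---of dimension roughly $\bar{k}+1$---uniformly across the $MK$ pairs $(m,k)$, and verifying that the resulting maximal inequality together with the Lindeberg condition for the composite linear score are $o_{p}(\xi_{n})$ exactly under $MK^{2}\bar{k}^{2}\log n/n\to 0$. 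This is the growth condition appearing in the theorem's hypothesis, and I expect it to surface only at this uniform step, while all earlier reductions go through under weaker rates.
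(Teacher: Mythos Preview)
Your high-level architecture---Knight's identity, uniform control over $\mathcal{W}$ via a covering/bracketing argument, and a union bound over $(m,k)$---matches the paper's. But two points deserve correction.

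First, you are working harder than necessary because you treat $\xi_{n}=\inf_{\boldsymbol{w}}\mathrm{CPE}_{n}(\boldsymbol{w})$ as a potentially vanishing rate requiring $o_{p}(\xi_{n})$ tracking. In this setting $\mathrm{CPE}_{n}$ is the \emph{raw} out-of-sample check loss, not an excess risk. The paper's first step (their item (i)) shows $\min_{\boldsymbol{w}}\mathrm{CPE}_{n}(\boldsymbol{w})\ge \sum_{k}\mathrm{E}\rho_{\tau_{k}}(\varepsilon-b_{k})-o_{p}(1)$, which is bounded away from zero. Consequently the target reduces to $\sup_{\boldsymbol{w}}\bigl|CV_{n}(\boldsymbol{w})-\mathrm{CPE}_{n}(\boldsymbol{w})\bigr|=o_{p}(1)$, and every remainder only needs to be $o_{p}(1)$. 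The growth condition $MK^{2}\bar{k}^{2}\log n/n\to 0$ enters not to beat a shrinking $\xi_{n}$ but simply to force each of the five Knight-identity pieces $CV_{1n},\ldots,CV_{5n}$ to be $o_{p}(1)$ uniformly.

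Second, your proposed first reduction---replacing the leave-one-out estimators by the full-sample ones via a uniform bound $\max_{i,m}\|\hat{\bbeta}_{i(m)}-\hat{\bbeta}_{(m)}\|=O_{p}(n^{-1}\sqrt{\bar{k}\log n})$---is not how the paper proceeds, and for the non-smooth check loss this sharp $n^{-1}$ jackknife perturbation rate is not readily available (there is no closed-form leave-one-out update as in least squares). The paper instead anchors \emph{both} $(\hat{b}_{ki(m)},\hat{\bbeta}_{i(m)})$ and $(\hat{b}_{k(m)},\hat{\bbeta}_{(m)})$ at the pseudo-true parameter $(b_{k(m)}^{*},\bbeta_{(m)}^{*})$ from (\ref{problem:2}), establishing in Lemma~\ref{lem:order} the uniform rate $\max_{i,m}\sum_{k}\|\cdot\|^{2}=O_{p}(n^{-1}K\bar{k}\log n)$ for each. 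The jackknife--full-sample discrepancy then appears only in the term $CV_{4n}$, which is bounded by the triangle inequality through the pseudo-true value using this common $O_{p}(\sqrt{n^{-1}K\bar{k}\log n})$ rate. This is more elementary than the Bahadur linearisation you propose and avoids the delicate uniform-in-$i$ control of the leave-one-out perturbation.
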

\begin{proof}
See \ref{sec:th3.1}.
\end{proof}

\begin{remark}
Theorem \ref{th3.1} implies that the out-of-sample composite final prediction error
produced by the FMA estimator using jackknife weight vector is asymptotically equal to the prediction error of the FMA estimator using the infeasible optimal weight vector. Similar conclusion as in \cite{lu2015jackknife}, who considered one quantile.
\end{remark}
\section{Simulation studies}\label{sec:simu}
In this section, we conduct some simulations to study the finite sample performance of the proposed method (labeled MCV$_{c}$).
\subsection{Methods for comparison}
We consider jackknife model averaging (JMA) method proposed in \cite{lu2015jackknife} (labeled MCV$_{0}$) for comparison. For each $\tau_{k}$, $k=1,\cdots,K$, they provided the jackknife model averaging estimator. The estimator uses leave-one-out cross-validation to choose the weight $\boldsymbol{w}_{k}=(w_{1,k},\cdots,w_{M,k})^{T}$, then applying the estimator to predict. In detail, let
\be
\left(\tilde{b}_{k(m)},\hat{\bbeta}_{k(m)}^{T}\right)^{T}
&=&\operatorname*{\argmin\limits}_{b_{k(m)},\bbeta_{(m)}}
\left\{\sum_{i=1}^{n}\rho_{\tau_{k}}\left(y_{i}-b_{k(m)}-\boldsymbol{x}_{i(m)}^{T}\bbeta_{(m)}\right)\right\},
\ee
and the $\tau_{k}$ QR estimator of $u_{i(m)}+b_{k(m)}$ is
$\tilde{b}_{k(m)}+\boldsymbol{x}_{i(m)}^{T}\hat{\bbeta}_{k(m)}$.
The model averaging estimator for $u_{i}+b_{k}$ is
\be\label{eq:JMA}
\sum_{m=1}^{M}w_{m,k}\left(\tilde{b}_{k(m)}+\boldsymbol{x}_{i(m)}^{T}\hat{\bbeta}_{k(m)}\right)
\ee

For $\tau_{k}$, the leave-one-out cross-validation criterion function is
$$CV_{n,k}(\boldsymbol{w}_{k})=\frac{1}{n}\sum_{i=1}^{n}\rho_{\tau_{k}}\left\{y_{i}-\sum_{m=1}^{M}w_{m,k}\left(\tilde{b}_{k(m)}+\boldsymbol{x}_{i(m)}^{T}\hat{\bbeta}_{k(m)}\right)\right\}.$$
$\hat{\boldsymbol{w}}_{k}=(\hat{w}_{1,k},\ldots,\hat{w}_{M,k})^{T}$ is obtained by choosing $\boldsymbol{w}_{k}\in\mathcal{W}$ such that
\begin{eqnarray}
\hat{\boldsymbol{w}}_{k}=\operatorname*{\argmin\limits}_{\boldsymbol{w}_{k}\in\mathcal{W}}CV_{n,k}(\boldsymbol{w}_{k}).
\end{eqnarray}
Substituting $\hat{\boldsymbol{w}}_{k}$ for $\boldsymbol{w}$ in (\ref{eq:JMA}) results in the following $\tau_{k}$ JMA estimator of $u_{i}+b_{k}$:
$$
\sum_{m=1}^{M}\hat{w}_{m,k}\left(\tilde{b}_{k(m)}+\boldsymbol{x}_{i(m)}^{T}\hat{\bbeta}_{k(m)}\right).
$$

We also consider the estimator based on the model which has the minimum
$$CV_{c,m}=\frac{1}{nK}\sum_{i=1}^{nK}\left\{\sum_{k=1}^{K}\rho_{\tau_{k}}\left(y_{i}-\hat{b}_{ki(m)}-\sum_{m=1}^{M}w_{m}\boldsymbol{x}_{i(m)}^{T}\hat{\bbeta}_{i(m)}\right)\right\}.
$$
We label it CV$_{c}$.

Comparisons are also drawn with the model selection and averaging estimators based on the Akaike information criterion (AIC) and the Schwarz information criterion (SIC). From \cite{Liu1994Quantile} and \cite{Zou2008Regularized}, they are defined as
 $$\mathrm{AIC}_{c,m}=2nK\log\left\{\frac{1}{nK}\sum_{i=1}^{n}\sum_{k=1}^{K}\rho_{\tau_{k}}\left(y_{i}-\hat{b}_{ki(m)}-\boldsymbol{x}_{i(m)}^{T}\hat{\bbeta}_{i(m)}\right)\right\}
+2(\tilde{k}_{m}+K),\ and$$
$$\mathrm{SIC}_{c,m}=2nK\log\left\{\frac{1}{nK}\sum_{i=1}^{n}\sum_{k=1}^{K}\rho_{\tau_{k}}\left(y_{i}-\hat{b}_{ki(m)}-\boldsymbol{x}_{i(m)}^{T}\hat{\bbeta}_{i(m)}\right)\right\}
+(\tilde{k}_{m}+K)\log(nK).$$
These criteria are same as the conventional criteria except the error variance is estimated based on the check loss function.
Based on these criteria, we define the following composite smoothed AIC and the composite smoothed SIC weights for model $m$:
$$\hat{w}_{m}^{\mathrm{AIC}_{c}}=\frac{\exp\left(-\frac{1}{2}\mathrm{AIC}_{c,m}\right)}{\sum_{j=1}^{M}\exp\left(-\frac{1}{2}\mathrm{AIC}_{c,j}\right)}
\ \text{and}\ \hat{w}_{m}^{\mathrm{SIC}_{c}}=\frac{\exp\left(-\frac{1}{2}\mathrm{SIC}_{c,m}\right)}{\sum_{j=1}^{M}\exp\left(-\frac{1}{2}\mathrm{SIC}_{c,j}\right)} ,\ \text{respectively}.$$
We label the estimators that result from the model selection and averaging estimators based on the AIC and the SIC the AIC$_{c}$, SIC$_{c}$, SAIC$_{c}$ and SSIC$_{c}$ estimators respectively.

We evaluate the performance of the above methods except MCV$_{0}$ with respect to the following final prediction error measure across $R$ replications:
 $$\mathrm{CPE}=\frac{1}{R}\sum_{r=1}^{R}\mathrm{PE}(r),$$
where
$$\mathrm{PE}(r)=\frac{1}{n_sK}\sum_{s=1}^{n_s}\sum_{k=1}^{K}\rho_{\tau_{k}}\left\{y_{s}-\sum_{m=1}^{M}\hat{w}_{m,k}\left(\tilde{b}_{k(m)}+\boldsymbol{x}_{s(m)}^{T}\hat{\bbeta}_{k(m)}\right)\right\}$$ for the method MCV$_{0}$ and
$$\mathrm{PE}(r)=\frac{1}{n_sK}\sum_{s=1}^{n_s}\sum_{k=1}^{K}\rho_{\tau_{k}}\left\{y_{s}-\sum_{m=1}^{M}\hat{w}_{m}\left(\hat{b}_{k,m}+\boldsymbol{x}_{s(m)}^{T}\hat{\bbeta}_{(m)}\right)\right\}$$
for others
is the prediction error from the $r^{th}$ replication based on the out-of-sample observations $\{\boldsymbol{x}_{s},y_{s}\}_{s=1}^{n_s}$ that vary across the replications and a given averaging/selection method that uses $\hat{w}_{m}$ as the weight for the $m^{th}$ model.  We set $n_s=n$.
\subsection{Simulation Settings}
For comparison, we consider the three similar simulation setups as that in \cite{lu2015jackknife} firstly.

\textbf{Setting 1}
$$y_{i}=\nu\sum_{j=1}^{1000}j^{-1}x_{ij}+\varepsilon_{i},~~~~~~~~~~~~i=1,2,\cdots,n,$$
where $x_{ij}$ ($j=1,2,\cdots$) follow $\mathrm{IID}$ $ N(0,1)$ distribution, $\nu$ is varied so that $R^{2}=\left\{Var(y_{i})-Var(\varepsilon_{i})\right\}/Var(y_{i})= 0.1,0.3,\cdots,0.9$, and $\varepsilon_{i}$ is an error term.  We consider the following situations: (i) $\varepsilon_{i}\sim N(0,1)$ and  (ii) $\varepsilon_{i}=\sum_{j=1}^{6}x_{ij}^{2}\epsilon_{i}$, representing homoscedastic and heteroscedastic errors, respectively, where $\epsilon_{i}\sim N(0,1)$ and is independent of $x_{ij}$. Except the quantiles are taken as $\tau_{1}=0.05$,
$\tau_{2}=0.5$ and $\tau_{3}=0.95$ with $K=3$,
the other values are the same as that in \cite{lu2015jackknife}. That is
the sample size $n=50$, $n=100$, $n=150$, the number of candidate model $M=\lfloor3n^{1/3}\rfloor$ where $\lfloor\cdot\rfloor$ represents the integer part of $\cdot$ and the number of replications $R=200$.
For ease of implementation, we consider nested models and each model contains an intercept term. Furthermore,
the first model contains the first variable $\{x_{i1}\}$, the second model contains the first and second variables $\{x_{i1},x_{i2}\}$, and so on.


\textbf{Setting 2}
$$y_{i}=\nu\sum_{j=1}^{30}j^{-1}x_{ij}+\varepsilon_{i},~~~~~~~~~~~~i=1,2,\cdots,n,$$
where $x_{ij}$ ($j=1,2,\cdots$) are $\mathrm{IID}$ $\chi^{2}(1)$ and the error term $\varepsilon_{i}$ be considered the following situations: (i) $\varepsilon_{i}=\epsilon_{i}$ and  (ii) $\varepsilon_{i}=\sum_{j=1}^{30}j^{-1}x_{ij}\epsilon_{i}$, representing homoscedastic and heteroscedastic errors, respectively, where $\epsilon_{i}$ is normalized $\chi^{2}(3)$ with mean $0$ and variance $1$ and independent of $x_{ij}$. Except we fix $M=20$ for each case,
the other values are same as that in Setting 1.

\textbf{Setting 3}
$$y_{i}=\nu\sum_{j=1}^{25}j^{-1}\Phi(x_{ij})+\varepsilon_{i},~~~~~~~~~~~~i=1,2,\cdots,n,$$
where $x_{ij}$ ($j=1,2,\cdots$) follow $\mathrm{IID}$ $ N(0,1)$ distribution and $\Phi(\cdot)$ is the standard normal distribution function. We implement the following cases for $\varepsilon_{i}$: (i) $\varepsilon_{i}=\epsilon_{i}$ and  (ii) $\varepsilon_{i}=\left(0.01+\sum_{j=1}^{11}x_{ij}^{2}\right)\epsilon_{i}$, representing homoscedastic and heteroscedastic errors, respectively, where $\epsilon_{i}\sim N(0,1)$ and is independent of $x_{ij}$.
The other values are the same as that in Setting 2.

Next, we adopt the data generating processes similar as that in \cite{Zhao2015Empirical}.

\textbf{Setting 4}
$$y_{i}=\nu\cdot\boldsymbol{x}_{i}^{T}\bbeta+\varepsilon_{i},~~~~~~~~~~~~i=1,2,\cdots,n,$$
where $\bbeta=(1,0,0.5,-0.1)^{T}$, each of $x_{ij}$, $j=1,\cdots,4,$ is $\mathrm{IID}$ $N(0,1)$ and mutually independent of each other, $\nu$ is varied so that $R^{2}=\left\{\mathrm{Var}(y_{i})-\mathrm{Var}(\varepsilon_{i})\right\}/\mathrm{Var}(y_{i})= 0.1,0.3,\cdots,0.9$. The response $y_{i}$ is generated according to the model. Here we considered three different error distributions.\\
\textbf{Case} $\boldsymbol{1}$ The error distribution follows the normal distribution
$N(0,0.5)$.\\
\textbf{Case} $\boldsymbol{2}$ The error distribution follows the chi-square distribution with one degree of freedom $\chi_{1}^{2}$.\\
\textbf{Case} $\boldsymbol{3}$ The error distribution follows the mixture of normal distribution $0.5N(0,1)+0.5N(0, 0.5)$.

Furthermore, $8$ candidate models are considered with each containing the first regressor. The quantiles are taken as $\tau_{k}=k/6$ with $K=5$, the sample size is taken as $n=50$, $n=100$ and $n=200$ for each case, we take 200 simulation runs.
\subsection{Simulation results}
Figure \ref{Fig1}-\ref{Fig6} report the $\mathrm{CPEs}$ of the various estimators under \textbf{Setting} $\boldsymbol{1}$-\textbf{Setting} $\boldsymbol{3}$, respectively. The results are expressed in terms of $R^{2}$. Of all homoscedastic settings considered,
MCV$_{c}$ estimator clearly denominates the other six estimators, including commonly used model averaging and model selection estimators. Under the heteroscedastic setting, the situation is somewhat different. Next, we give a detailed description.

Figure \ref{Fig1} reports the CPEs of the various estimators under the homoscedastic error of Setting 1. The results show that MCV$_{c}$ estimators are superior to other estimators in terms of minimizing CPEs. When $R^{2}$ is small to moderate, the CPEs of MCV$_{0}$ estimators less than that of SAIC$_{c}$, AIC$_{c}$ and CV$_{c}$ estimators, however these CPEs are greater than that of SBIC$_{c}$ and BIC$_{c}$ estimators. Differently, when $R^{2}=0.9$, We can't discern the relative merits of SAIC$_{c}$, SBIC$_{c}$, AIC$_{c}$, BIC$_{c}$ and CV$_{c}$ estimators but they all are better than MCV$_{0}$ estimators in terms of minimizing CPEs. In the case of heteroscedasticity presented in Figure \ref{Fig2}, the performance of these estimates is relatively clear. The MCV$_{c}$ estimators are usually the most popular estimator, and the SBIC$_{c}$ and BIC$_{c}$ estimators typically have a smaller CPEs than the other remaining estimators. The final conclusion is that MCV$_{c}$ estimator is the most favored one in Setting 1.

For the results of Setting 2 reported in Figure \ref{Fig3}-\ref{Fig4}, under the homoscedastic error, the performance of each estimators is very similar to that of Setting 1, but that is very different in the case of heteroscedasticity. From Figure \ref{Fig4}, of all cases considered,
the MCV$_{0}$ estimator produces the best estimates.
When the sample size and $R^{2}$ are not large, the performances of MCV$_{c}$ are slightly better than the other remaining estimators, containing SAIC$_{c}$, SBIC$_{c}$, AIC$_{c}$, BIC$_{c}$ and CV$_{c}$ estimators. But when $R^2$ is moderate to large, all estimators except MCV$_{0}$ estimators perform similarly.

For the results of Setting 3 shown in Figure \ref{Fig5}-\ref{Fig6}, under the homoscedastic error, SAIC$_{c}$, AIC$_{c}$ and CV$_{c}$ estimators are the least preferred estimators which enjoy the largest CPEs, while MCV$_{c}$ estimators have the smallest CPEs. When $n=50$,
the performance of MCV$_{0}$ estimators are superior to any of the three model selection composite quantile estimators and the model averaging composite quantile estimators apart from MCV$_{c}$ estimators. However, when $n=100$ and $n=150$, the
CPEs of SBIC$_{c}$ and BIC$_{c}$ estimators are very close and smaller than that of
MCV$_{0}$ estimators when $R^{2}$ is small. When $R^{2}$ is small to moderate, the
CPEs of these three estimators are very close. Under the heteroscedastic error,
the estimators produced by MCV$_{0}$ are worse than the estimators produced by MCV$_{c}$, implying that the composite quantile estimator typically having an advantage over the estimator using one quantile.
For the comparison of MCV$_{c}$ with SAIC$_{c}$, SBIC$_{c}$, AIC$_{c}$, BIC$_{c}$ and CV$_{c}$ estimators, the description of the performances of them are exactly the same as that in setting 2 under the heteroscedastic error.

Table \ref{Fig7}-\ref{Fig9} report the $\mathrm{CPEs}$ of the various estimators under \textbf{Case} $\boldsymbol{1}$-\textbf{Case} $\boldsymbol{3}$ in \textbf{Setting} $\boldsymbol{4}$, respectively. The results are expressed in terms of $R^{2}$.
The results show that the MCV$_{c}$ estimator is most frequently the estimator that enjoys the smallest $\mathrm{CPEs}$.
The MCV$_{0}$ estimator produces the worst estimates, with
the composite quantile estimator typically having an advantage over the estimator using one quantile. Of all cases considered, the model averaging composite quantile estimators are more or less superior to any of the three model selection composite quantile estimators.

\begin{figure}[b]
\centering
\subfigure{
\includegraphics[width=8cm,height=6.8cm]{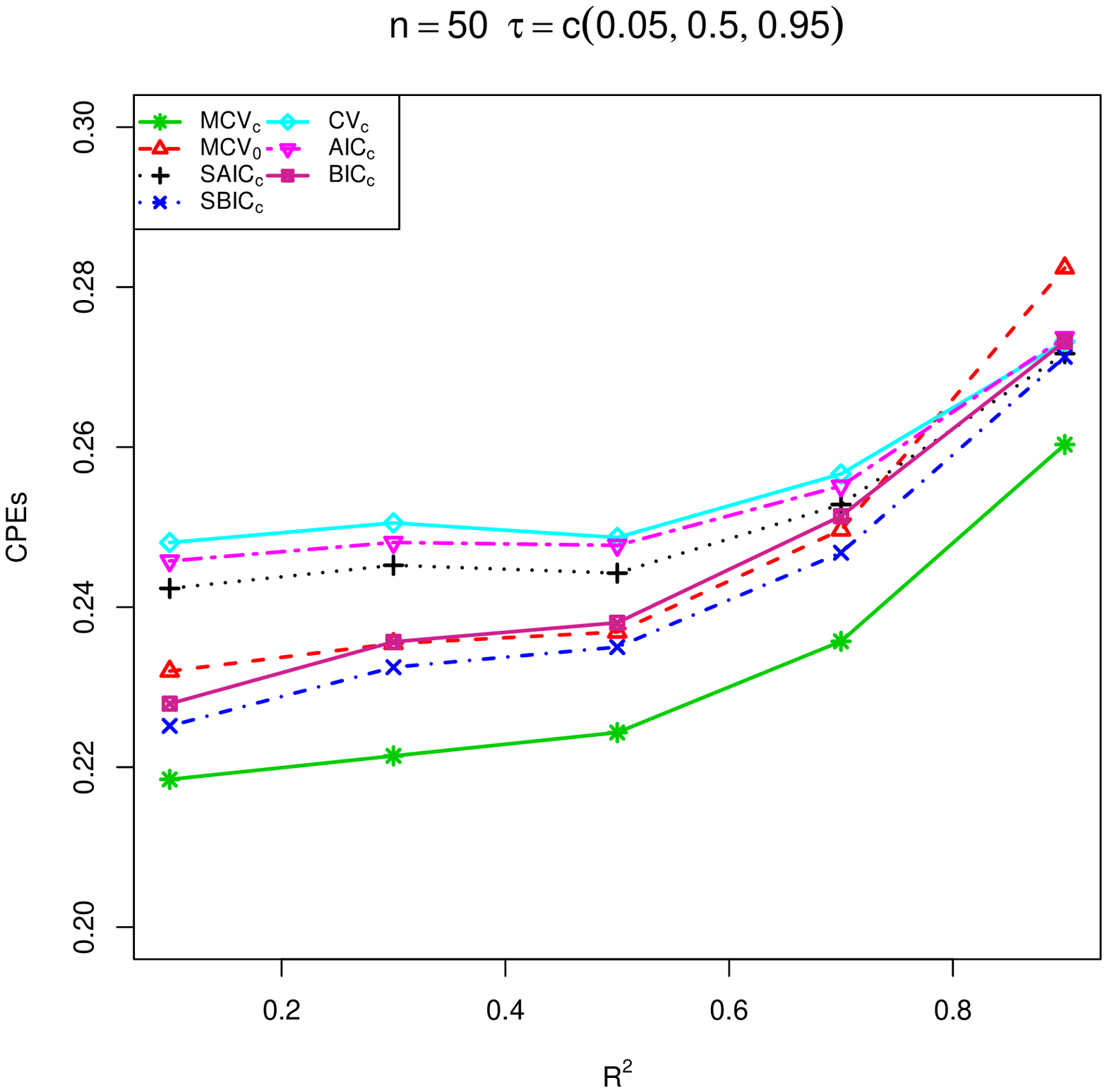}}
\subfigure{
\includegraphics[width=8cm,height=6.8cm]{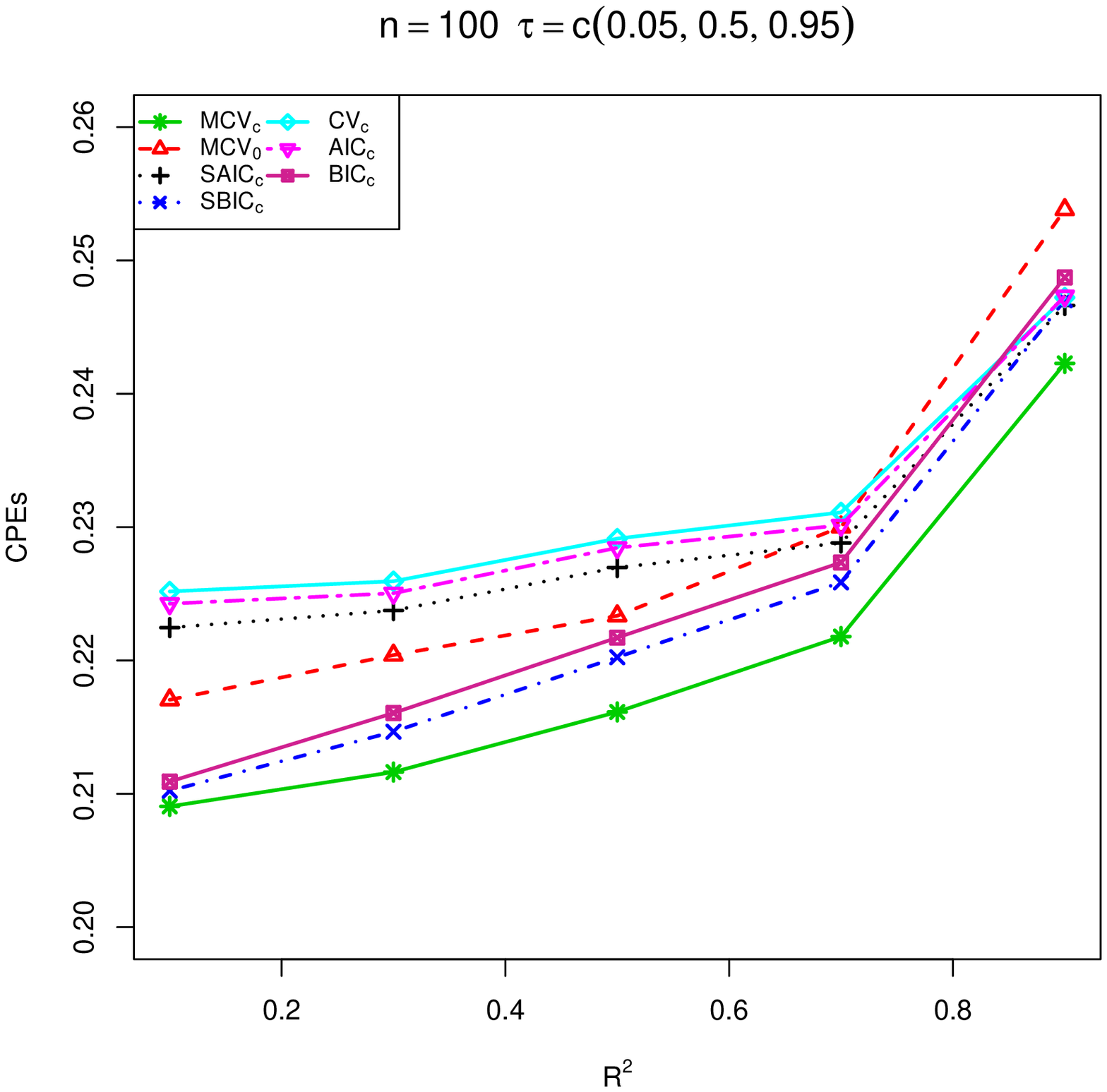}}
\subfigure{
\includegraphics[width=8cm,height=6.8cm]{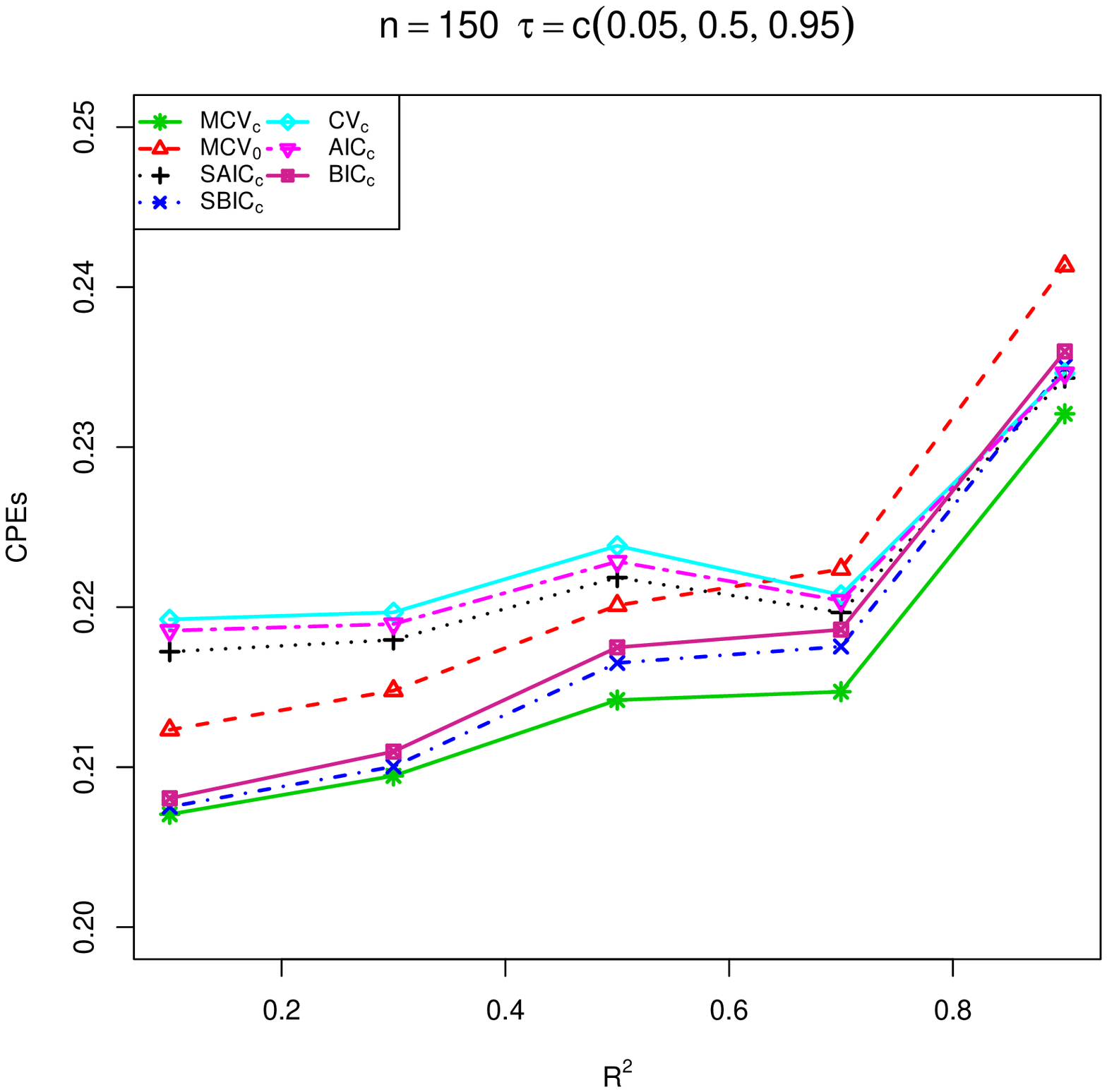}}
\caption{Out-of-sample performance of Setting 1: Homoscedasticity.}
\label{Fig1}
\end{figure}
\begin{figure}[b]
\centering
\subfigure{
\includegraphics[width=8cm,height=6.8cm]{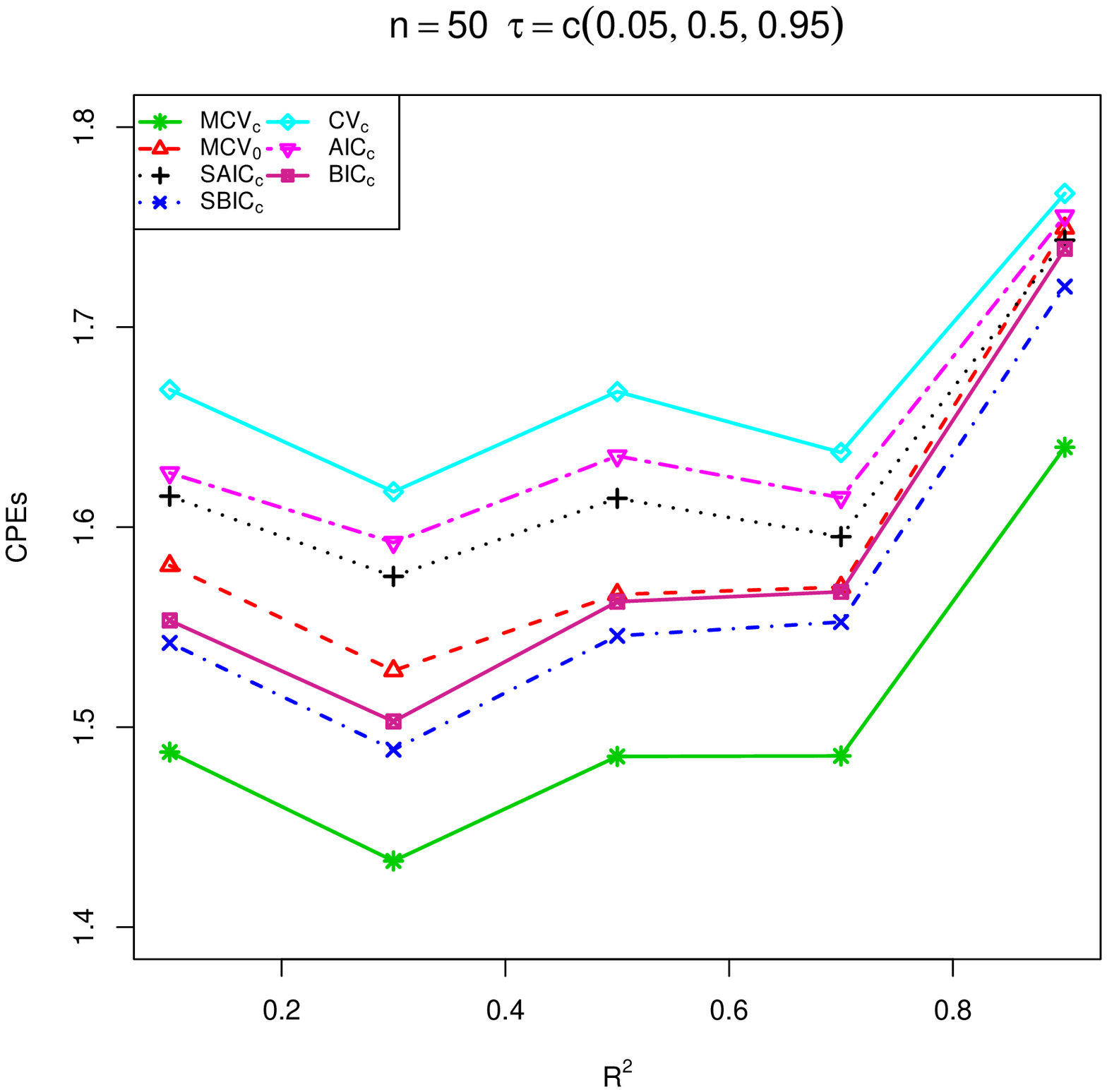}}
\subfigure{
\includegraphics[width=8cm,height=6.8cm]{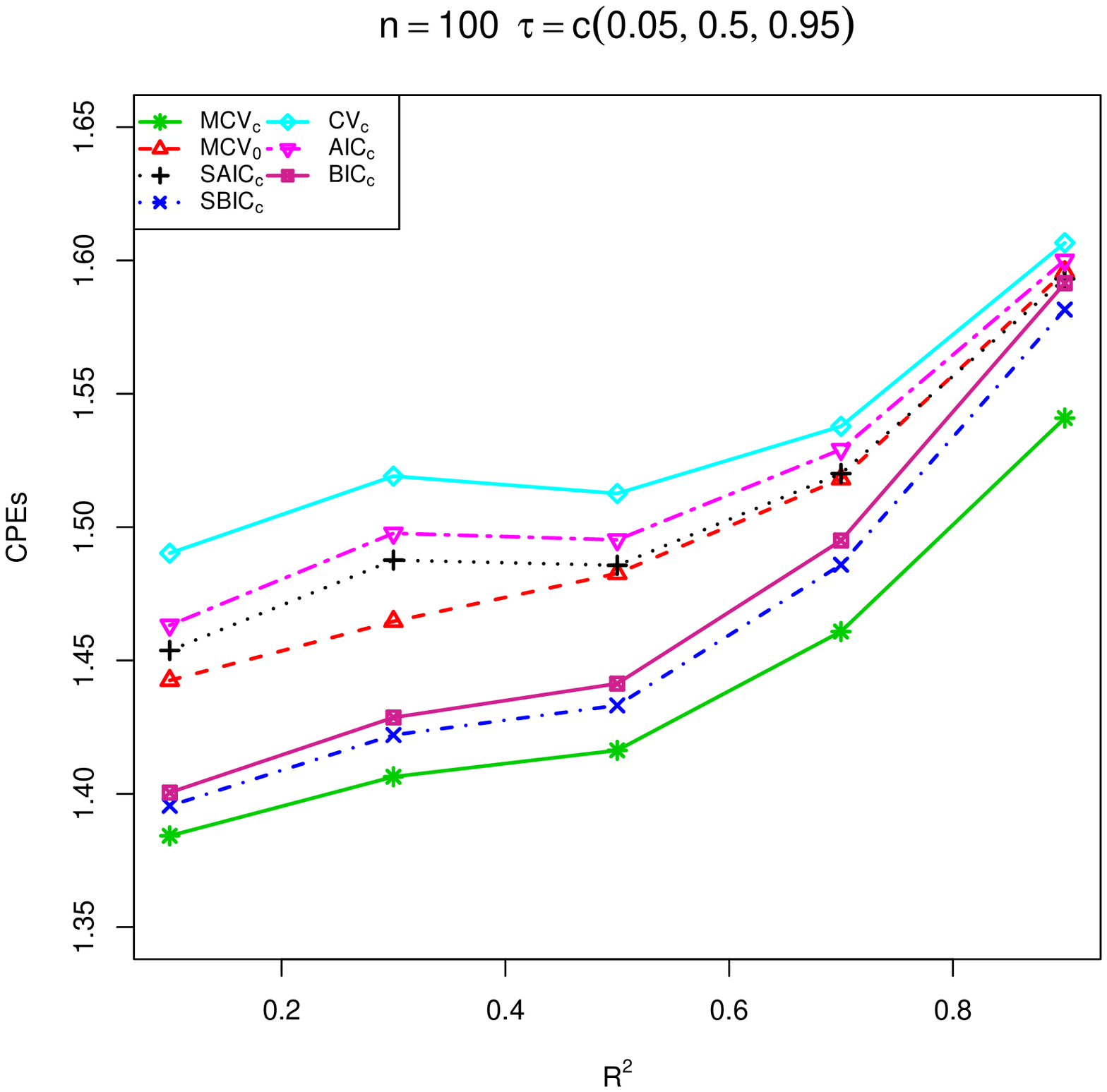}}
\subfigure{
\includegraphics[width=8cm,height=6.8cm]{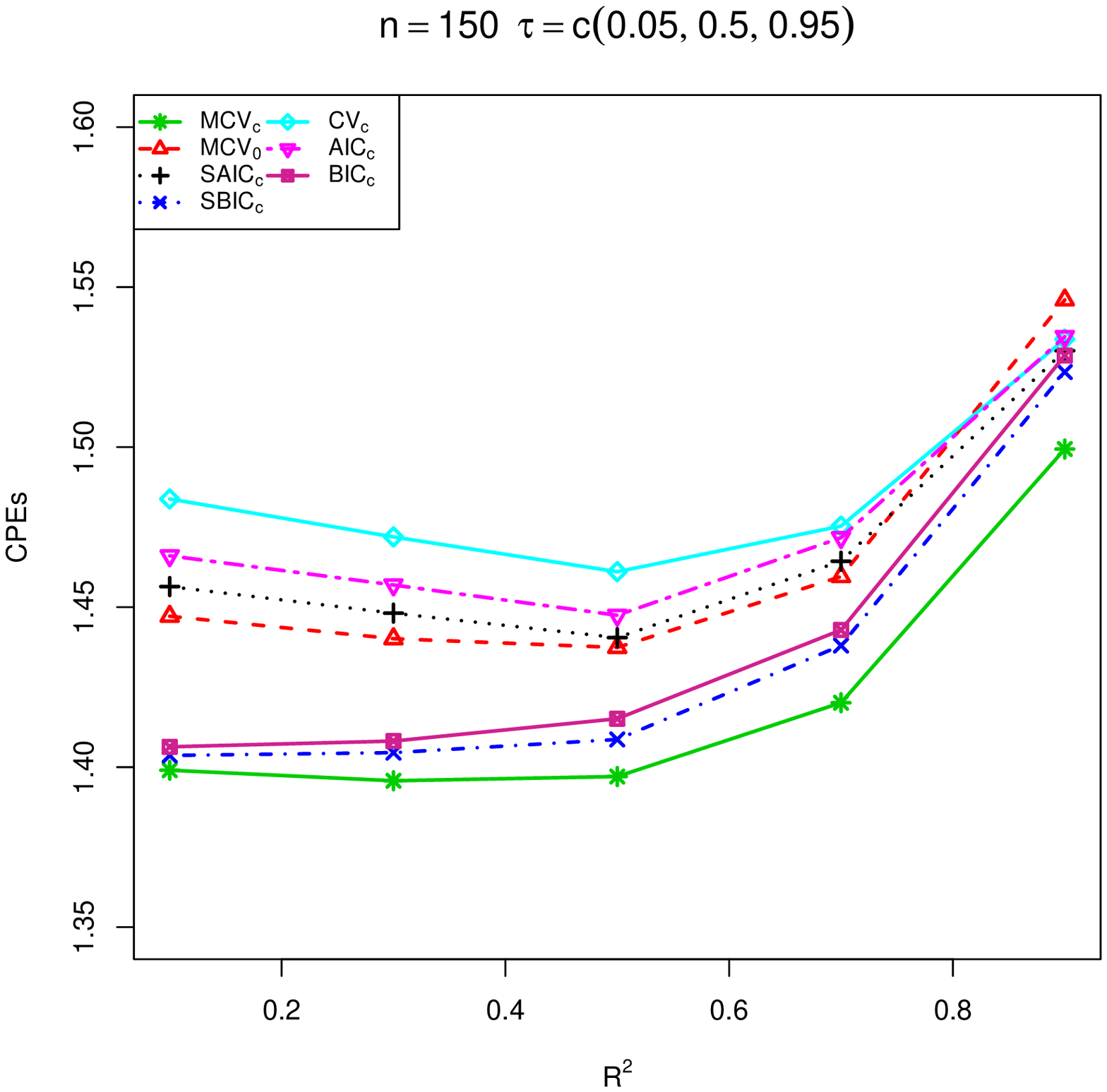}}
\caption{Out-of-sample performance of Setting 1: Heteroscedasticity.}
\label{Fig2}
\end{figure}
\begin{figure}[b]
\centering
\subfigure{
\includegraphics[width=8cm,height=6.8cm]{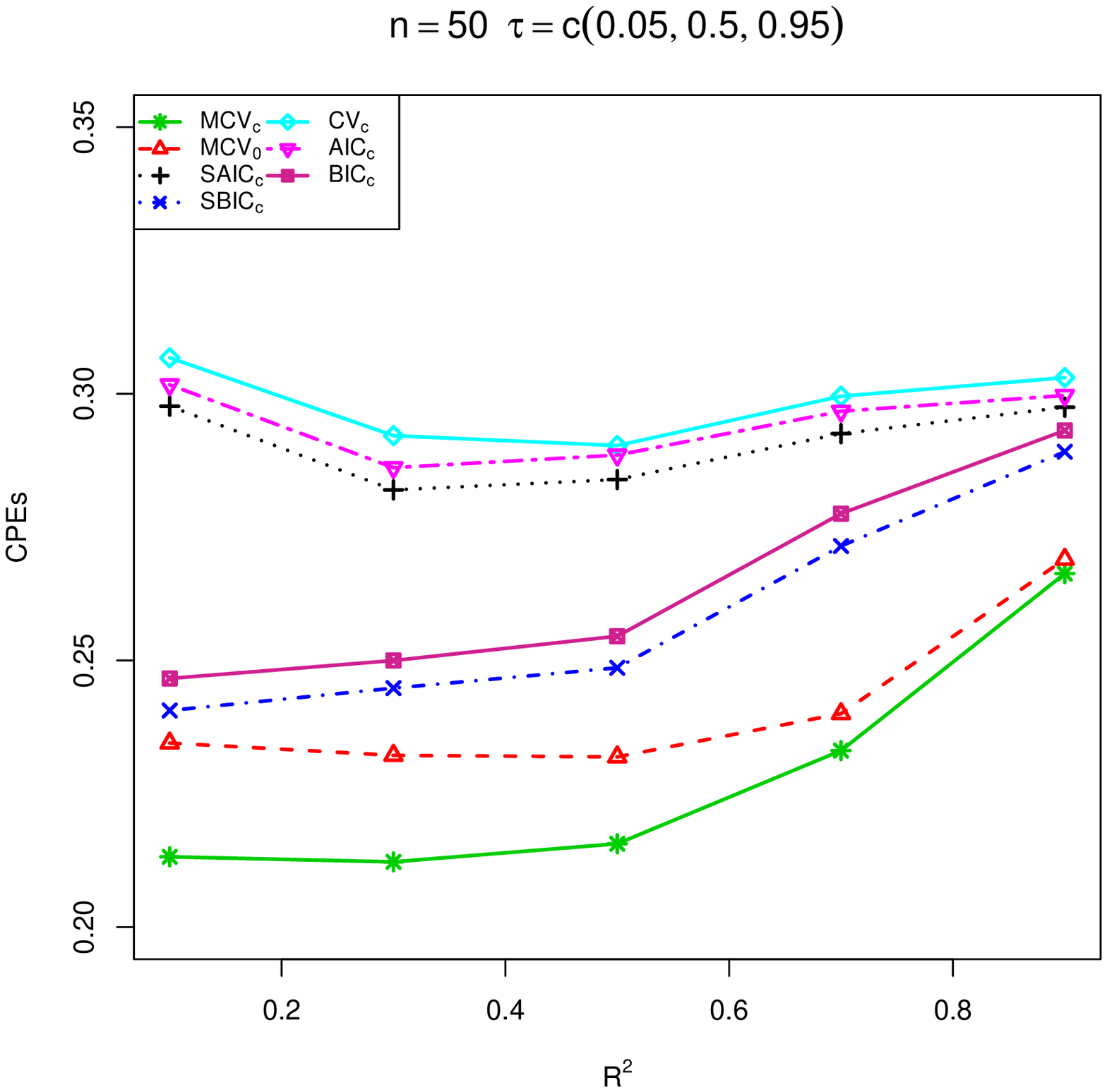}}
\subfigure{
\includegraphics[width=8cm,height=6.8cm]{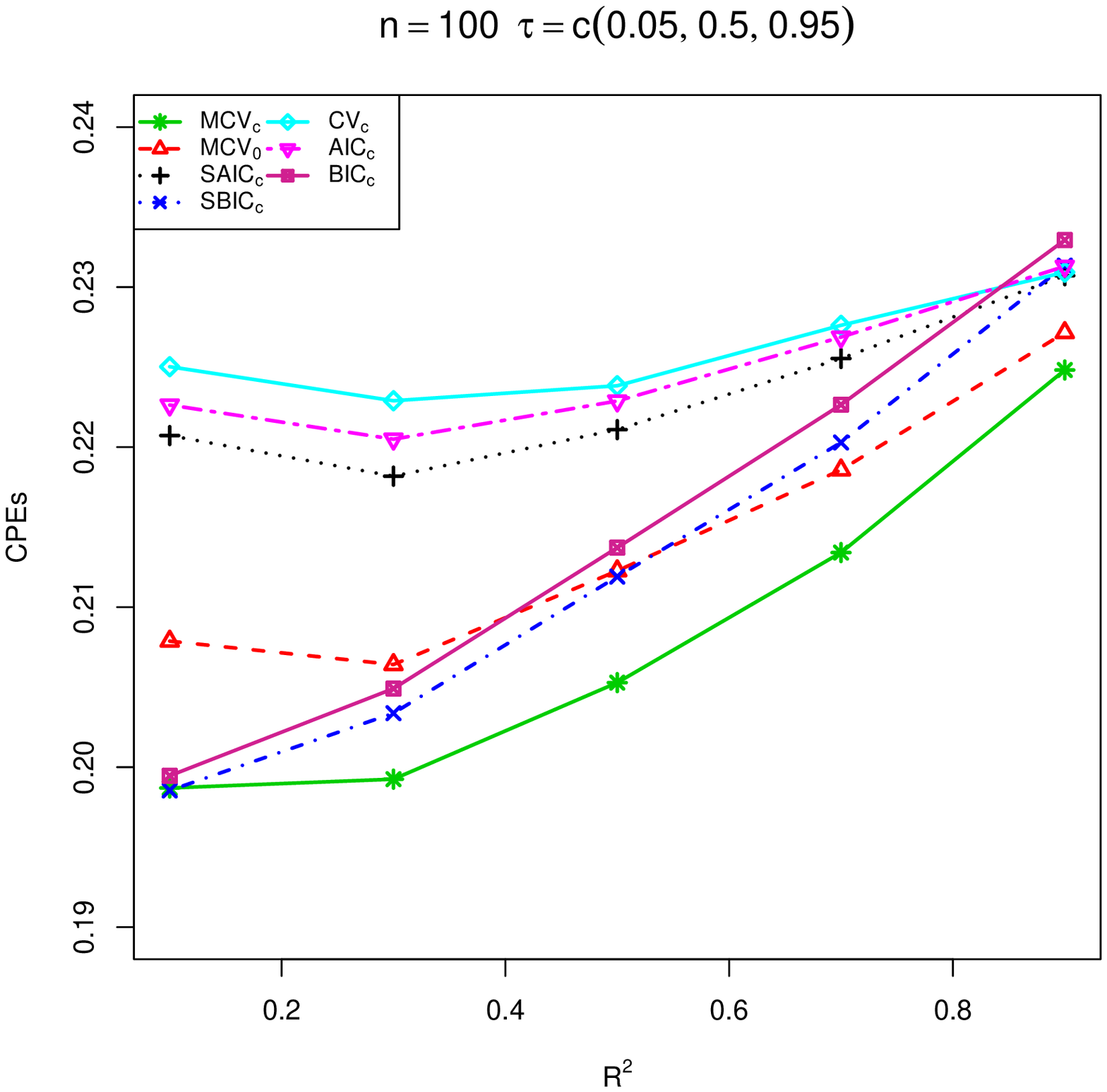}}
\subfigure{
\includegraphics[width=8cm,height=6.8cm]{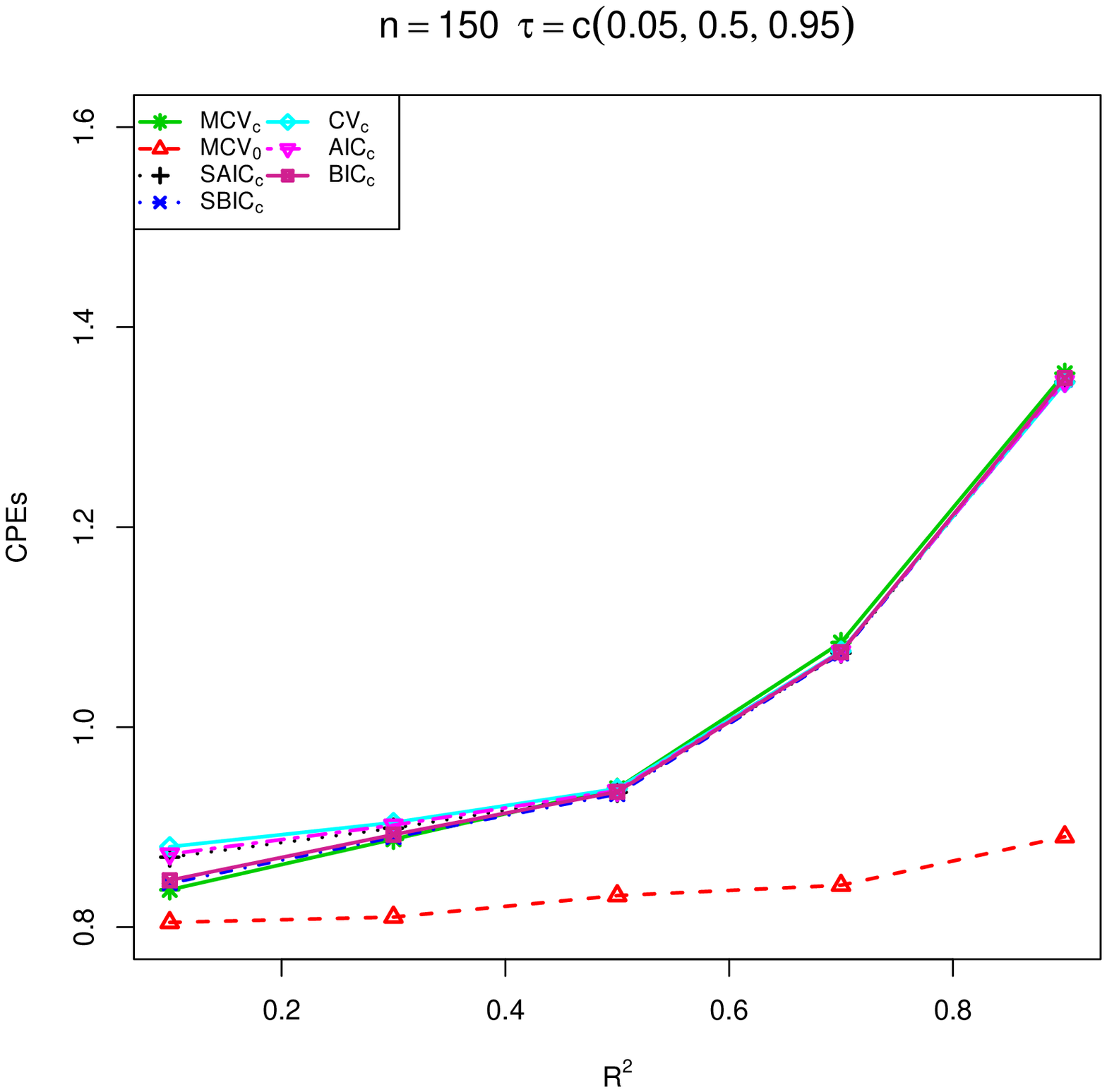}}
\caption{Out-of-sample performance of Setting 2: Homoscedasticity.}
\label{Fig3}
\end{figure}
\begin{figure}[b]
\centering
\subfigure{
\includegraphics[width=8cm,height=6.8cm]{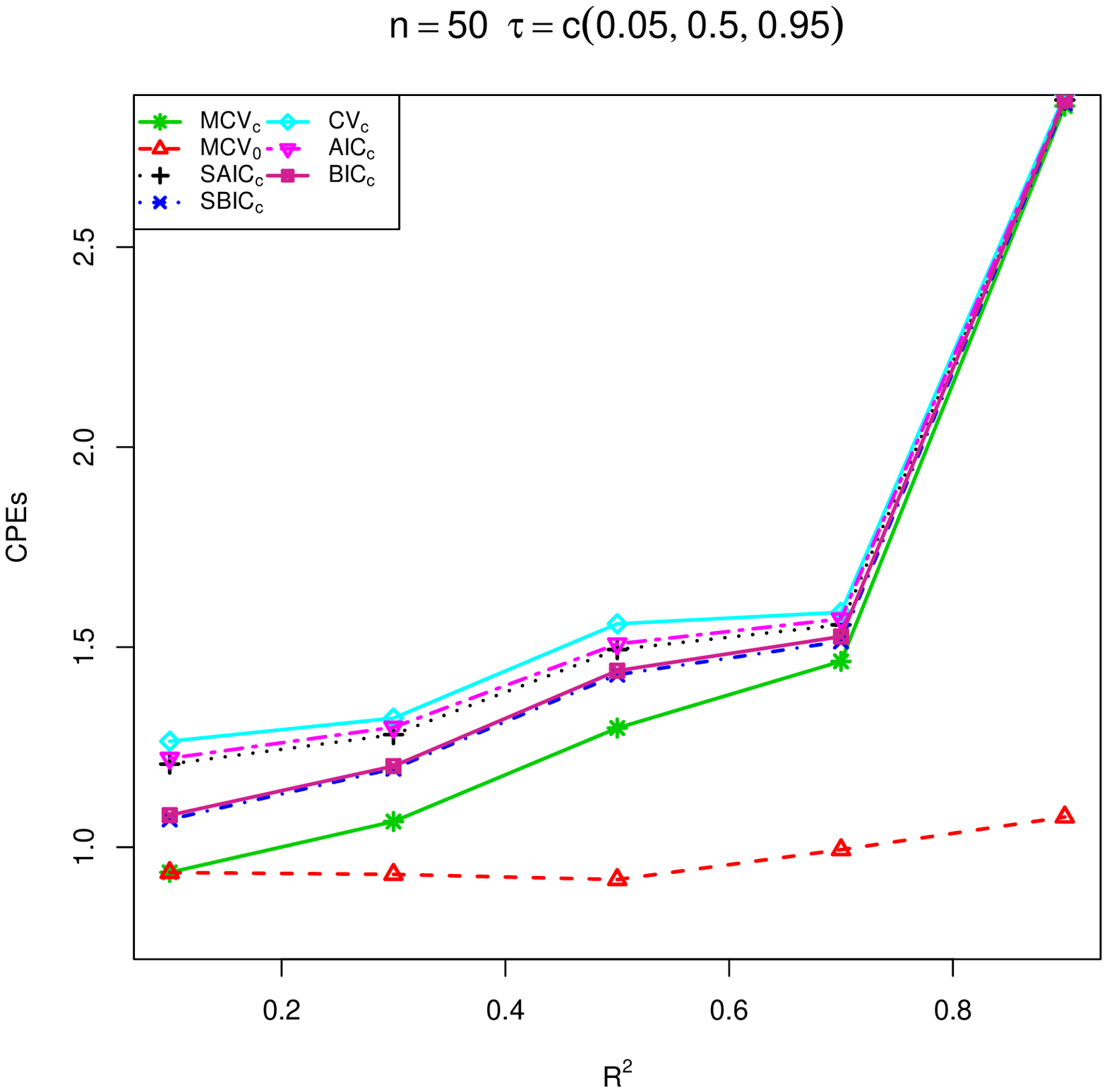}}
\subfigure{
\includegraphics[width=8cm,height=6.8cm]{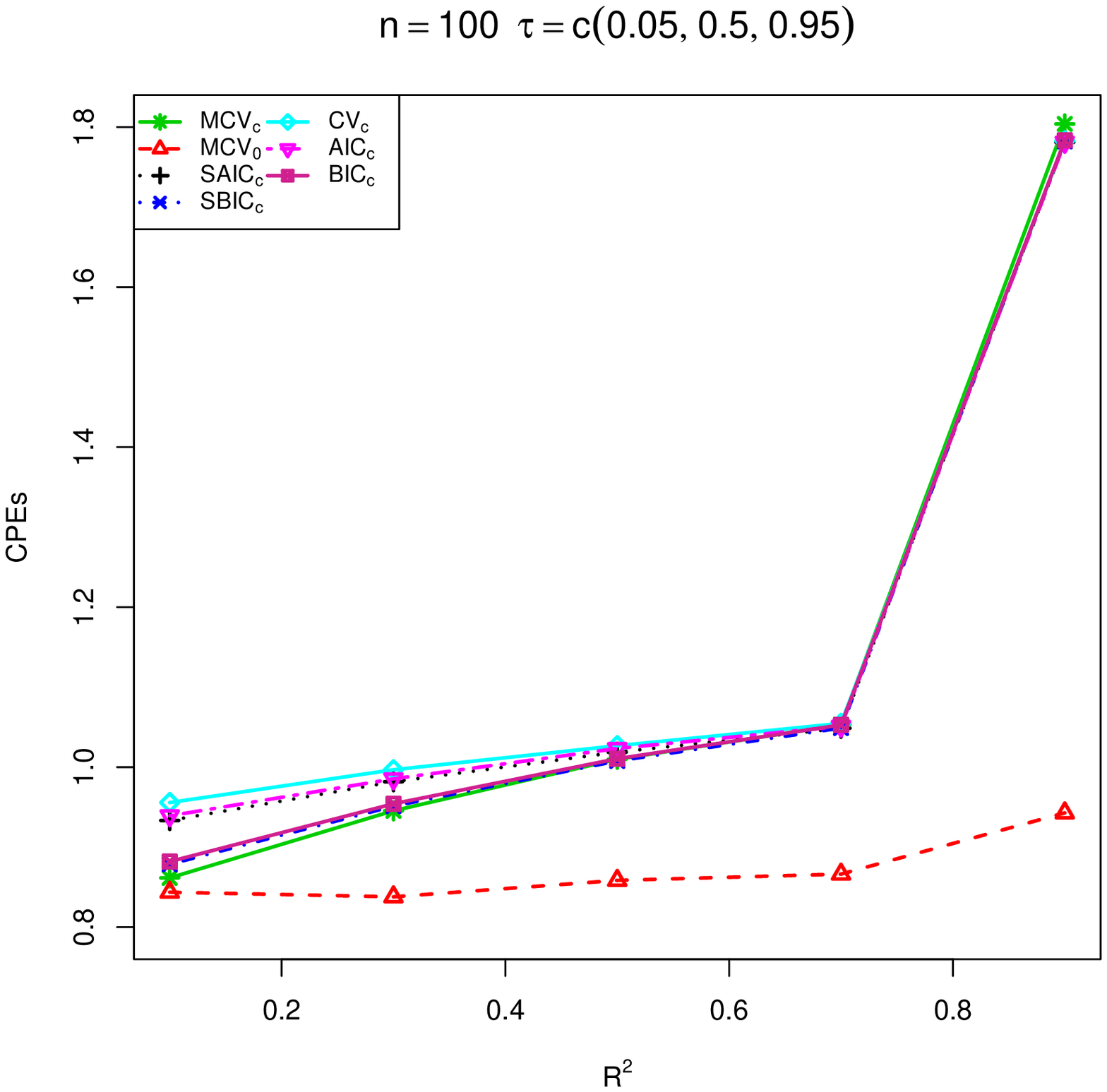}}
\subfigure{
\includegraphics[width=8cm,height=6.8cm]{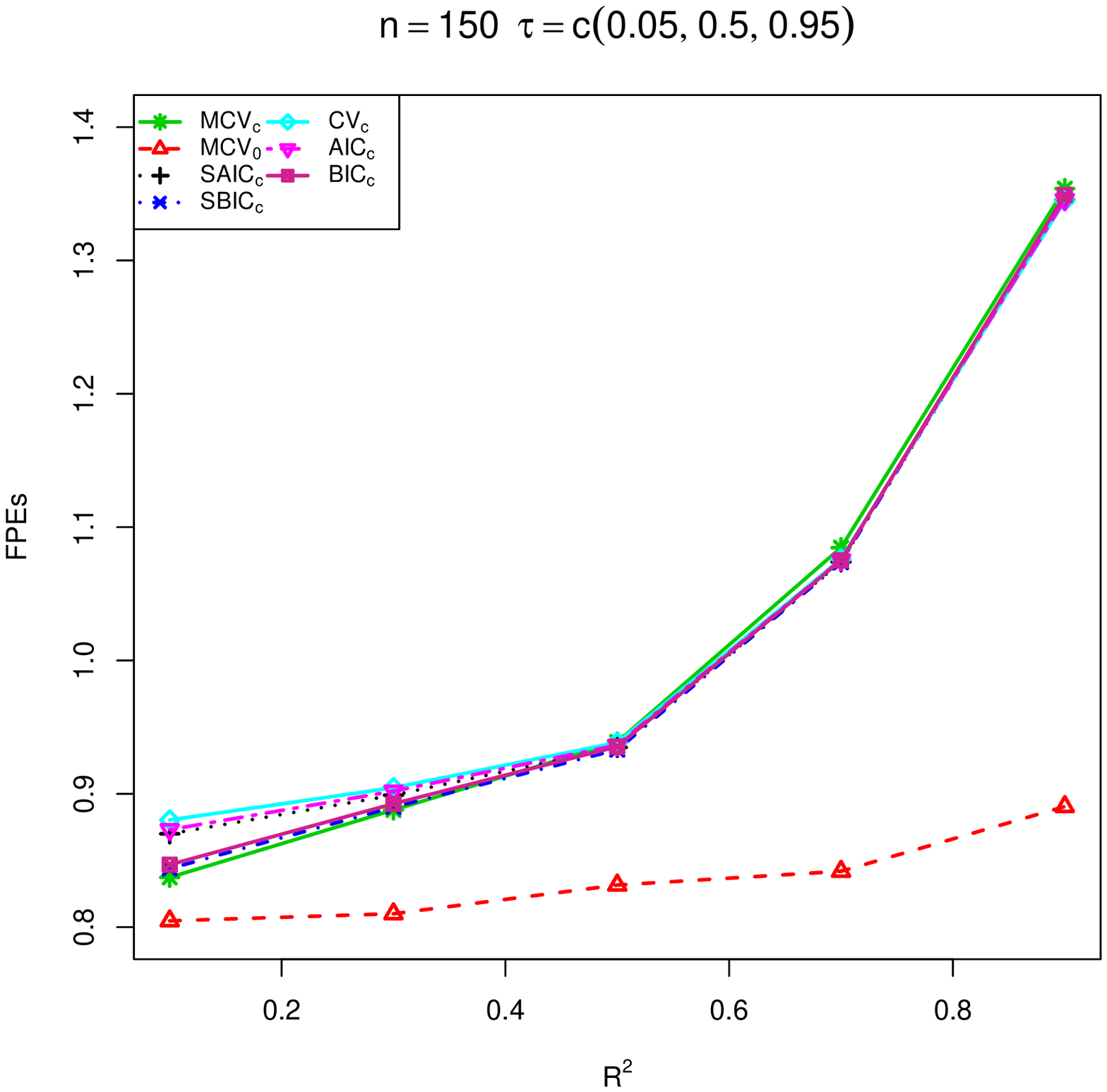}}
\caption{Out-of-sample performance of Setting 2: Heteroscedasticity.}
\label{Fig4}
\end{figure}
\begin{figure}[b]
\centering
\subfigure{
\includegraphics[width=8cm,height=6.8cm]{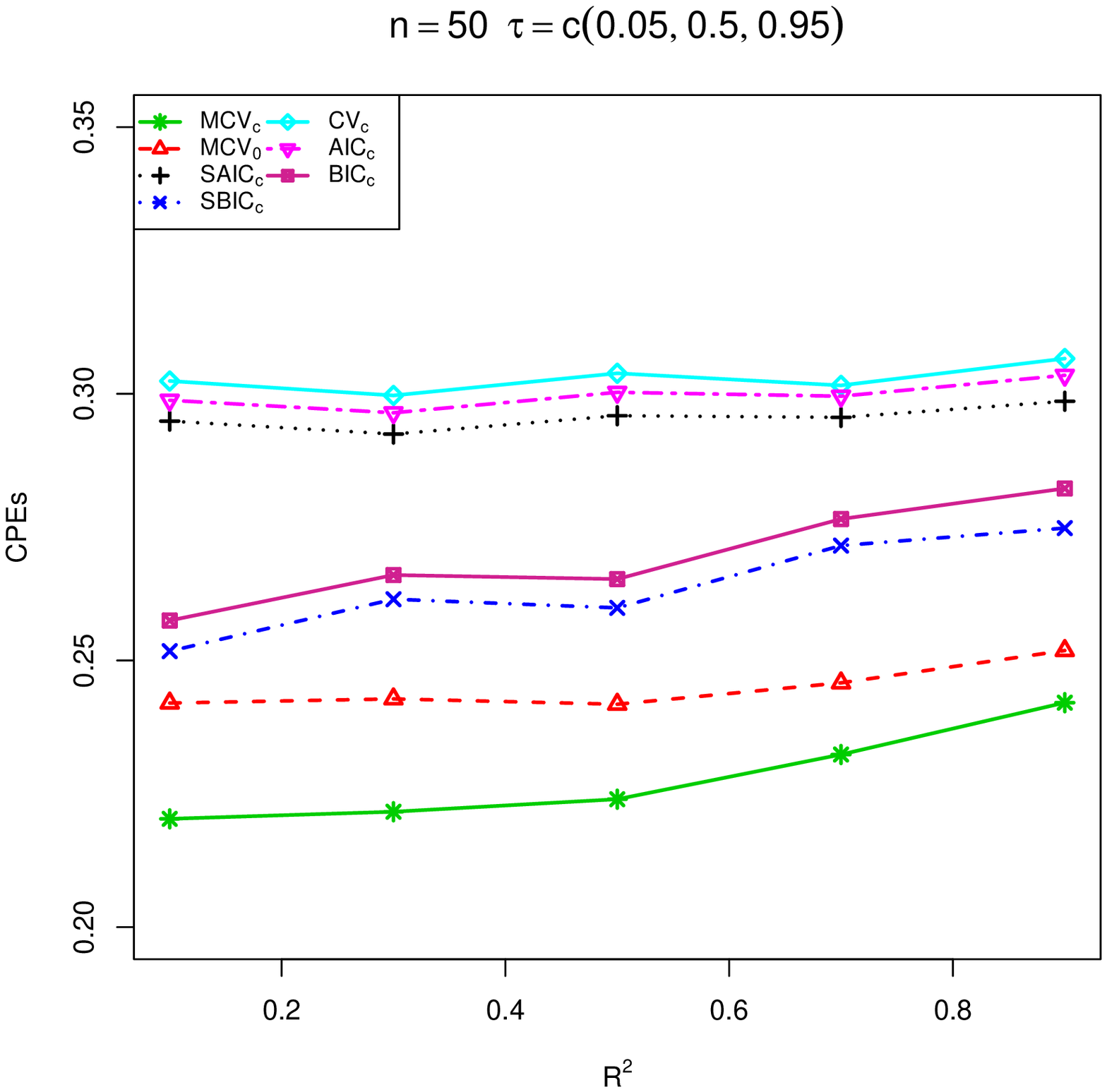}}
\subfigure{
\includegraphics[width=8cm,height=6.8cm]{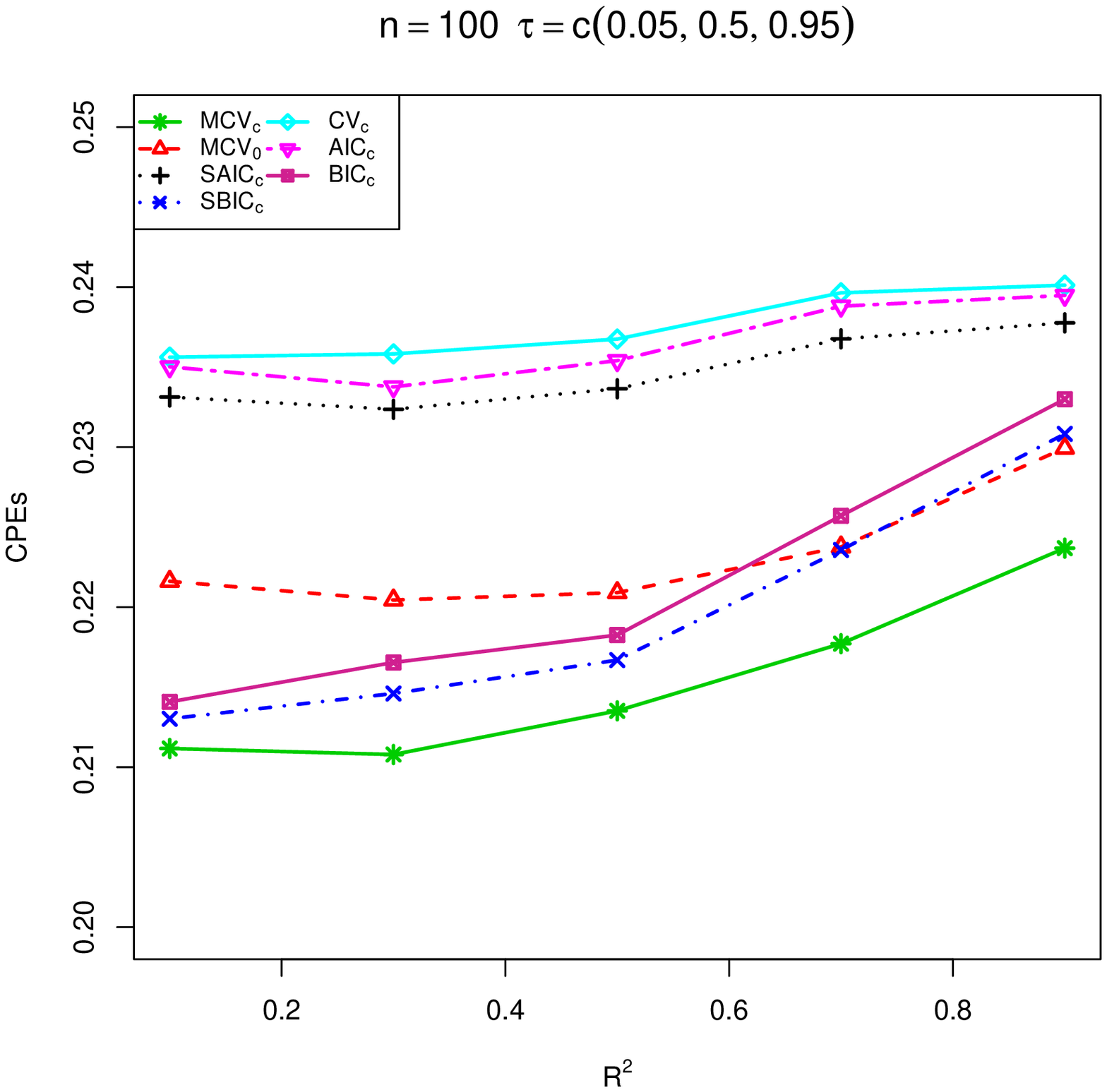}}
\subfigure{
\includegraphics[width=8cm,height=6.8cm]{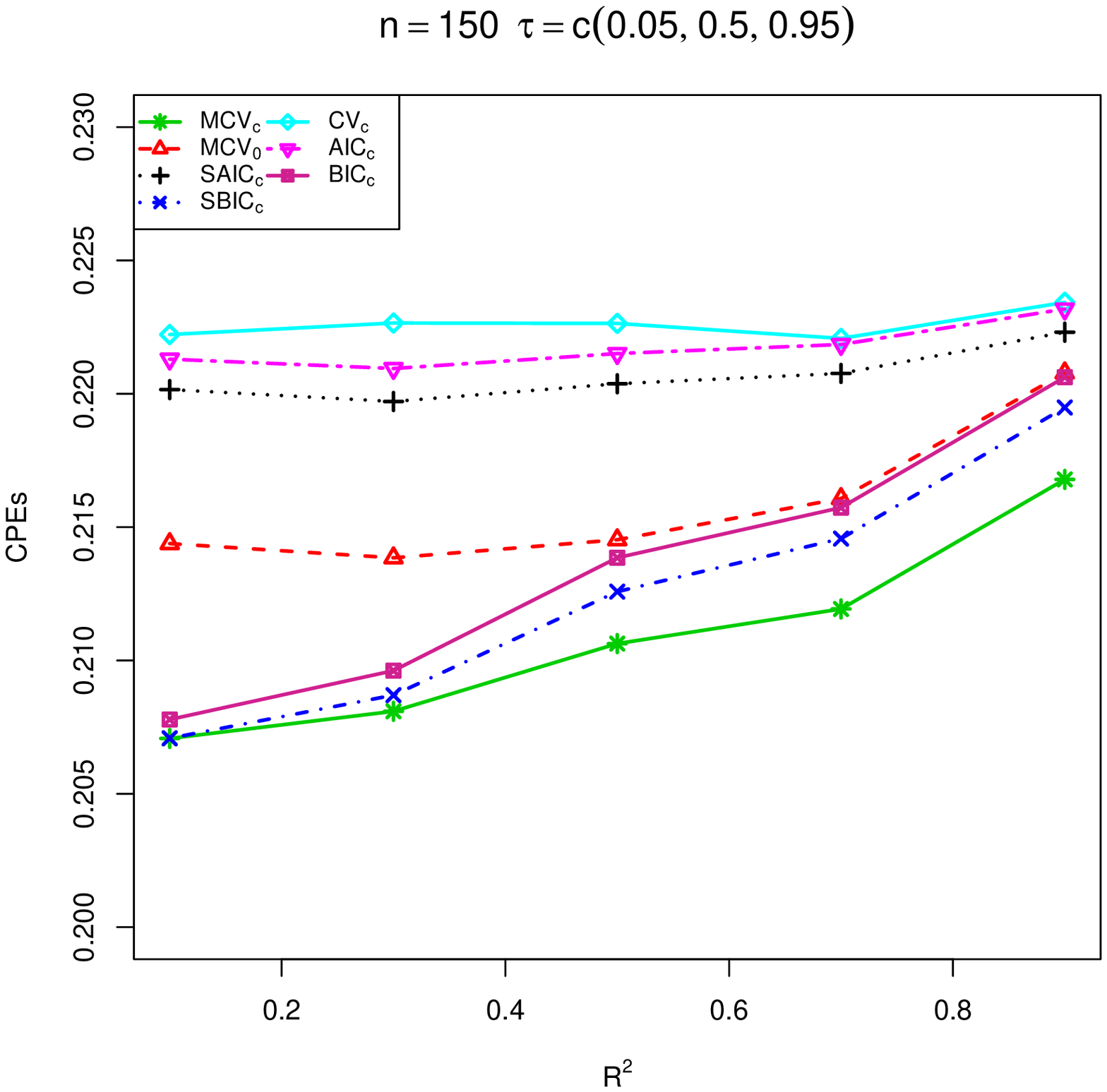}}
\caption{Out-of-sample performance of Setting 3: Homoscedasticity.}
\label{Fig5}
\end{figure}
\begin{figure}[b]
\centering
\subfigure{
\includegraphics[width=8cm,height=6.8cm]{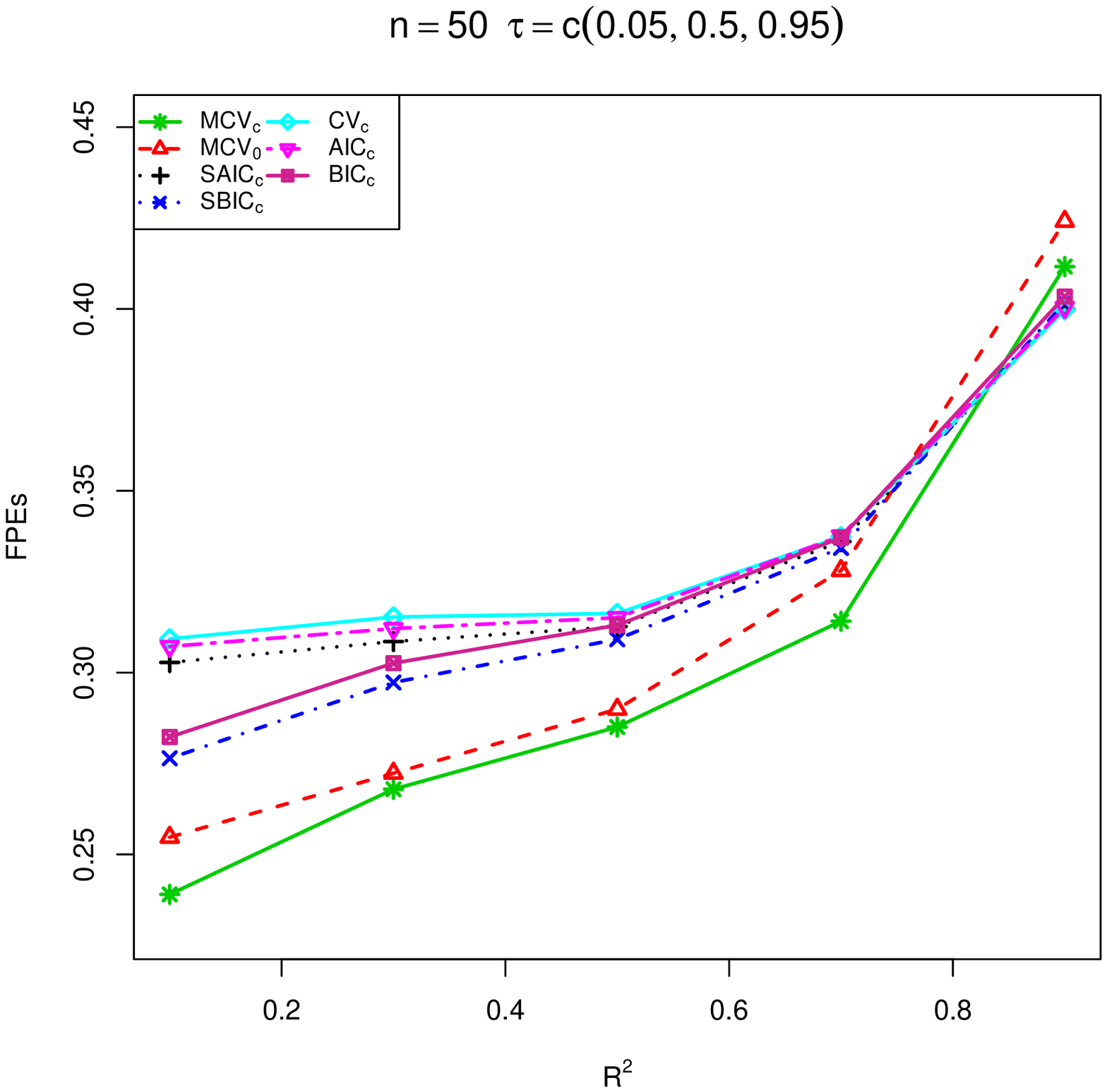}}
\subfigure{
\includegraphics[width=8cm,height=6.8cm]{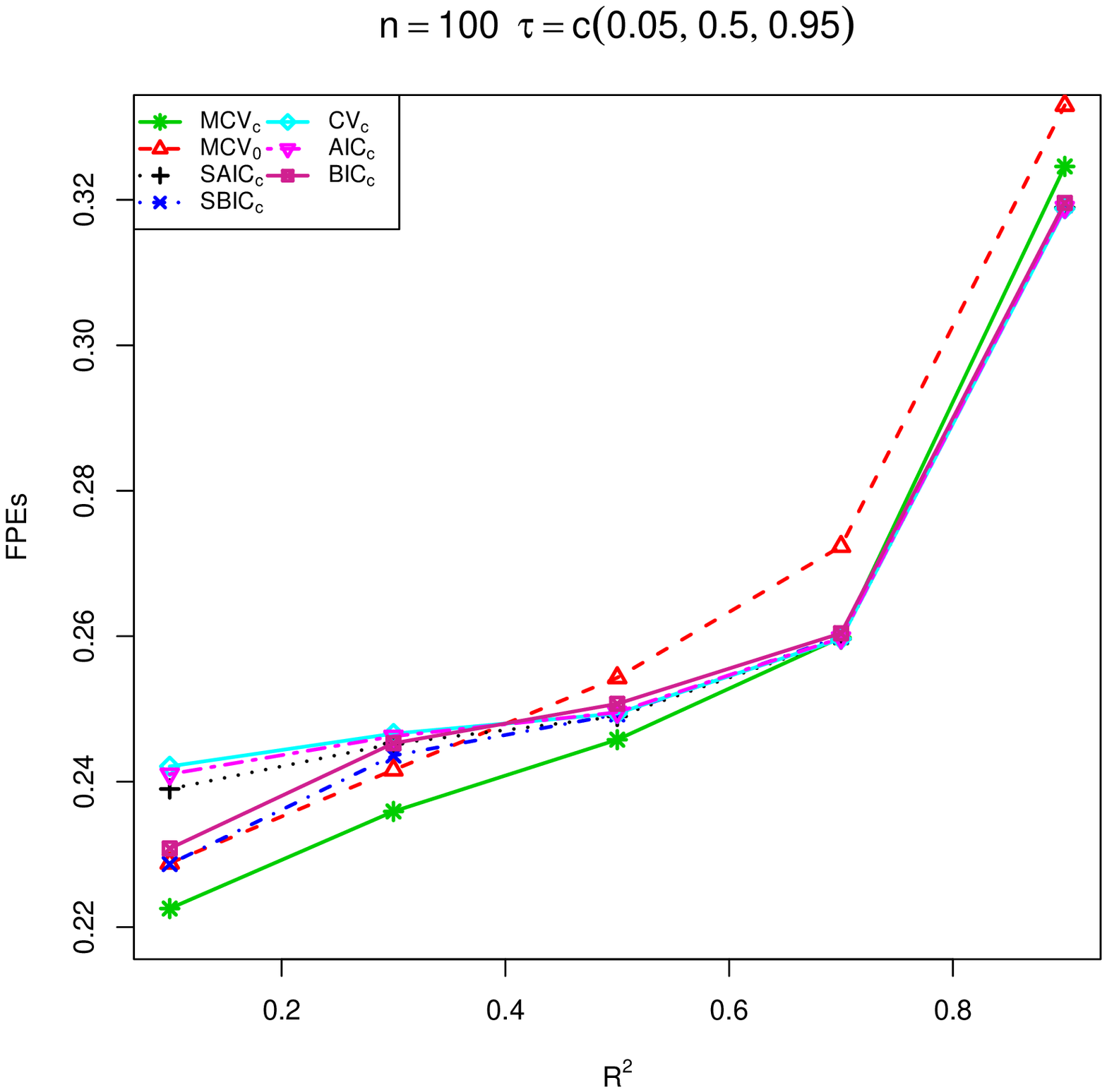}}
\subfigure{
\includegraphics[width=8cm,height=6.8cm]{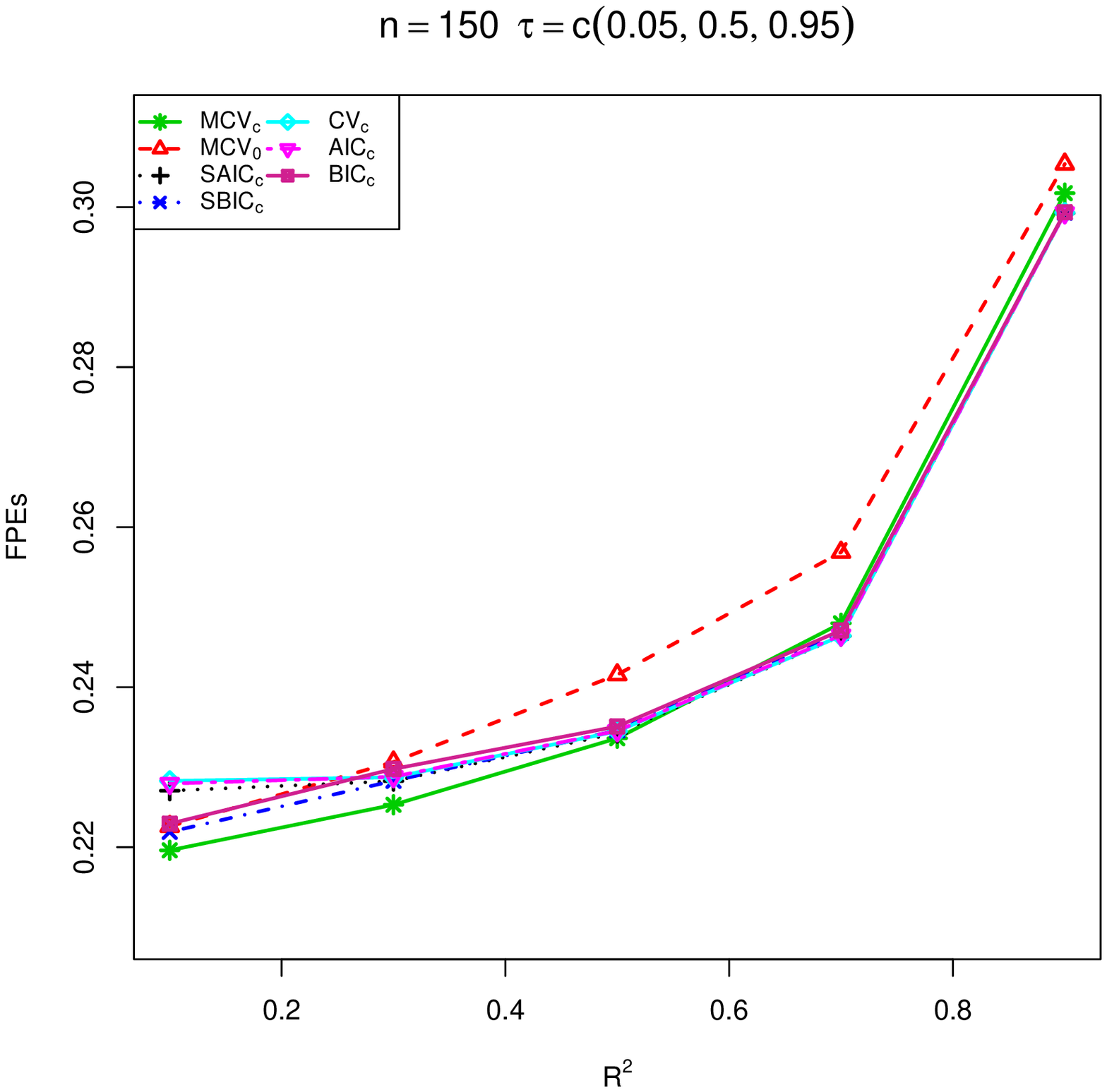}}
\caption{Out-of-sample performance of Setting 3: Heteroscedasticity.}
\label{Fig6}
\end{figure}
%
\begin{figure}[b]
\centering
\subfigure{
\includegraphics[width=8cm,height=6.8cm]{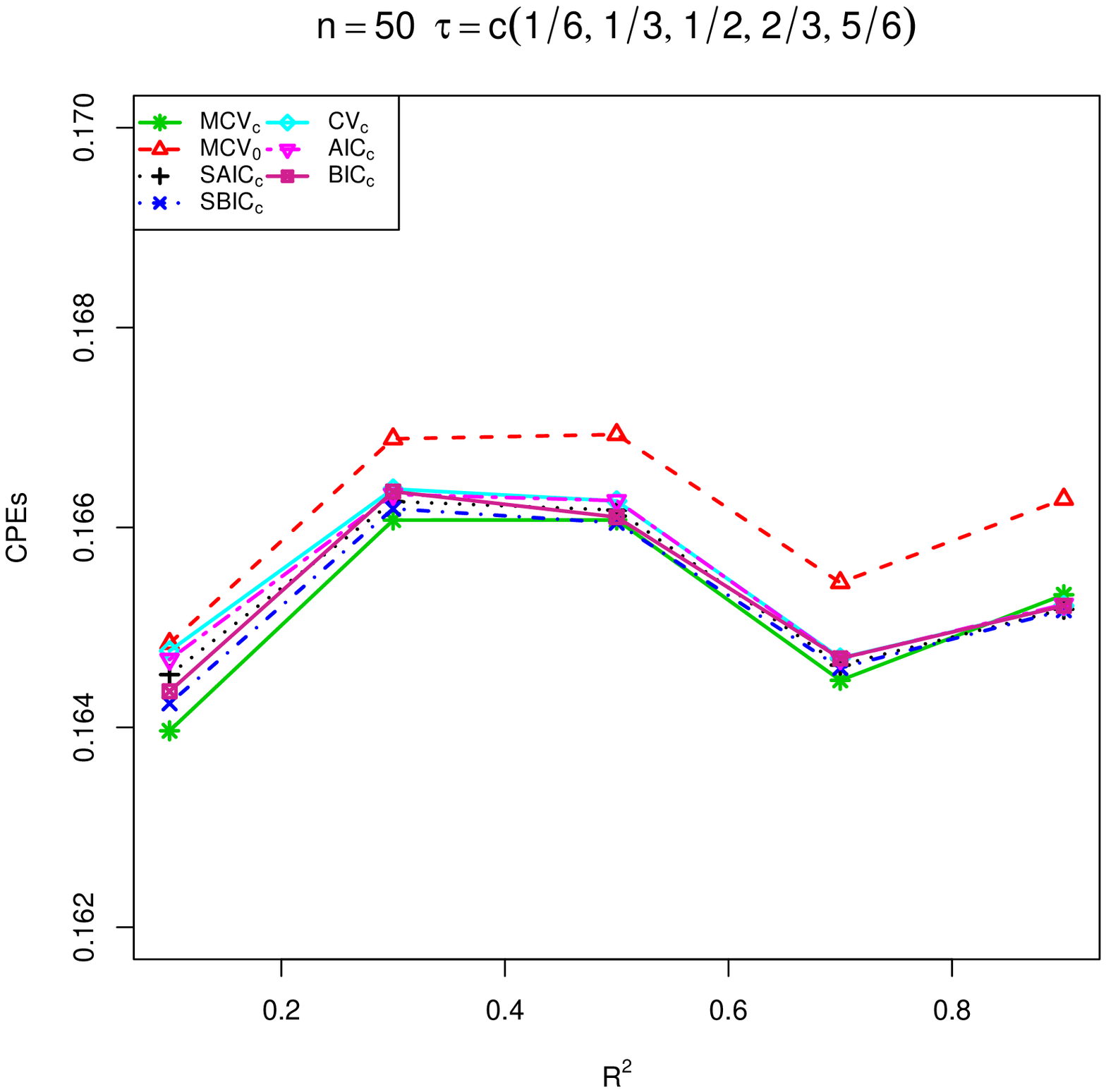}}
\subfigure{
\includegraphics[width=8cm,height=6.8cm]{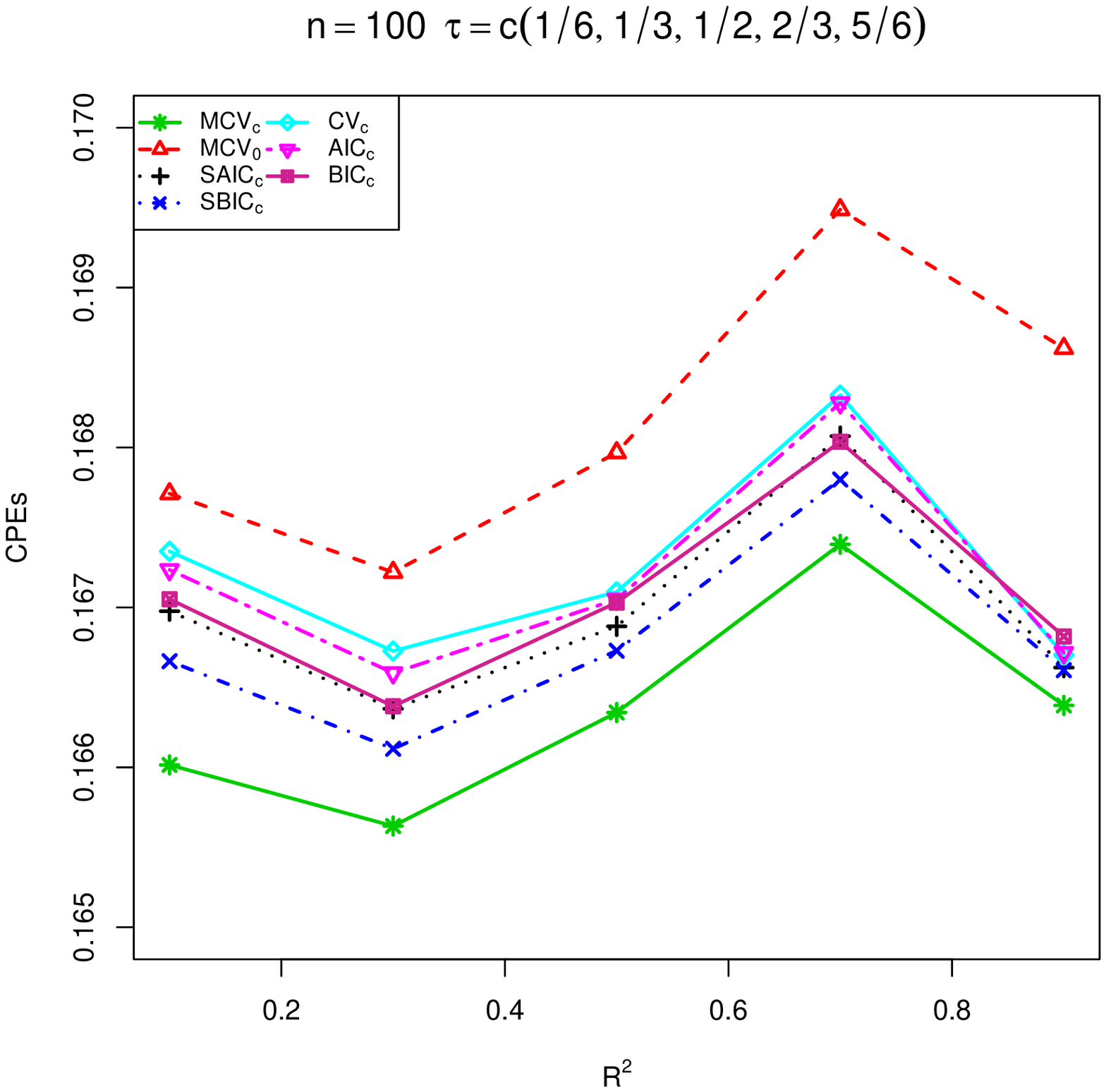}}
\subfigure{
\includegraphics[width=8cm,height=6.8cm]{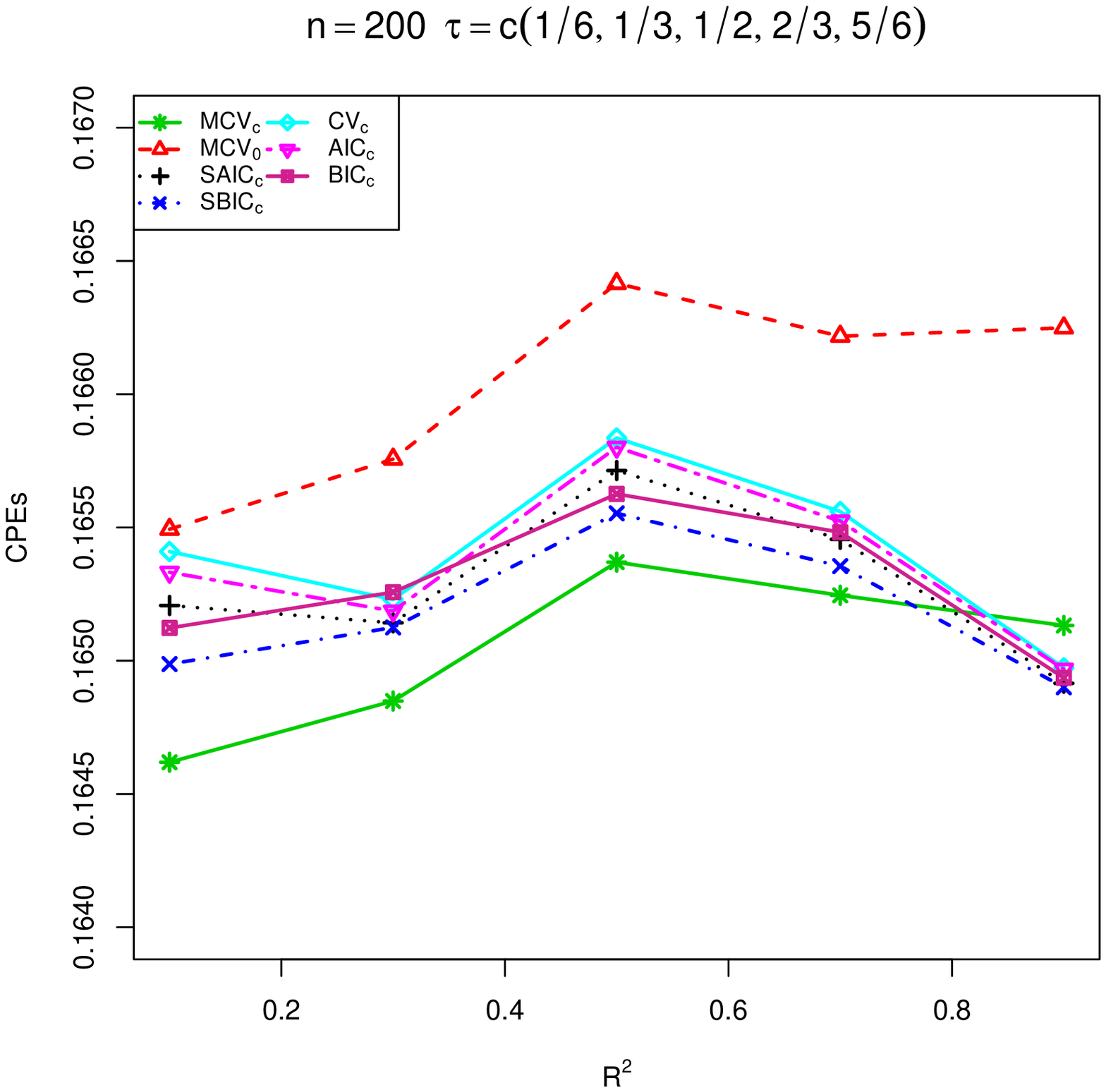}}
\caption{$\mathrm{CPEs}$ of estimators for \textbf{Case} $\boldsymbol{1}$ of \textbf{Setting} $\boldsymbol{4}$.}
\label{Fig7}
\end{figure}
\begin{figure}[b]
\centering
\subfigure{
\includegraphics[width=8cm,height=6.8cm]{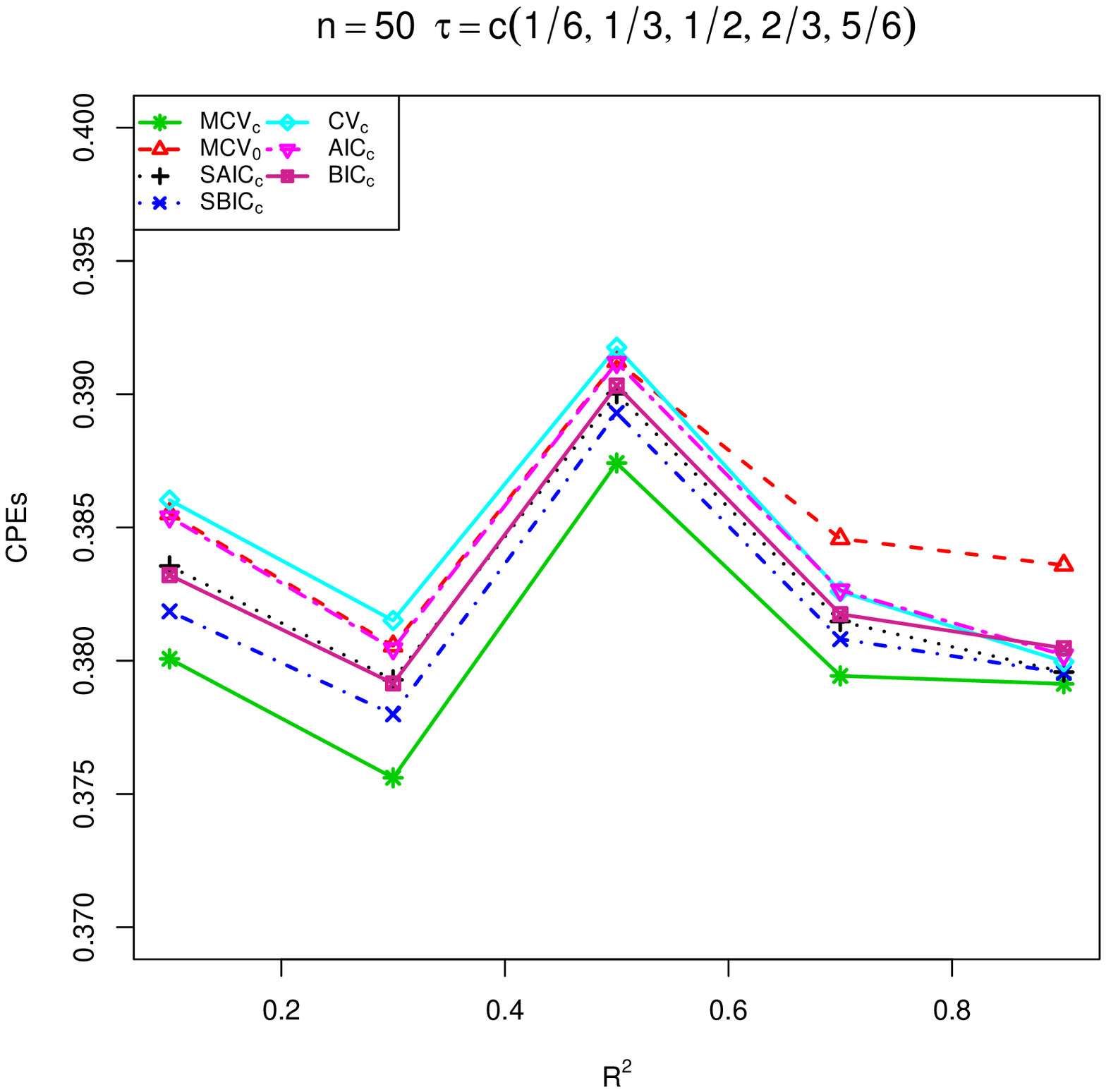}}
\subfigure{
\includegraphics[width=8cm,height=6.8cm]{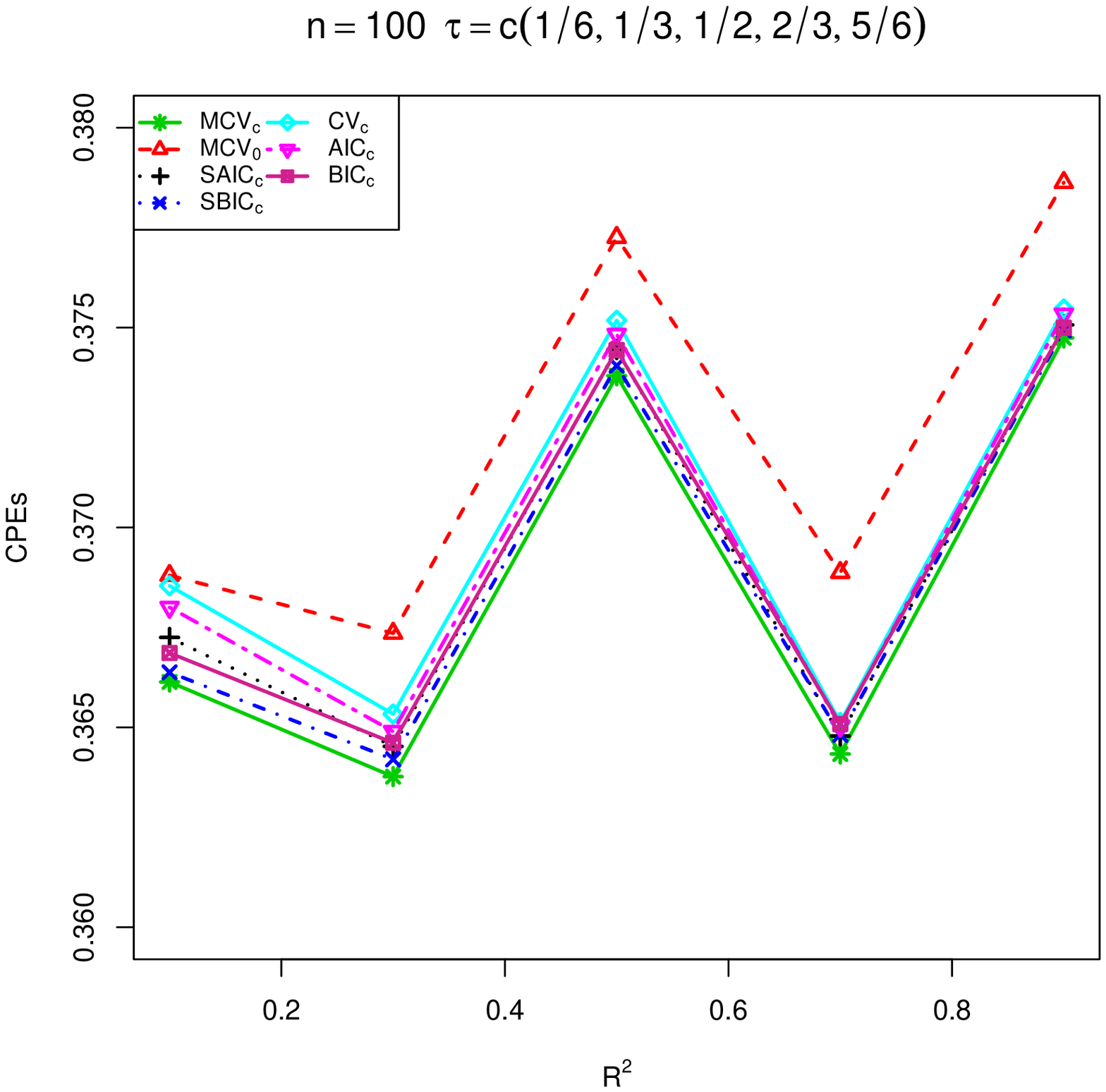}}
\subfigure{
\includegraphics[width=8cm,height=6.8cm]{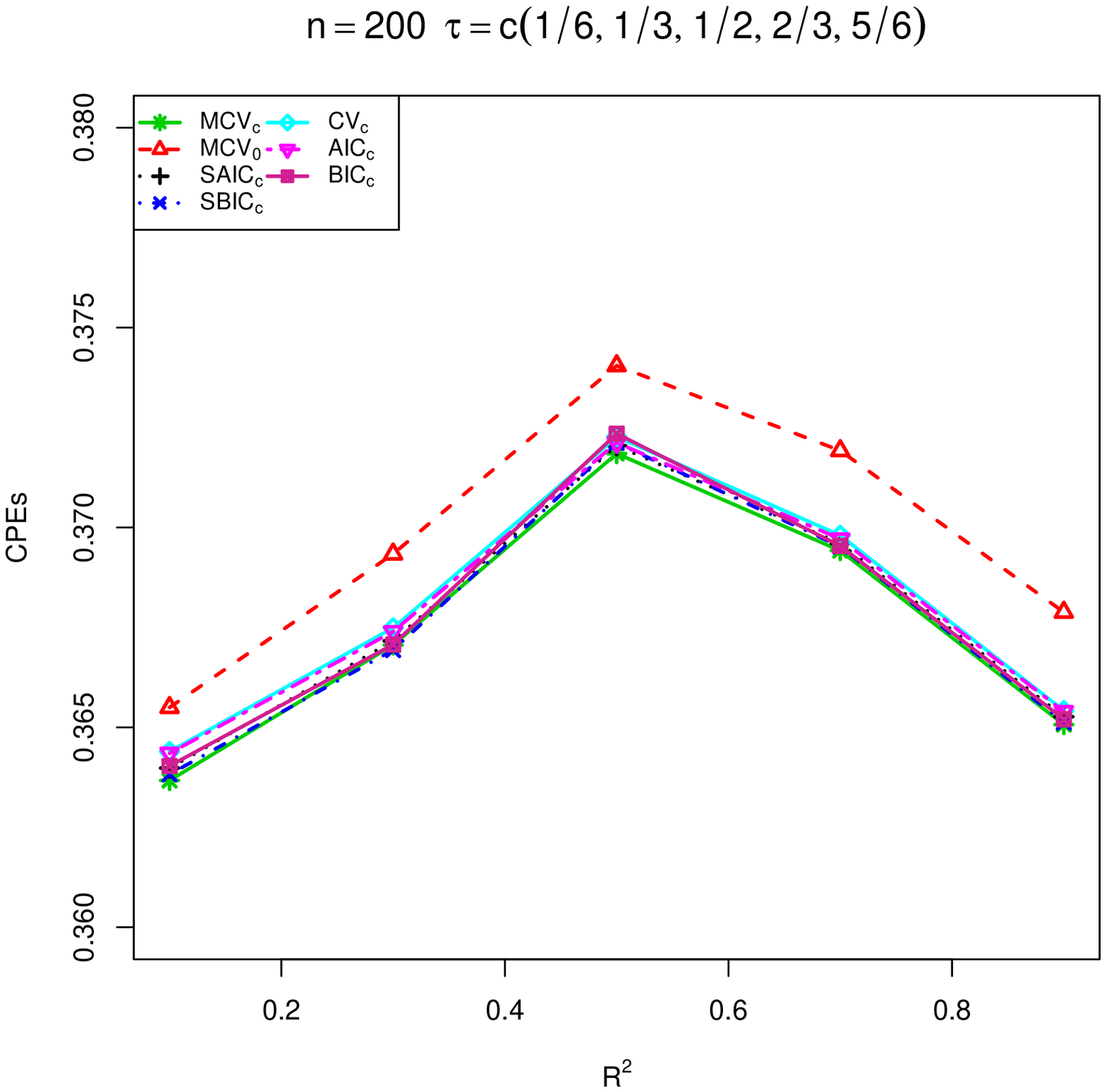}}
\caption{$\mathrm{CPEs}$ of estimators for \textbf{Case} $\boldsymbol{2}$ of \textbf{Setting} $\boldsymbol{4}$.}
\label{Fig8}
\end{figure}
\begin{figure}[b]
\centering
\subfigure{
\includegraphics[width=8cm,height=6.8cm]{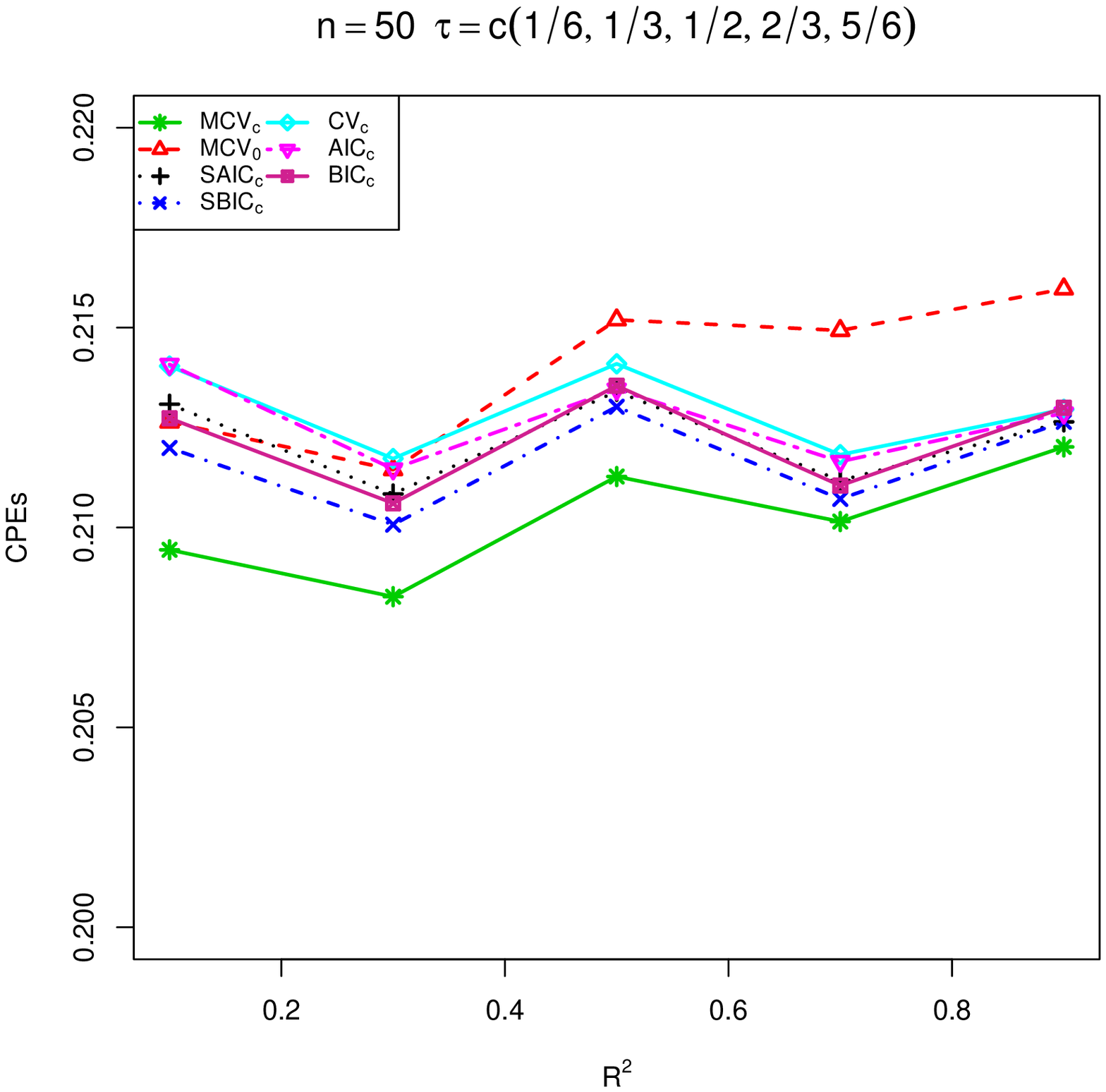}}
\subfigure{
\includegraphics[width=8cm,height=6.8cm]{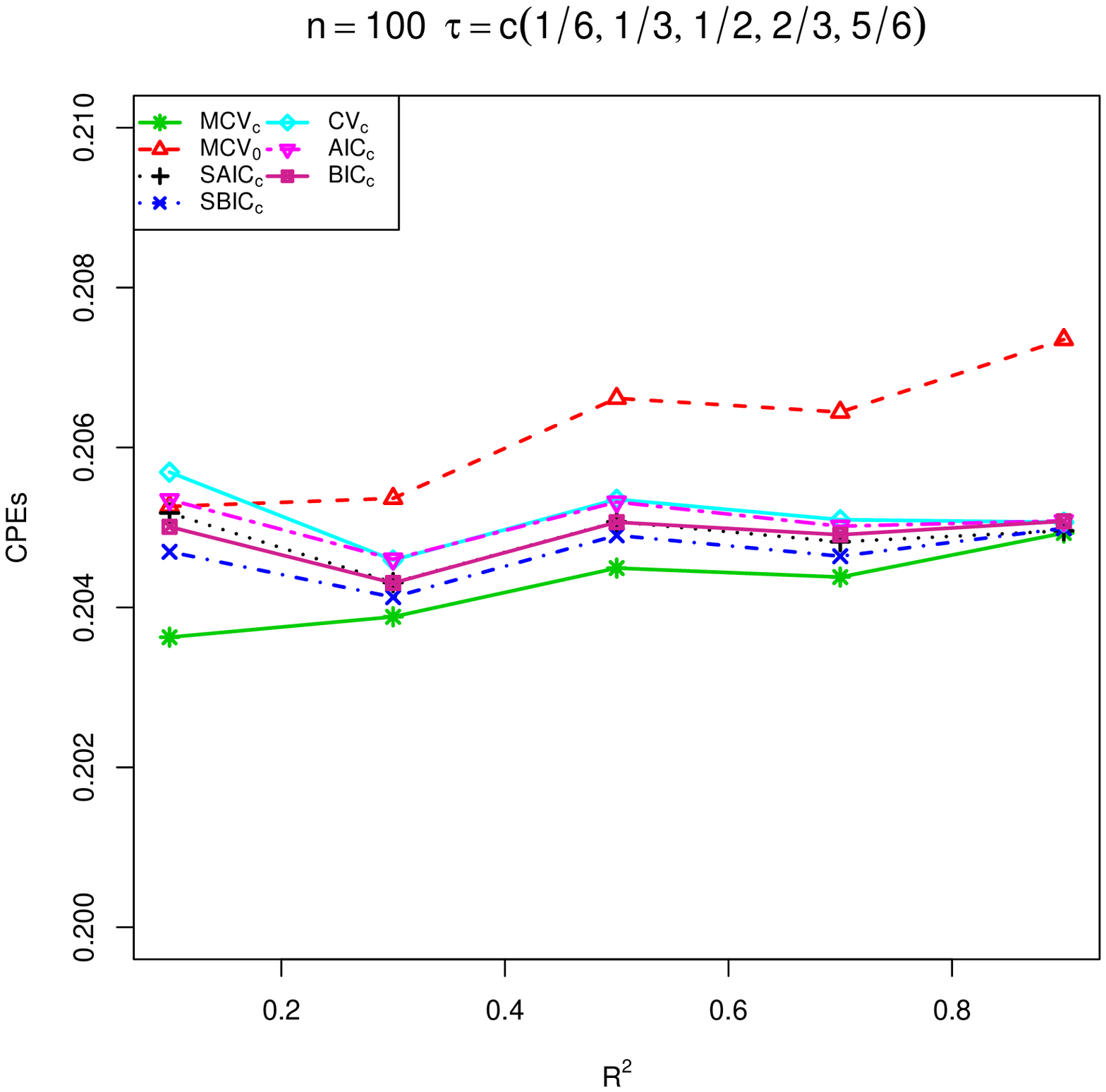}}
\subfigure{
\includegraphics[width=8cm,height=6.8cm]{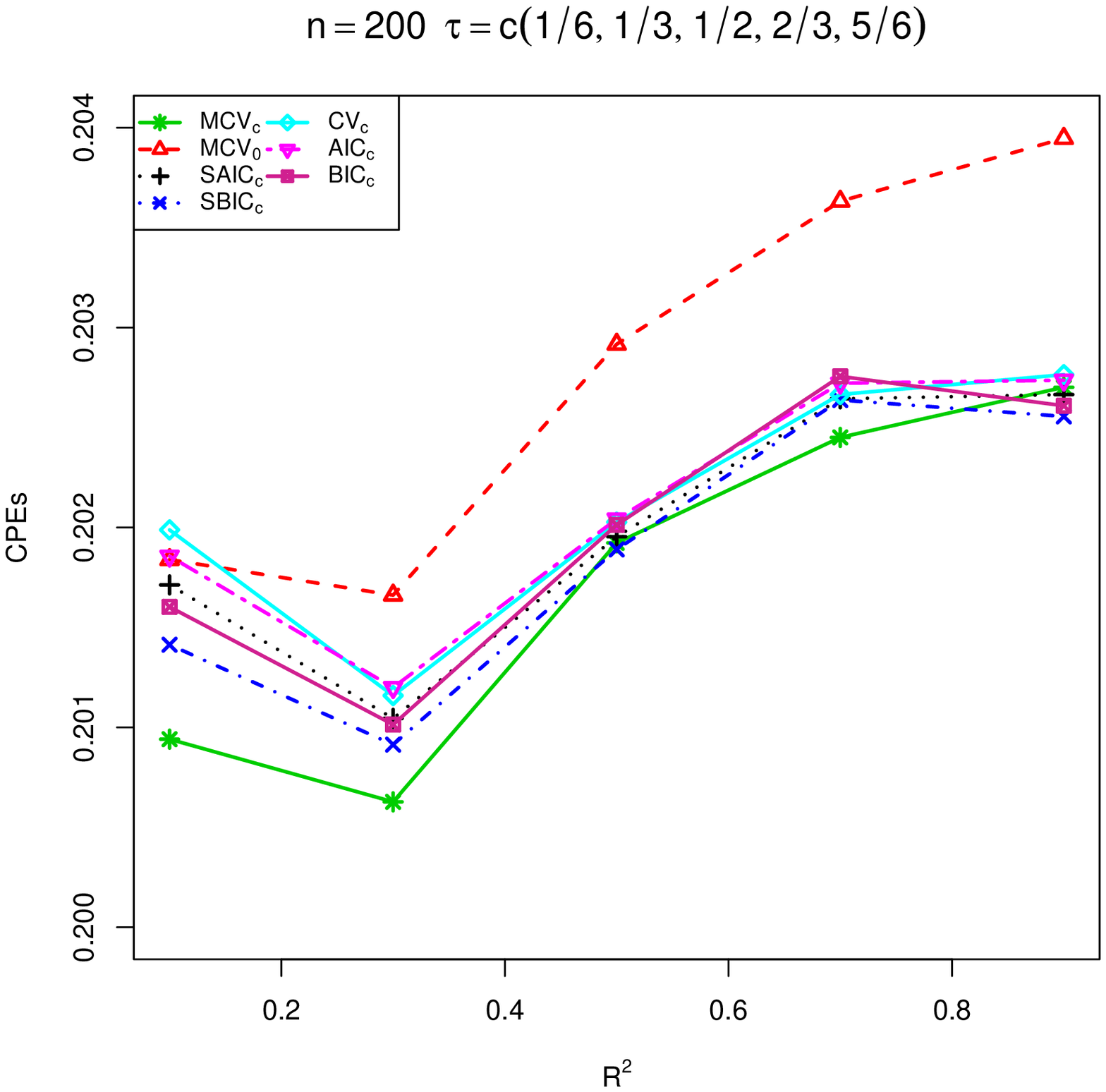}}
\caption{$\mathrm{CPEs}$ of estimators for \textbf{Case} $\boldsymbol{3}$ of \textbf{Setting} $\boldsymbol{4}$.}
\label{Fig9}
\end{figure}
\section{Real data example}\label{sec:Real}
In this section, we apply the methods examined in Section \ref{sec:simu} to the two real example investigated in \cite{lu2015jackknife}. One is about the excess stock returns, the other one is about wages.
\subsection{Quantile forecast of excess stock returns}
In financial risk management, Value at Risk (VaR) is widely used to measure the
risk of loss on a specific portfolio of financial assets.
It can be calculated by the quantiles of stock returns.
In this subsection, we apply our MCV$_{c}$ estimators to predict the quantiles of stock returns.

The first data example, size $ T = 672 $, is monthly from January 1950 to December 2005. The dependent variable $y$ is the excess stock returns, defined as the monthly returns of S$\&$P 500 index minus the risk-free rate. The detailed explanations of $12$ regressors in the data set were shown in \cite{Sainan2014Robustify}.
We sort the $12$ regressions based on the absolute value of their correlation with the dependent variable and then construct 12 candidate nested models with regressors
$\{1,x_{1}\}$, $\{1,x_{1},x_{2}\}$, $\cdots$, $\{1,x_{1},x_{2},\cdots,x_{12}\}$, respectively.
Adopting the in-sample size ($T_{1}$) of $72$, $96$, $120$, $144$, and $180$,
we construct one-period-ahead prediction of the quantiles of excess stock returns and compare the above forecast methods with the out-of-sample $\mathrm{CPE}=\frac{1}{(T-T_1)K}\sum_{t=T_{1}+1}^{T}\sum_{k=1}^{K}\rho_{\tau_{k}}\left(y_{t}-\hat{y}_{t,k}\right)$, where $\hat{y}_{t,k}$ is the one-period ahead prediction of the $\tau_{k}^{th}$ quantile of the excess return at time $t$ using data from period $t$ to period $t-T_{1}+1$.
The quantiles are taken as $\tau_{1}=0.05$,
$\tau_{2}=0.5$ and $\tau_{3}=0.95$ with $K=3$.

The results of the out-of-sample $\mathrm{CPEs}$ are presented in Table \ref{tab:stock}. It is clear that MCV$_{c}$ dominates all other model averaging and model selection methods.
But the performance of MCV$_{0}$ is different from the simulation results.
It produces the second best estimates. Other model averaging and model selection estimators can be the least preferred estimators.
\begin{table}[h]
  \centering
  \caption{$\mathrm{CPEs}$ of prediction for the quantiles of the excess stock returns$(\times 10^{-2})$}
  \vspace{0.3cm}
    \begin{tabular}{c|ccccccc}
    \toprule
    $T_{1}$ & MCV$_{c}$ & MCV$_{0}$ & SAIC$_{c}$ & SSIC$_{c}$ & CV$_{c}$ & AIC$_{c}$ & SIC$_{c}$ \\
    \midrule
    72    & 0.8033$^\sharp$  & 0.8435  & 0.9302  & 0.9297  & 0.9299  & 0.9307$^\flat$  & 0.9300  \\
    96    & 0.8240$^\sharp$  & 0.8414  & 0.9229  & 0.9230$^\flat$  & 0.9229  & 0.9230  & 0.9231  \\
    120   & 0.8602$^\sharp$  & 0.8636  & 0.9387$^\flat$  & 0.9387  & 0.9387  & 0.9387  & 0.9387  \\
    144   & 0.8719$^\sharp$  & 0.8726  & 0.9638$^\flat$  & 0.9638  & 0.9638  & 0.9638  & 0.9638  \\
    180   & 0.8459$^\sharp$  & 0.8678  & 0.9102  & 0.9099  & 0.9110$^\flat$  & 0.9103  & 0.9103  \\
    \bottomrule
    \end{tabular}%
  \label{tab:stock}%
\end{table}%
\subsection{Quantile forecast of wages}
In labor economics, the quantiles of wages are often used to characterize wage inequality (see, e.g. \cite{Angrist2009Mostly}, Chapter 7).
In this subsection, our new averaging estimators are applied to predict it.

We consider the sample of the US Current Population Survey for the year 1976 from \cite{Wooldridge2003Introductory}. The dependent variable is the logarithm of average hourly earnings with the sample size $n=526$ and $20$ regressors.
We sort these regressions based on their absolute value of the correlation with the dependent variable and construct 10 candidate nested models using the first 10 regressions. Then we randomly divide the data into a training sample $\{\boldsymbol{x}_{s},y_{s}\}_{s=1}^{n_{1}}$ and an evaluation sample $\{\boldsymbol{x}_{t},y_{t}\}_{t=1}^{n_{2}}$ containing $n_{1}$ and $n_{2}=n-n_{1}$ observations, respectively. We set different training sample sizes $n_{1}=100$, $150$, and $200$.
We evaluate these methods with respect to the out-of-sample $\mathrm{CPE}=\frac{1}{n_2K}\sum_{t=1}^{n_{2}}\sum_{k=1}^{K}\rho_{\tau_{k}}\left(y_{t}-\hat{y}_{t}\right)$, where $\hat{y}_{t}$ is the predicted value of $y_{t}$.  The quantiles are taken as $\tau_{1}=0.05$, $\tau_{2}=0.5$ and $\tau_{3}=0.95$ with $K=3$. We repeat the exercise of splitting the sample $100$ times, and a comparison of the methods is based on the $\mathrm{CPEs}$, which is the average of the $\mathrm{PE}$ across the $100$ replications.

The results of the out-of-sample $\mathrm{CPEs}$ are presented in Table \ref{tab:wage}. It is found that MCV$_{c}$ always yields the best results.
MCV$_{0}$ is inferior to other model averaging and model selection estimators.
The results from this real data analysis are consistent with those obtained from the simulation study.
\begin{table}[h]
  \centering
  \caption{$\mathrm{CPEs}$ of prediction for the quantiles of wages$(\times 10^{-1})$}
  \vspace{0.3cm}
    \begin{tabular}{c|ccccccc}
    \toprule
    $n_{1}$ & MCV$_{c}$ & MCV$_{0}$ & SAIC$_{c}$ & SSIC$_{c}$ & CV$_{c}$ & AIC$_{c}$ & SIC$_{c}$ \\
    \midrule
    100   & 0.6657$^\sharp$  & 0.8380$^\flat$  & 0.8285  & 0.8284  & 0.8290  & 0.8286  & 0.8291  \\
    150   & 0.5717$^\sharp$  & 0.8089$^\flat$  & 0.7996  & 0.7994  & 0.7998  & 0.7997  & 0.7995  \\
    200   & 0.4889$^\sharp$  & 0.7949$^\flat$  & 0.7900  & 0.7898  & 0.7901  & 0.7901  & 0.7901  \\
    \bottomrule
    \end{tabular}%
  \label{tab:wage}%
\end{table}%
\section{Conclusions} \label{sec:con}
In this paper, we propose a frequentist model averaging method for CQR.
We have demonstrated a model averaging estimator with weights obtained based on a cross-validation procedure that results in an optimal asymptotic property and has excellent finite sample properties relative to other methods, including MCV$_{0}$.
We have shown that model averaging represents a credible alternative to model selection in this context, and deserves further attention from both applied and theoretical researchers.

There are two main works may be extended. For example, while our model averaging method is derived based on a delete-one CV method, other model averaging methods that have been developed in the literature; for e.g.,
MSE minimization or Kullback-Leibler type methods \citep{liang2011optimal,zhang2016optimal,zhang2015kullback} may be usefully extended to the context here.
Besides, these existing methods above were built on uniform weights for loss function in composite quantile regression.
\cite{Bradic2011Penalized} pointed out that using uniform weights for loss function in composite quantile regression may be not optimal in general. Therefore, \cite{Jiang2012Oracle} considered a weighted CQR estimation approach and studied model selection for nonlinear models with a diverging number of parameters.
Due to the effectiveness and robustness of weighted CQR estimation. The Weighted CQR estimation is enjoying rising popularity. More works for it studies can be found in
\citep{Jiang2014Weighted,Liu2015Weighted,Guo2017Robust,Jiang2017Weighted}. Our JMA method can be developed in this literature.
These remain for future research.

\appendix
\renewcommand\thesection{\appendixname~\Alph{section}}
\renewcommand\theequation{\Alph{section}.\arabic{equation}}
\renewcommand\thelem{\Alph{section}.\arabic{lem}}
\section{Assumptions and Lemmas}
\label{sec:AL}
For $m=1,\ldots,M$, let $u_{ki(m)}=\mu_{i}-\left(1,\boldsymbol{x}_{i(m)}^{T}\right)\left(b_{k(m)}^{*},\bbeta_{(m)}^{*T}\right)^{T}$ and $\psi_{\tau}(\varepsilon)=\tau-\textbf{1}\{\varepsilon\leq 0\}$. Define
\begin{eqnarray}
A_{k(m)}&=&\mathrm{E}\left\{f(-u_{ki(m)}\mid \boldsymbol{x}_{i})\left(1,\boldsymbol{x}_{i(m)}^{T}\right)^{T}\left(1,\boldsymbol{x}_{i(m)}^{T}\right)\right\}\non\\
&=&\mathrm{E}\left\{f_{y\mid \boldsymbol{x}}\left(\left(1,\boldsymbol{x}_{i(m)}^{T}\right)\left(b_{k(m)}^{*},\bbeta_{(m)}^{*T}\right)^{T}\right)
\left(1,\boldsymbol{x}_{i(m)}\boldsymbol{x}_{i(m)}^{T}\right\}\left(1,\boldsymbol{x}_{i(m)}^{T}\right)\right\},\non\\
B_{k(m)}&=&\mathrm{E}\left\{\psi_{\tau_{k}}(\varepsilon_{i}+u_{ki(m)})^{2}\left(1,\boldsymbol{x}_{i(m)}^{T}\right)^{T}\left(1,\boldsymbol{x}_{i(m)}^{T}\right)\right\},\non
\end{eqnarray}
$$
A_{(m)}=
\begin{pmatrix}
f(-u_{1i(m)}\mid \boldsymbol{x}_{i})&&&&\\
&\ddots&&\\
&&f(-u_{Ki(m)}\mid \boldsymbol{x}_{i})&\\
&&&\sum_{k=1}^{K}f(-u_{ki(m)}\mid \boldsymbol{x}_{i})\boldsymbol{x}_{i(m)}\boldsymbol{x}_{i(m)}^{T}
\end{pmatrix},
$$
$$
B_{(m)}=
\begin{pmatrix}
\psi_{\tau_{1}}\left(\varepsilon_{i}+u_{1i(m)}\right)^{2}&&&&\\
&\ddots&&\\
&&\psi_{\tau_{K}}\left(\varepsilon_{i}+u_{Ki(m)}\right)^{2}&\\
&&&\left\{\sum_{k=1}^{K}\psi_{\tau_{k}}\left(\varepsilon_{i}+u_{ki(m)}\right)\right\}^{2}
\boldsymbol{x}_{i(m)}\boldsymbol{x}_{i(m)}^{T}
\end{pmatrix},
$$
\begin{eqnarray}
\bar{A}_{(m)}&=&\mathrm{E}\left(A_{0(m)}\right),
\bar{B}_{(m)}=\mathrm{E}\left(B_{0(m)}\right)~
\text{and}~
V_{(m)}=\bar{A}_{(m)}^{-1}\bar{B}_{(m)}\bar{A}_{(m)}^{-1}.\non
\end{eqnarray}

Let $\overline{k}=\max_{1\leq m\leq M}\tilde{k}_{m}$ and $\|A\|=\{\text{tr}(AA^{T})\}^{1/2}$ for an real matrix $A$.
For a symmetric matrix $A$ , we adopt $\lambda_{\max}(A)$ and $\lambda_{\min}(A)$ to denote its largest and smallest eigenvalues, respectively.
We require the following uniformly integrable assumptions:
\begin{condition}
\label{a1}%
\emph{(i)} $(y_{i},\boldsymbol{x}_{i}), i=1,\ldots,n,$ are $\mathrm{IID}$ such that (\ref{eq2.1}) holds.

\emph{(ii)} $P(\varepsilon_{i}\leq b_{k}\mid\boldsymbol{x}_{i})=\tau_{k} \ a.s.$.

\emph{(iii)} $\max_{1\leq k\leq K}\mathrm{E}\left(\mu_{i}+b_{k}\right)^{4}<\infty$ and $\mathrm{sup}_{j\geq1}\mathrm{E}(x_{ij}^{8})\leq c_{\boldsymbol{x}}$ for some $c_{\boldsymbol{x}}<\infty$.
\end{condition}
\begin{condition}
\label{a2}
\emph{(i)} $f_{y\mid \boldsymbol{x}}(\cdot\mid \boldsymbol{x}_{i})$ continuous over its support a.s.. $f_{y\mid \boldsymbol{x}}(\cdot\mid \boldsymbol{x}_{i})$ and its first order derivative are bounded above by a finite constant $\overline{c}_{f}$.

\emph{(ii)} There exist constants $\underline{c}_{A_{(m)}}$ and $\overline{c}_{A_{(m)}}$ that may depend on $k_{m}$ such that $0<\underline{c}_{A_{(m)}}\leq \lambda_{\min}(A_{k(m)}) \leq \lambda_{\max}(A_{k(m)})\leq\overline{c}_{A_{(m)}} <\infty$ and $\overline{c}_{f} \lambda_{\max}\left[\mathrm{E}\left\{\left(1,\boldsymbol{x}_{i(m)}^{T}\right)^{T}
\left(1,\boldsymbol{x}_{i(m)}^{T}\right)\right\}\right]\leq \overline{c}_{A_{(m)}} $
for all $k=1,\cdots,K$.

\emph{(iii)} There exist constants $\underline{c}_{\bar{A}_{(m)}}$, $\bar{c}_{\bar{A}_{(m)}}$,   $\underline{c}_{\bar{B}_{(m)}}$ and $\bar{c}_{\bar{B}_{(m)}}$that may depend on $\tilde{k}_{m}$ and $K$ such that
$0<\underline{c}_{\bar{A}_{(m)}}\leq \lambda_{\min}(\bar{A}_{(m)})\leq \lambda_{\max}(\bar{A}_{(m)})\leq \overline{c}_{\bar{A}_{(m)}}$ and
$0<\underline{c}_{\bar{B}_{(m)}}\leq \lambda_{\min}(\bar{B}_{(m)})\leq \lambda_{\max}(\bar{B}_{(m)})\leq \overline{c}_{\bar{B}_{(m)}}$.

\emph{(iv)} $\bar{c}_{A(m)}K/(K+\tilde{k}_{m})=O(\underline{c}_{A(m)}^{2})$
 \label{con:Ps}
\end{condition}
\begin{condition}
\label{a3}
Let $\underline{c}_{A}=\mathrm{min}_{1\leq m\leq M}~\underline{c}_{A_{(m)}}$,
$\overline{c}_{A}=\mathrm{max}_{1\leq m\leq M}~\overline{c}_{A_{(m)}}$,
$\underline{c}_{\bar{A}}=\mathrm{min}_{1\leq m\leq M}~\underline{c}_{\bar{A}_{(m)}}$,
$\overline{c}_{\bar{A}}=\mathrm{max}_{1\leq m\leq M}~\overline{c}_{\bar{A}_{(m)}}$,
$\underline{c}_{\bar{B}}=\mathrm{min}_{1\leq m\leq M}\underline{c}_{\bar{B}_{(m)}}$
and
$\overline{c}_{\bar{B}}=\mathrm{max}_{1\leq m\leq M}~\overline{c}_{\bar{B}_{(m)}}$.

\emph{(i)} $\bar{c}_{A}\bar{k}^{4}K^{2}/(n\underline{c}_{\bar{B}})=o(1)$, $\overline{k}^{4}K^{4}(\log n)^{4}/(n\underline{c}_{\bar{B}}^{2})=o(1)$,
$\left(K+\bar{k}\right)^{1/2}\bar{k}^{3/2}/\left(n^{1/2}\underline{c}_{A}\right)=o(1)$,
and
$0<\underline{c}\leq\underline{c}_{A}\leq \overline{c}_{A}\leq \overline{c}<\infty$.

\emph{(ii)} $\left\{L\bar{k}^{4}\log n/(n\underline{c}_{A}^{2})\right\}^{1/2}=o(1)$ and $nMn^{-LK\bar{k}\underline{c}_{A}\underline{c}_{\bar{A}}^{2}/(2\bar{c}_{\bar{A}}\bar{c}_{\bar{B}})}=o(1)$ for a sufficiently large constant $L$.
\end{condition}

\begin{remark}
Assumptions \emph{\ref{a1}-\ref{a3}} are regular conditions on quantile regression somewhat different from the assumptions in \emph{\cite{lu2015jackknife}} because we consider multiple quantiles. Assumption \emph{\ref{a1}} and the Part \emph{(ii)} of Assumption \emph{\ref{a2}} are same assumptions in \emph{\cite{lu2015jackknife}} and there are some detailed explanations for reference.

Next, we explain other assumptions.
The Part(i) of Assumption \emph{\ref{a2}} is a commonly used condition for inference in the quantile regression \emph{(}e.g., see \emph{\cite{koenker2005quantile}} and \emph{\cite{Wang2009INFERENCE}}\emph{)}.
When $K=1$, the Part \emph{(iii)} of Assumption \emph{\ref{a2}} is the assumption in \emph{\cite{lu2015jackknife}}, which is often assumed in typical QR models.
We suppose $\underline{c}_{\bar{A}_{(m)}}$, $\bar{c}_{\bar{A}_{(m)}}$,   $\underline{c}_{\bar{B}_{(m)}}$ and $\bar{c}_{\bar{B}_{(m)}}$ that may not only depend on $\tilde{k}_{m}$ but also associate with $K$.
When $K=1$, the Part\emph{(iv)} of Assumption \emph{\ref{a2}} is equivalent to the corresponding hypothesis about $\bar{c}_{A(m)}$, $\bar{c}_{B(m)}$ and $\underline{c}_{A(m)}$.

Assumption \emph{\ref{a3}} is also the general form hypothesis of Assumption \emph{A.3} in \emph{\cite{lu2015jackknife}} except the condition $0<\underline{c}\leq\underline{c}_{A}\leq \overline{c}_{A}\leq \overline{c}<\infty$, imposing restrictions on the largest number of regression in each model, the number of composite quantiles and candidate models, and the constants $\underline{c}_{A}$, $\bar{c}_{A}$,  $\underline{c}_{\bar{A}}$,
$\bar{c}_{\bar{A}}$, $\underline{c}_{\bar{B}}$ and $\bar{c}_{\bar{B}}$.
\end{remark}

\begin{remark}
Note that these assumptions do not indicate that the real model cannot be in the candidate model set.
If the $m^{th}$ model is true model, we have $$u_{ki(m)}^{b}=\mu_{i}+b_{k}-\left(1,\boldsymbol{x}_{i(m)}^{T}\right)\left(b_{k(m)}^{*},\bbeta_{(m)}^{*T}\right)^{T}=0$$
It implies that $-u_{ki(m)}=b_{k}$. The Part \emph{(ii)-(iii)} of Assumption \emph{\ref{a2}} are still hold by the condition $f(b_{k})>0$, which is a general assumption in \emph{\cite{koenker1978quantile}}.
\end{remark}

The following theorem studies the asymptotic behavior of estimators.
\begin{lem}
\label{lem:distr}
Suppose Assumptions \emph{\ref{a1}-\ref{a3}} hold. Let $C_{(m)}$ denote an $l_{m}\times (\tilde{k}_{m}+K)$ matrix such that $C_{0}=\mathrm{lim}_{n\rightarrow\infty}C_{(m)}C_{(m)}^{T}$ exists and is positive definite, where $l_{m}\in\{1,2,\cdots,\tilde{k}_{m}+K\}$. Then\\
\emph{(i)} $\left\|\left(\hat{b}_{1(m)},\cdots,\hat{b}_{K(m)},\hat{\bbeta}_{(m)}^{T}\right)^{T}-\left(b_{1(m)}^{*},\cdots,b_{K(m)}^{*},\bbeta_{(m)}^{*T}\right)^{T}\right\|=O_{p}\left(\sqrt{n^{-1}(K+\tilde{k}_{m})}\right) ;$\\
\emph{(ii)}
$\sqrt{n}C_{(m)}V_{(m)}^{-1/2}\left\{\left(\hat{b}_{1(m)},\cdots,\hat{b}_{K(m)},\hat{\bbeta}_{(m)}^{T}\right)^{T}-\left(b_{1(m)}^{*},\cdots,b_{K(m)}^{*},\bbeta_{(m)}^{*T}\right)^{T}\right\}\overset{d}{\rightarrow } N(0,C_{0}).$
\end{lem}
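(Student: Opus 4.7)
The plan is to combine the Knight identity approach (standard in non-smooth M-estimation) with the growing-dimension convexity argument of Pollard/He--Shao, adapted to the composite loss where we have $K$ intercepts plus a common slope vector $\bbeta_{(m)}$.

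First I would reparameterize. Writing $\theta_{(m)}^{*} = (b_{1(m)}^{*},\ldots,b_{K(m)}^{*},\bbeta_{(m)}^{*T})^{T}$ and $\hat\theta_{(m)}$ for the CQR estimator, set $\boldsymbol{\delta} = \sqrt{n}\,\bar A_{(m)}^{1/2}(\hat\theta_{(m)}-\theta_{(m)}^{*})$ so that the rescaled objective
\[
 Z_n(\boldsymbol{\delta}) = \sum_{k=1}^{K}\sum_{i=1}^{n}\Bigl\{\rho_{\tau_k}\bigl(\varepsilon_{i}+u_{ki(m)} - n^{-1/2}\,\boldsymbol{z}_{ki}^{T}\boldsymbol{\delta}\bigr)-\rho_{\tau_k}\bigl(\varepsilon_{i}+u_{ki(m)}\bigr)\Bigr\}
\]
is convex in $\boldsymbol{\delta}$, where $\boldsymbol{z}_{ki}$ is the appropriate row of $\bar A_{(m)}^{-1/2}$ applied to $(e_k^T,\boldsymbol{x}_{i(m)}^T)^T$. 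The Knight identity $\rho_\tau(u-v)-\rho_\tau(u) = -v\psi_\tau(u) + \int_0^v \{\mathbf{1}\{u\le s\}-\mathbf{1}\{u\le 0\}\}\,ds$ decomposes $Z_n(\boldsymbol{\delta}) = -\boldsymbol{\delta}^T W_n + R_n(\boldsymbol{\delta})$, where $W_n = n^{-1/2}\sum_i \bar A_{(m)}^{-1/2} S_i$ is the score with $S_i$ stacking $\psi_{\tau_k}(\varepsilon_i + u_{ki(m)})$ for the intercept coordinates and $\sum_k \psi_{\tau_k}(\varepsilon_i+u_{ki(m)})\boldsymbol{x}_{i(m)}$ for the slope coordinates.

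Next I would analyze the two pieces separately.
\begin{itemize}
\item The linear piece: by the pseudo-true first order conditions $\mathrm{E}(S_i)=0$, and by Assumption \ref{a2}(iii) together with $\mathrm{E}\|S_i\|^2 = \mathrm{tr}(\bar B_{(m)})$, I obtain $\mathrm{E}\|W_n\|^2 = \mathrm{tr}(\bar A_{(m)}^{-1}\bar B_{(m)}) = O(K+\tilde k_m)$, so $\|W_n\| = O_p(\sqrt{K+\tilde k_m})$.
\item The remainder: take conditional expectations given $\boldsymbol{x}_{i(m)}$ in the double integral of Knight's identity. Using continuity and boundedness of $f_{y\mid \boldsymbol{x}}$ (Assumption \ref{a2}(i)), a second-order Taylor expansion gives $\mathrm{E}\{R_n(\boldsymbol{\delta})\mid \boldsymbol{X}\} = \tfrac12\|\boldsymbol{\delta}\|^{2} + o_p(\|\boldsymbol{\delta}\|^{2})$ for bounded $\boldsymbol{\delta}$. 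To control the stochastic fluctuation $R_n(\boldsymbol{\delta})-\mathrm{E}\{R_n(\boldsymbol{\delta})\mid\boldsymbol{X}\}$ uniformly on $\{\|\boldsymbol{\delta}\|\le L\}$ with growing dimension, I would bound the bracket $\{\mathbf{1}\{\varepsilon_i\le s\}-\mathbf{1}\{\varepsilon_i\le 0\}\}$ in $L_2$ by $|s|^{1/2}$ and invoke a Bernstein/maximal inequality, controlling the relevant entropy by the dimension $K+\tilde k_m$. The moment bounds $\mathrm{E}(x_{ij}^{8})\le c_{\boldsymbol{x}}$ in Assumption \ref{a1}(iii) and the rate conditions in Assumption \ref{a3}(i)--(ii) are used exactly here.
\end{itemize}

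Once the quadratic approximation $Z_n(\boldsymbol{\delta}) = -\boldsymbol{\delta}^T W_n + \tfrac12\|\boldsymbol{\delta}\|^{2} + o_p(1)$ is established uniformly on compacts, the convexity lemma of Pollard (or Hjort--Pollard) forces the minimizer to satisfy $\hat{\boldsymbol{\delta}} = W_n + o_p(1)$. This gives the Bahadur representation
\[
 \sqrt{n}(\hat\theta_{(m)}-\theta_{(m)}^{*}) = \bar A_{(m)}^{-1}\cdot n^{-1/2}\sum_{i=1}^{n} S_i + o_p(1),
\]
and part (i) follows immediately from $\|W_n\|=O_p(\sqrt{K+\tilde k_m})$ together with the eigenvalue bound $\lambda_{\min}(\bar A_{(m)})\ge \underline{c}_{\bar A_{(m)}}>0$.

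For part (ii), I would apply the Lindeberg--Feller CLT to the $\mathbb{R}^{l_m}$-valued triangular array
\[
 T_{ni} = n^{-1/2}\,C_{(m)} V_{(m)}^{-1/2}\bar A_{(m)}^{-1} S_i, \qquad i=1,\ldots,n.
\]
Mean zero and covariance converging to $C_0$ are immediate from $V_{(m)}=\bar A_{(m)}^{-1}\bar B_{(m)}\bar A_{(m)}^{-1}$ and the hypothesis $C_{(m)}C_{(m)}^T\to C_0$. The Lindeberg condition is where the moment assumption $\sup_j \mathrm{E}(x_{ij}^{8})\le c_{\boldsymbol{x}}$ and the eigenvalue lower bounds on $\bar A_{(m)},\bar B_{(m)}$ are needed: because $\|S_i\|$ is bounded polynomially in $\|\boldsymbol{x}_{i(m)}\|$ (the $\psi_{\tau_k}$ factors are bounded by $1$), a direct fourth-moment computation controlled by $\tilde k_m$ and $K$ through the rate conditions in Assumption \ref{a3}(i) verifies negligibility of the truncated tails. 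Combining this CLT with the Bahadur representation and Slutsky yields part (ii).

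The main obstacle is the uniform quadratic approximation for $R_n(\boldsymbol{\delta})$ when $K+\tilde k_m$ diverges with $n$: the non-smoothness of $\rho_{\tau_k}$ together with the growing dimension of $\boldsymbol{\delta}$ means that a pointwise argument does not suffice, and one has to bound the supremum of an empirical process indexed by $\boldsymbol{\delta}$ using the $\log n$ and dimension factors exactly as they appear in Assumption \ref{a3}. The composite structure compounds this because one is summing $K$ such empirical-process increments, which is why the rate conditions carry an explicit $K$.
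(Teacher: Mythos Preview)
Your overall strategy---Knight's identity, quadratic expansion of the remainder, Lindeberg--Feller for the projected score---matches the paper's. The difference lies in how you package the argmin argument, and there is a real issue there.

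You invoke ``the convexity lemma of Pollard (or Hjort--Pollard)'' to pass from the uniform quadratic approximation on compacts to $\hat{\boldsymbol\delta}=W_n+o_p(1)$. That lemma is formulated for a fixed-dimensional parameter with $\|W_n\|=O_p(1)$; here $\|W_n\|=O_p(\sqrt{K+\tilde k_m})$ diverges, so the minimizer is not in any fixed compact and the classical statement does not apply. You flag this in your last paragraph, but the proposed fix (``uniform on $\{\|\boldsymbol\delta\|\le L\}$'') still has fixed $L$. What you would actually need is a full-norm Bahadur representation over a ball of radius $\sim\sqrt{K+\tilde k_m}$ in $\mathbb R^{K+\tilde k_m}$, and it is not clear Assumption~\ref{a3} gives you that much.

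The paper sidesteps this in two ways. For (i) it uses the direct ball argument: decompose $Q_{n(m)}(\theta^*+a_n\boldsymbol v)-Q_{n(m)}(\theta^*)$ via Knight's identity into a linear score term, its conditional expectation (the positive quadratic), and a centered fluctuation; then show the quadratic dominates on $\|\boldsymbol v\|=C$ for $C$ large. This gives the rate without any Bahadur representation. For (ii) the paper never establishes $\hat{\boldsymbol\delta}=W_n+o_p(1)$ in the full norm; instead it proves stochastic equicontinuity of the score process $V_{(m)}(\Delta)$ only in a \emph{weighted} norm $\|\cdot\|_{c_{(m)}}$ with $\|c_{(m)}\|\le L_c\,\underline c_{\bar B_{(m)}}^{-1/2}$, uniformly over $\|\Delta\|\le L\sqrt{K+\tilde k_m}$, via a grid--truncation--Bernstein argument. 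Since $C_{(m)}V_{(m)}^{-1/2}\bar A_{(m)}^{-1}$ is an admissible $c_{(m)}$, this projected control suffices for the $l_m$-dimensional CLT. Your Lindeberg verification (fourth moments of $S_i$ controlled by $K^4\tilde k_m^2$, killed by Assumption~\ref{a3}(i)) is exactly what the paper does.

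In short: keep your Knight-identity decomposition and your CLT step, but replace the appeal to the fixed-dimension convexity lemma by the paper's two devices---the ball argument for (i), and the weighted-norm stochastic equicontinuity on a growing ball for (ii).
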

\begin{proof}
See \ref{sec:distr}.
\end{proof}
\begin{lem}
\label{lem:order}
Suppose Assumptions \emph{\ref{a1}-\ref{a3}} hold. Then\\
\emph{(i)} $\mathrm{max}_{1\leq i\leq n}\mathrm{max}_{1\leq m\leq M}\sum_{k=1}^{K}\left\|\left(\hat{b}_{ki(m)}-b_{k(m)}^{*},\left(\hat{\bbeta}_{i(m)}-\bbeta_{(m)}^{*}\right)^{T}\right)^{T}\right\|^{2}=O_{p}\left(n^{-1}K\bar{k}\log n \right);$\\
\emph{(ii)} $\mathrm{max}_{1\leq m\leq M}\sum_{k=1}^{K}\left\|\left(\hat{b}_{k(m)}-b_{k(m)}^{*},\left(\hat{\bbeta}_{(m)}-\bbeta_{(m)}^{*}\right)^{T}\right)^{T}\right\|^{2}=O_{p}\left(n^{-1}K\bar{k}\log n \right).$
\end{lem}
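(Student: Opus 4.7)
The plan is to bootstrap the pointwise $O_p$-rate of Lemma \ref{lem:distr} into a bound that is uniform over the $M$ candidate models (for (ii)), and over both the models and the $n$ leave-one-out samples (for (i)). The engine is a single-model exponential tail inequality whose exponent is calibrated to cancel against the factor $nMn^{-LK\bar{k}\underline{c}_A\underline{c}_{\bar{A}}^2/(2\bar{c}_{\bar{A}}\bar{c}_{\bar{B}})}=o(1)$ in Assumption \ref{a3}(ii); the $nM$ prefactor there is precisely what is needed to swallow a union bound over pairs $(i,m)$.

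For fixed $m$, I would centre the CQR loss and set $L_n(u)=Q_{n(m)}(\theta_{(m)}^*+u)-Q_{n(m)}(\theta_{(m)}^*)$, where $\theta_{(m)}^*=(b_{1(m)}^*,\ldots,b_{K(m)}^*,\bbeta_{(m)}^{*T})^T$. Knight's identity decomposes $L_n(u)=-u^{T}S_n+R_n(u)$, where $S_n$ is a sum of i.i.d.\ mean-zero vectors built from the bounded scores $\psi_{\tau_k}(\varepsilon_i+u_{ki(m)})$ and the regressors $(1,\boldsymbol x_{i(m)}^T)^T$, while $R_n(u)$ is a non-negative remainder whose conditional mean equals $\tfrac12 n u^{T}\bar A_{(m)} u\ge \tfrac12 n\underline{c}_{\bar A}\|u\|^2$ by Assumption \ref{a2}(iii). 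Boundedness of the $\psi_{\tau_k}$ gives a Hoeffding-type tail for each coordinate of $S_n$; a covering argument over the sphere $\{\|u\|=r_n\}\subset\mathbb{R}^{K+\tilde k_m}$ lifts this to $\sup_{\|u\|=1}|u^{T}S_n|$, and together with the convexity of $Q_{n(m)}$ this produces an inequality of the form
\[
\Pr\bigl(\|\hat\theta_{(m)}-\theta_{(m)}^*\|^2>r_n^2\bigr)\le n^{-LK\bar{k}\underline{c}_A\underline{c}_{\bar{A}}^2/(2\bar{c}_{\bar{A}}\bar{c}_{\bar{B}})},
\]
with $r_n^2$ of order $\bar k\log n/n$ (the constant being absorbed into $L$). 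Union-bounding over $m=1,\ldots,M$ for (ii), and over $(i,m)$ for (i)---noting that each jackknife estimator is just the CQR estimator on $n-1$ i.i.d.\ observations and so obeys the same one-model tail---Assumption \ref{a3}(ii) forces the aggregated failure probability to be $o(1)$, yielding $\max_{i,m}\|\hat\theta_{i(m)}-\theta_{(m)}^*\|^2=O_p(\bar k\log n/n)$. The stated conclusions then follow from the elementary identity
\[
\sum_{k=1}^{K}\Bigl\|\bigl(\hat b_{k(m)}-b_{k(m)}^{*},(\hat\bbeta_{(m)}-\bbeta_{(m)}^{*})^{T}\bigr)^{T}\Bigr\|^{2}
=\sum_{k=1}^{K}(\hat b_{k(m)}-b_{k(m)}^{*})^{2}+K\|\hat\bbeta_{(m)}-\bbeta_{(m)}^{*}\|^{2}\le K\|\hat\theta_{(m)}-\theta_{(m)}^{*}\|^{2},
\]
and its leave-one-out analogue.

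The chief obstacle will be uniform control of the Knight remainder $R_n(u)$ on the shrinking ball $\{\|u\|\le r_n\}$ while both $K$ and $\bar k$ diverge. Because the check loss is non-smooth, showing that $R_n(u)$ is pinched by $(1\pm o(1))\tfrac12 n u^{T}\bar A_{(m)}u$ there requires an empirical-process/maximal-inequality argument of the same flavour as in the proof of Lemma \ref{lem:distr}; this is precisely where the growth conditions $\bar c_A\bar k^{4}K^{2}/(n\underline{c}_{\bar B})=o(1)$ and $\bar k^{4}K^{4}(\log n)^{4}/(n\underline{c}_{\bar B}^{2})=o(1)$ in Assumption \ref{a3}(i) are deployed, and where pinning the exponent down to the specific form $-LK\bar k\underline{c}_A\underline{c}_{\bar A}^{2}/(2\bar c_{\bar A}\bar c_{\bar B})$ of Assumption \ref{a3}(ii) is the most delicate bookkeeping step.
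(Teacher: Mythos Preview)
Your overall architecture---single-model exponential tail plus union bound over $(i,m)$, calibrated against the $nM n^{-LK\bar{k}\underline{c}_A\underline{c}_{\bar{A}}^2/(2\bar{c}_{\bar{A}}\bar{c}_{\bar{B}})}=o(1)$ clause of Assumption~\ref{a3}(ii)---is exactly right, and the final reduction via $\sum_{k}\|(\hat b_{k(m)}-b_{k(m)}^*,(\hat\bbeta_{(m)}-\bbeta_{(m)}^*)^T)^T\|^2\le K\|\hat\theta_{(m)}-\theta_{(m)}^*\|^2$ is fine. But the mechanism you propose for the one-model tail is different from the paper's and, as written, has a real gap.

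You want a Hoeffding-type bound on $\sup_{\|u\|=1}|u^{T}S_n|$ coming from boundedness of $\psi_{\tau_k}$. The summands in $S_n$ are not bounded: each coordinate is a product $\psi_{\tau_k}(\varepsilon_i+u_{ki(m)})\,x_{ij(m)}$, and the regressors only satisfy $\sup_j\mathrm{E}|x_{ij}|^8\le c_{\boldsymbol x}$ (Assumption~\ref{a1}(iii)). A straight Hoeffding step therefore fails; one would have to truncate and use Bernstein, and then the exponent you obtain is governed by the range or second moment of $(1,\boldsymbol x_{i(m)}^T)$, not by the variance matrix $V_{(m)}$. That matters because the specific constants $\underline{c}_{\bar A}^{2}/\bar c_{\bar B}$ appearing in Assumption~\ref{a3}(ii) are precisely $1/\lambda_{\max}(V_{(m)})$; a Hoeffding/Bernstein route will not produce that particular combination, so you will not be able to invoke Assumption~\ref{a3}(ii) as stated. (Also, the conditional mean of the Knight remainder is $\tfrac12 n\sum_k v_{k(m)}^{T}A_{k(m)}v_{k(m)}$, not $\tfrac12 n\,u^{T}\bar A_{(m)}u$; the block-diagonal $\bar A_{(m)}$ drops the cross terms between $v_{b_k}$ and $v_\beta$. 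The correct lower bound uses $\underline c_A$, not $\underline c_{\bar A}$.)

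The paper takes a different route that manufactures exactly those constants. Instead of concentrating the empirical loss directly, it works with the \emph{population} loss $\bar Q_{(m)}$: it shows that on the shell $\sum_k\|\cdot\|^2=\delta_n$ one has $\bar Q_{(m)}(\theta)-\bar Q_{(m)}(\theta^*)\ge \tfrac12\underline c_A\delta_n(1+o(1))$ from below and $\le\tfrac12\bar c_A\cdot n^{-1}\mathbb{W}_{i(m)}$ from above, where $\mathbb{W}_{i(m)}=n\sum_k\|(\hat b_{ki(m)}-b_{k(m)}^*,\ldots)\|^2$. Thus the bad event is contained in $\{\mathbb{W}_{i(m)}\ge n\underline c_A\delta_n/\bar c_A\}$. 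Now the asymptotic normality of Lemma~\ref{lem:distr}(ii) is invoked: writing $\tilde\beta_{i(m)}=\sqrt n(C_{(m)}C_{(m)}^T)^{-1/2}C_{(m)}V_{(m)}^{-1/2}(\hat\theta_{i(m)}-\theta_{(m)}^*)\overset{d}{\to}N(0,I_{l_m})$, one has $\mathbb{W}_{i(m)}\le\lambda_{\max}(V_{(m)})\|\tilde\beta_{i(m)}\|^2\le(\bar c_{\bar B}/\underline c_{\bar A}^{2})\|\tilde\beta_{i(m)}\|^2$, and the tail of the limiting $\chi^2$ is handled by a chi-square deviation lemma (Bartlett~1981). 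This is how $\bar c_{\bar B}/\underline c_{\bar A}^{2}$ enters the exponent and lines up with Assumption~\ref{a3}(ii). In short: the paper does not derive the exponential tail from concentration of the score at all; it pulls it from the asymptotic variance already computed in Lemma~\ref{lem:distr} plus a chi-square tail bound. Your proposal would need either to switch to this device or to rework Assumption~\ref{a3}(ii) around whatever constants a Bernstein-plus-covering argument actually delivers.
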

\begin{proof}
See \ref{sec:order}.
\end{proof}
\section{Proof of Lemma \ref{lem:distr}}\label{sec:distr}
(i) Let $a_{n}=\sqrt{n^{-1}(K+\tilde{k}_{m})}$. Let $\boldsymbol{v}_{(m)}=(v_{b_{1(m)}},\cdots,v_{b_{K(m)}},\boldsymbol{v}_{\bbeta(m)}^{T})^{T}\in\mathbb{R}^{\tilde{k}_{m}+K}$ such that $\|\boldsymbol{v}_{(m)}\|= C$ where $C$ is a large enough constant. We want to show that for any given $\varepsilon>0$ there is a large enough constant $C$ such that, for large $n$ we have
\be
&&\mathrm{P}\left(\inf_{\|\boldsymbol{v}_{(m)}\|= C}Q_{n(m)}
\left(\left(b_{1(m)}^{*},\cdots,b_{K(m)}^{*},\bbeta_{(m)}^{*T}\right)^{T}+a_{n}\boldsymbol{v}_{(m)}\right)\right.\non\\
&&\left.~~~~>Q_{n(m)}
\left(\left(b_{1(m)}^{*},\cdots,b_{K(m)}^{*},\bbeta_{(m)}^{*T}\right)^{T}\right)\right)\geq1-\varepsilon.
\ee
This implies that with probability approaching $1$ there is
a local minimum $\left(\hat{b}_{1(m)},\cdots,\hat{b}_{K(m)},\hat{\bbeta}_{(m)}^{T}\right)^{T}$ in the ball $\left\{\left(b_{1(m)}^{*},\cdots,b_{K(m)}^{*},\bbeta_{(m)}^{*T}\right)^{T}+a_{n}\boldsymbol{v}_{(m)}:\|\boldsymbol{v}_{(m)}\|\leq C\right\}$
such that
$$\left\|\left(\hat{b}_{1(m)},\cdots,\hat{b}_{K(m)},\hat{\bbeta}_{(m)}^{T}\right)^{T}-\left(b_{1(m)}^{*},\cdots,b_{K(m)}^{*},\bbeta_{(m)}^{*T}\right)^{T}\right\|=O_{p}(a_{n}).$$ It is also the global minimum
by the convexity of $Q_{n(m)}$.

Let $\boldsymbol{v}_{k(m)}=(v_{b_{k(m)}},\boldsymbol{v}_{\bbeta(m)}^{T})^{T}$ and $u_{ki(m)}=\mu_{i}-\left(1,\boldsymbol{x}_{i(m)}^{T}\right)\left(b_{k(m)}^{*},\bbeta_{(m)}^{*T}\right)^{T}$.
Then by Knight's identity
\be
Z_{n(m)}(\boldsymbol{v}_{(m)})&=&Q_{n(m)}
\left(\left(b_{1(m)}^{*},\cdots,b_{K(m)}^{*},\bbeta_{(m)}^{*T}\right)^{T}+a_{n}\boldsymbol{v}_{(m)}\right)\non\\
&&-Q_{n(m)}
\left(\left(b_{1(m)}^{*},\cdots,b_{K(m)}^{*},\bbeta_{(m)}^{*T}\right)^{T}\right)\non\\
&=&\sum_{i=1}^{n}\sum_{k=1}^{K}
\left[\rho_{\tau_{k}}\left(\varepsilon_{i}+u_{ki(m)}-a_{n}\left(1,\boldsymbol{x}_{i(m)}^{T}\right)\boldsymbol{v}_{k(m)}\right)
-\rho_{\tau_{k}}\left(\varepsilon_{i}+u_{ki(m)}\right)\right]\non\\
&=&-a_{n}\sum_{i=1}^{n}\sum_{k=1}^{K}\psi_{\tau_{k}}\left(\varepsilon_{i}+u_{ki(m)}\right)\left(1,\boldsymbol{x}_{i(m)}^{T}\right)\boldsymbol{v}_{k(m)}
\non\\
&&+\sum_{i=1}^{n}\sum_{k=1}^{K}\int_{0}^{a_{n}\left(1,\boldsymbol{x}_{i(m)}^{T}\right)\boldsymbol{v}_{k(m)}}\alpha_{ki(m)}(s)\mathrm{d}s\non\\
&=&-a_{n}\sum_{i=1}^{n}\sum_{k=1}^{K}\psi_{\tau_{k}}\left(\varepsilon_{i}+u_{ki(m)}\right)\left(1,\boldsymbol{x}_{i(m)}^{T}\right)\boldsymbol{v}_{k(m)}
\non\\
&&+\sum_{i=1}^{n}\sum_{k=1}^{K}\mathrm{E}\left\{\int_{0}^{a_{n}\left(1,\boldsymbol{x}_{i(m)}^{T}\right)\boldsymbol{v}_{k(m)}}\alpha_{ki(m)}(s)\mathrm{d}s|\boldsymbol{x}_{i}\right\}\non\\
\non\\
&&+\sum_{i=1}^{n}\sum_{k=1}^{K}\left[\int_{0}^{a_{n}\left(1,\boldsymbol{x}_{i(m)}^{T}\right)\boldsymbol{v}_{k(m)}}\alpha_{ki(m)}(s)\mathrm{d}s
\right.\non\\
&&\left.-\mathrm{E}\left\{\int_{0}^{a_{n}\left(1,\boldsymbol{x}_{i(m)}^{T}\right)\boldsymbol{v}_{k(m)}}\alpha_{ki(m)}(s)\mathrm{d}s|\boldsymbol{x}_{i}\right\}\right]\non\\
&=&Z_{n(m),1}(\boldsymbol{v}_{(m)})+Z_{n(m),2}(\boldsymbol{v}_{(m)})+Z_{n(m),3}(\boldsymbol{v}_{(m)}),
\ee
where $\alpha_{ki(m)}(s)=\boldsymbol{1}\{\varepsilon_{i}+u_{ki(m)}\leq s\}-\boldsymbol{1}\{\varepsilon_{i}+u_{ki(m)}\leq 0\}$.
The first order condition for the population minimization
problem (\ref{problem:2}) implies that
\be\label{d:0}
\mathrm{E}\left\{\left(\psi_{\tau_{1}}\left(\varepsilon_{i}+u_{1i(m)}\right),\cdots,
\psi_{\tau_{K}}\left(\varepsilon_{i}+u_{Ki(m)}\right),
\left\{\sum_{k=1}^{K}\psi_{\tau_{k}}\left(\varepsilon_{i}+u_{ki(m)}\right)\right\}\boldsymbol{x}_{i(m)}^{T}\right)^{T}\right\}=\boldsymbol{0},
\ee
 where $\psi_{\tau}(\varepsilon)=\tau-\boldsymbol{1}\{\varepsilon\leq0\}$.
By Assumption \ref{a2}(ii),
\be\mathrm{E}|Z_{n(m),1}(\boldsymbol{v}_{(m)})|^{2}
&=&\mathrm{E}\left|-a_{n}\sum_{i=1}^{n}\sum_{k=1}^{K}\psi_{\tau_{k}}\left(\varepsilon_{i}+u_{ki(m)}\right)\left(1,\boldsymbol{x}_{i(m)}^{T}\right)\boldsymbol{v}_{k(m)}
\right|^{2}\non\\
&\leq&a_{n}^{2}K\sum_{i=1}^{n}\sum_{k=1}^{K}\mathrm{E}\left|\psi_{\tau_{k}}\left(\varepsilon_{i}+u_{ki(m)}\right)\left(1,\boldsymbol{x}_{i(m)}^{T}\right)\boldsymbol{v}_{k(m)}
\right|^{2}\non\\
&\leq& a_{n}^{2}K\sum_{i=1}^{n}\sum_{k=1}^{K}\boldsymbol{v}_{k(m)}^{T}
\mathrm{E}\left[\left\{\psi_{\tau_{k}}\left(\varepsilon_{i}+u_{ki(m)}\right)\right\}^{2}\left(1,\boldsymbol{x}_{i(m)}^{T}\right)^{T}\left(1,\boldsymbol{x}_{i(m)}^{T}\right)\right]\boldsymbol{v}_{k(m)}\non\\
&\leq& \frac{\bar{c}_{A(m)}}{c_{f}}nK a_{n}^{2}\sum_{k=1}^{K}\|\boldsymbol{v}_{k(m)}\|^{2}
\ee
and therefore by Chebyshev's inequality
\be \label{Z:1}
Z_{n(m),1}(\boldsymbol{v}_{(m)})
&=&O_{p}\left(a_{n}(nK)^{1/2}\bar{c}_{A(m)}^{1/2}/c_{f}^{1/2}\right)\sqrt{\sum_{k=1}^{K}\|\boldsymbol{v}_{k(m)}\|^{2}}\non\\
&=&O_{p}\left(\left(\bar{c}_{A(m)}K\right)^{1/2}\Big/\left(c_{f}^{1/2}\sqrt{K+\tilde{k}_{m}}\right)n a_{n}^{2}\right)\sqrt{\sum_{k=1}^{K}\|\boldsymbol{v}_{k(m)}\|^{2}}.
\ee

For $Z_{n(m),2}(\boldsymbol{v}_{(m)})$, by Law of iterated expectations and Taylor expansion, we have
\be \label{Z:21}
Z_{n(m),2}(\boldsymbol{v}_{(m)})
&=&\sum_{i=1}^{n}\sum_{k=1}^{K}\left\{\int_{0}^{a_{n}\left(1,\boldsymbol{x}_{i(m)}^{T}\right)\boldsymbol{v}_{k(m)}}
F(-u_{ki(m)}+s|\boldsymbol{x}_{i})-F(-u_{ki(m)}|\boldsymbol{x}_{i})\mathrm{d}s\right\}\non\\
&=&a_{n}\sum_{i=1}^{n}\sum_{k=1}^{K}\left\{\int_{0}^{\left(1,\boldsymbol{x}_{i(m)}^{T}\right)\boldsymbol{v}_{k(m)}}
F(-u_{ki(m)}+a_{n}t|\boldsymbol{x}_{i})-F(-u_{ki(m)}|\boldsymbol{x}_{i})\mathrm{d}t\right\}\non\\
&=&a_{n}\sum_{i=1}^{n}\sum_{k=1}^{K}\left[\int_{0}^{\left(1,\boldsymbol{x}_{i(m)}^{T}\right)\boldsymbol{v}_{k(m)}}
\left\{f(-u_{ki(m)}|\boldsymbol{x}_{i})a_{n}t\right.\right.\non\\
&&\left.\left.+f'(-u_{ki(m)}+\iota_{1} a_{n}t|\boldsymbol{x}_{i})a_{n}^{2}t^{2}\right\}\mathrm{d}t\right],
\ee
where $\iota_{1}\in (0,1)$.
By Assumption \ref{a2}(ii), we have
\be
&&\mathrm{E}\left\{\sum_{k=1}^{K}\boldsymbol{v}_{k(m)}^{T}f(-u_{ki(m)}|\boldsymbol{x}_{i})\left(1,\boldsymbol{x}_{i(m)}^{T}\right)^{T}\left(1,\boldsymbol{x}_{i(m)}^{T}\right)\boldsymbol{v}_{k(m)}\right\}\non\\
&=&\sum_{k=1}^{K}\boldsymbol{v}_{k(m)}^{T}\mathrm{E}\left\{f(-u_{ki(m)}|\boldsymbol{x}_{i})\left(1,\boldsymbol{x}_{i(m)}^{T}\right)^{T}\left(1,\boldsymbol{x}_{i(m)}^{T}\right)\right\}\boldsymbol{v}_{k(m)}\non\\
&=&\sum_{k=1}^{K}\boldsymbol{v}_{k(m)}^{T}A_{k(m)}\boldsymbol{v}_{k(m)}\non\\
&\leq&\bar{c}_{A_{(m)}}\sum_{k=1}^{K}\left\|\boldsymbol{v}_{k(m)}\right\|^{2}.
\ee
Then by Law of large numbers, we obtain
\be
&&n^{-1}\sum_{i=1}^{n}\left\{\sum_{k=1}^{K}\boldsymbol{v}_{k(m)}^{T}f(-u_{ki(m)}|\boldsymbol{x}_{i})\left(1,\boldsymbol{x}_{i(m)}^{T}\right)^{T}\left(1,\boldsymbol{x}_{i(m)}^{T}\right)\boldsymbol{v}_{k(m)}\right\}\non\\
&=&\mathrm{E}\left\{\sum_{k=1}^{K}\boldsymbol{v}_{k(m)}^{T}f(-u_{ki(m)}|\boldsymbol{x}_{i})\left(1,\boldsymbol{x}_{i(m)}^{T}\right)^{T}\left(1,\boldsymbol{x}_{i(m)}^{T}\right)\boldsymbol{v}_{k(m)}\right\}+o_{p}(1)\non\\
&=&\sum_{k=1}^{K}\boldsymbol{v}_{k(m)}^{T}\mathrm{E}\left\{f(-u_{ki(m)}|\boldsymbol{x}_{i})\left(1,\boldsymbol{x}_{i(m)}^{T}\right)^{T}\left(1,\boldsymbol{x}_{i(m)}^{T}\right)\right\}\boldsymbol{v}_{k(m)}+o_{p}(1)\non\\
&=&\sum_{k=1}^{K}\boldsymbol{v}_{k(m)}^{T}A_{k(m)}\boldsymbol{v}_{k(m)}+o_{p}(1)\non\\
&\geq&\underline{c}_{A_{(m)}}\sum_{k=1}^{K}\left\|\boldsymbol{v}_{k(m)}\right\|^{2}+o_{p}(1).
\ee
Further, we derive that
\be
&&\frac{a_{n}^{3}\overline{c}_{f}
\sum_{k=1}^{K}\sum_{i=1}^{n}\frac{1}{3}\left\{\left(1,\boldsymbol{x}_{i(m)}^{T}\right)\boldsymbol{v}_{k(m)}\right\}^{3}}
{a_{n}^{2}
\sum_{k=1}^{K}\sum_{i=1}^{n}\frac{1}{2}f(-u_{ki(m)}|\boldsymbol{x}_{i})\left\{\left(1,\boldsymbol{x}_{i(m)}^{T}\right)\boldsymbol{v}_{k(m)}\right\}^{2}}\non\\
&=&\frac{\overline{c}_{f}a_{n}
\sum_{k=1}^{K}\sum_{i=1}^{n}\left\{\left(1,\boldsymbol{x}_{i(m)}^{T}\right)\boldsymbol{v}_{k(m)}\right\}^{3}}
{6\sum_{k=1}^{K}\sum_{i=1}^{n}f(-u_{ki(m)}|\boldsymbol{x}_{i})\left\{\left(1,\boldsymbol{x}_{i(m)}^{T}\right)\boldsymbol{v}_{k(m)}\right\}^{2}}\non\\
&\leq&\frac{\overline{c}_{f}a_{n}
\sum_{k=1}^{K}\sum_{i=1}^{n}\left\{\left\|\left(1,\boldsymbol{x}_{i(m)}^{T}\right)^{T}\right\|^{3}\left\|\boldsymbol{v}_{k(m)}\right\|^{3}\right\}}
{6\sum_{k=1}^{K}\sum_{i=1}^{n}\boldsymbol{v}_{k(m)}^{T}f(-u_{ki(m)}|\boldsymbol{x}_{i})\left(1,\boldsymbol{x}_{i(m)}^{T}\right)^{T}\left(1,\boldsymbol{x}_{i(m)}^{T}\right)\boldsymbol{v}_{k(m)}}\non\\
&=&\frac{\overline{c}_{f}a_{n}
n^{-1}\sum_{i=1}^{n}\left\|\left(1,\boldsymbol{x}_{i(m)}^{T}\right)^{T}\right\|^{3}\sum_{k=1}^{K}\left\|\boldsymbol{v}_{k(m)}\right\|^{3}}
{6n^{-1}\sum_{i=1}^{n}\left\{\sum_{k=1}^{K}\boldsymbol{v}_{k(m)}^{T}f(-u_{ki(m)}|\boldsymbol{x}_{i})\left(1,\boldsymbol{x}_{i(m)}^{T}\right)^{T}\left(1,\boldsymbol{x}_{i(m)}^{T}\right)\boldsymbol{v}_{k(m)}\right\}}\non\\
&\leq&\frac{\overline{c}_{f}a_{n}
n^{-1}\sum_{i=1}^{n}\left\|\left(1,\boldsymbol{x}_{i(m)}^{T}\right)^{T}\right\|^{3}\sum_{k=1}^{K}\left\|\boldsymbol{v}_{k(m)}\right\|^{3}}
{6\underline{c}_{A_{(m)}}\sum_{k=1}^{K}\left\|\boldsymbol{v}_{k(m)}\right\|^{2}+o_{p}(1)}\non\\
&\leq&\frac{\overline{c}_{f}a_{n}
n^{-1}\sum_{i=1}^{n}\left\|\left(1,\boldsymbol{x}_{i(m)}^{T}\right)^{T}\right\|^{3}\max_{1\leq k\leq K}\left\|\boldsymbol{v}_{k(m)}\right\|\cdot\sum_{k=1}^{K}\left\|\boldsymbol{v}_{k(m)}\right\|^{2}}
{6\underline{c}_{A_{(m)}}\sum_{k=1}^{K}\left\|\boldsymbol{v}_{k(m)}\right\|^{2}+o_{p}(1)}.
\ee
By the part (iii) of Assumption \ref{a1} and H\"{o}lder's inequalities,
\be
&&\mathrm{E}\left\|\left(1,\boldsymbol{x}_{i(m)}^{T}\right)^{T}\right\|^{3}
=\mathrm{E}\left(1+\sum_{j=1}^{\tilde{k}_{m}}x_{ij(m)}^{2}\right)^{3/2}\non\\
&\leq&\left(1+\tilde{k}_{m}\right)^{1/2}
\mathrm{E}\left(1+\sum_{j=1}^{\tilde{k}_{m}}x_{ij(m)}^{3}\right)\non\\
&\leq&\left(1+\tilde{k}_{m}\right)^{3/2}c_{\boldsymbol{x}}.\non
\ee
By $\|\boldsymbol{v}_{(m)}\|= C>0$, we have $\max_{1\leq k\leq K}\left\|\boldsymbol{v}_{k(m)}\right\|\geq C$
and $\sum_{k=1}^{K}\left\|\boldsymbol{v}_{k(m)}\right\|^{2}\geq C^{2}$.
Therefore,
\be
&&\frac{a_{n}^{3}\overline{c}_{f}
\sum_{k=1}^{K}\sum_{i=1}^{n}\frac{1}{3}\left\{\left(1,\boldsymbol{x}_{i(m)}^{T}\right)\boldsymbol{v}_{k(m)}\right\}^{3}}
{a_{n}^{2}
\sum_{k=1}^{K}\sum_{i=1}^{n}\frac{1}{2}f(-u_{ki(m)}|\boldsymbol{x}_{i})\left\{\left(1,\boldsymbol{x}_{i(m)}^{T}\right)\boldsymbol{v}_{k(m)}\right\}^{2}}\non\\
&\leq&\frac{\overline{c}_{f}a_{n}
n^{-1}\sum_{i=1}^{n}\left\|\left(1,\boldsymbol{x}_{i(m)}^{T}\right)^{T}\right\|^{3}\cdot C}
{6\underline{c}_{A(m)}+o_{p}(1)}\non\\
&=&O_{p}\left(\frac{a_{n}\tilde{k}_{m}^{3/2}}{\underline{c}_{A(m)}}\right)
=O_{p}\left(\frac{\left(K+\tilde{k}_{m}\right)^{1/2}\tilde{k}_{m}^{3/2}}{n^{1/2}\underline{c}_{A(m)}}\right)
=o_{p}(1)
\ee
Then because the first order derivative of $f_{y\mid \boldsymbol{x}}(\cdot\mid \boldsymbol{x}_{i})$ is bounded above by a finite constant $\overline{c}_{f}$, we have
\be\label{Z:22}
&&a_{n}\sum_{i=1}^{n}\sum_{k=1}^{K}\left[\int_{0}^{\left(1,\boldsymbol{x}_{i(m)}^{T}\right)\boldsymbol{v}_{k(m)}}
\left\{f'(-u_{ki(m)}+\iota_{1} a_{n}t|\boldsymbol{x}_{i})a_{n}^{2}t^{2}\right\}\mathrm{d}t\right]\non\\
&\leq&a_{n}^{3}\overline{c}_{f}
\sum_{k=1}^{K}\sum_{i=1}^{n}\frac{1}{3}\left\{\left(1,\boldsymbol{x}_{i(m)}^{T}\right)\boldsymbol{v}_{k(m)}\right\}^{3}\non\\ &=&o_{p}\left[a_{n}^{2}
\sum_{k=1}^{K}\sum_{i=1}^{n}\frac{1}{2}f(-u_{ki(m)}|\boldsymbol{x}_{i})\left\{\left(1,\boldsymbol{x}_{i(m)}^{T}\right)\boldsymbol{v}_{k(m)}\right\}^{2}\right].
\ee
Therefore, by Law of large numbers and Assumption \ref{a2}(ii), we have
\be\label{Z:23}
Z_{n(m),2}(\boldsymbol{v}_{(m)})
&=&a_{n}^{2}
\sum_{k=1}^{K}\sum_{i=1}^{n}\frac{1}{2}f(-u_{ki(m)}|\boldsymbol{x}_{i})\left\{\left(1,\boldsymbol{x}_{i(m)}^{T}\right)\boldsymbol{v}_{k(m)}\right\}^{2}\{1+o_{p}(1)\}\non\\
&=&a_{n}^{2}
\left[\sum_{k=1}^{K}\frac{1}{2}\boldsymbol{v}_{k(m)}^{T}\left\{\sum_{i=1}^{n}f(-u_{ki(m)}\mid \boldsymbol{x}_{i})\left(1,\boldsymbol{x}_{i(m)}^{T}\right)^{T}\left(1,\boldsymbol{x}_{i(m)}^{T}\right)\right\}\boldsymbol{v}_{k(m)}\right]\{1+o_{p}(1)\}\non\\
&=&na_{n}^{2}
\left(\sum_{k=1}^{K}\frac{1}{2}\boldsymbol{v}_{k(m)}^{T}A_{k(m)}\boldsymbol{v}_{k(m)}\right)\{1+o_{p}(1)\}\non\\
&=& O_{p}\left(\underline{c}_{A_{m}}n a_{n}^{2}\sum_{k=1}^{K}\|\boldsymbol{v}_{k(m)}\|^{2}\right).
\ee

Noting that $\mathrm{E}(Z_{n(m),3}(\boldsymbol{v}_{(m)})|\boldsymbol{X})=0$ and by Assumption \ref{a2}(ii),
\be
&&\mathrm{Var}(Z_{n(m),3}(\boldsymbol{v}_{(m)})|\boldsymbol{X})\non\\
&=&\mathrm{E}\left(\sum_{i=1}^{n}\sum_{k=1}^{K}\left[\int_{0}^{a_{n}\left(1,\boldsymbol{x}_{i(m)}^{T}\right)\boldsymbol{v}_{k(m)}}\alpha_{ki(m)}(s)\mathrm{d}s
-\mathrm{E}\left\{\int_{0}^{a_{n}\left(1,\boldsymbol{x}_{i(m)}^{T}\right)\boldsymbol{v}_{k(m)}}\alpha_{ki(m)}(s)\mathrm{d}s|\boldsymbol{x}_{i}\right\}\right]
\right)^{2}\non\\
&=&\sum_{i=1}^{n}\mathrm{E}\left(\sum_{k=1}^{K}\left[\int_{0}^{a_{n}\left(1,\boldsymbol{x}_{i(m)}^{T}\right)\boldsymbol{v}_{k(m)}}\alpha_{ki(m)}(s)\mathrm{d}s
-\mathrm{E}\left\{\int_{0}^{a_{n}\left(1,\boldsymbol{x}_{i(m)}^{T}\right)\boldsymbol{v}_{k(m)}}\alpha_{ki(m)}(s)\mathrm{d}s|\boldsymbol{x}_{i}\right\}\right]
\right)^{2}\non\\
&\leq&\sum_{i=1}^{n}\mathrm{E}\left[\sum_{k=1}^{K}\left\{\int_{0}^{a_{n}\left(1,\boldsymbol{x}_{i(m)}^{T}\right)\boldsymbol{v}_{k(m)}}\alpha_{ki(m)}(s)\mathrm{d}s|\boldsymbol{x}_{i}\right\}\right]^{2}\non\\
&\leq&K a_{n}^{2}\sum_{k=1}^{K}\sum_{i=1}^{n}\left\{\left(1,\boldsymbol{x}_{i(m)}^{T}\right)\boldsymbol{v}_{k(m)}\right\}^{2}\non\\
&=&n K a_{n}^{2}\sum_{k=1}^{K}\boldsymbol{v}_{k(m)}^{T}\left[\mathrm{E}\left\{\left(1,\boldsymbol{x}_{i(m)}^{T}\right)^{T}\left(1,\boldsymbol{x}_{i(m)}^{T}\right)\right\}+o_{p}(1)\right]\boldsymbol{v}_{k(m)}\non\\
&=&O_{p}\left(\frac{\bar{c}_{A(m)}}{c_{f}}n K a_{n}^{2}\sum_{k=1}^{K}\|\boldsymbol{v}_{k(m)}\|^{2}\right).
\ee
Then we have
\be \label{Z:3}
Z_{n(m),3}(\boldsymbol{v}_{(m)})
&=&O_{p}( a_{n}n^{1/2} K^{1/2}\bar{c}_{A(m)}^{1/2}/c_{f}^{1/2})\sqrt{\sum_{k=1}^{K}\|\boldsymbol{v}_{k(m)}\|^{2}}\non\\
&=&O_{p}\left(\left(\bar{c}_{A(m)}K\right)^{1/2}\Big/\left(c_{f}^{1/2}\sqrt{K+\tilde{k}_{m}}\right)n a_{n}^{2}\right)\sqrt{\sum_{k=1}^{K}\|\boldsymbol{v}_{k(m)}\|^{2}}.
\ee
Observe that $\bar{c}_{A(m)}K/(K+\tilde{k}_{m})=O(\underline{c}_{A(m)}^{2})$ under Assumption \ref{a2}(iv), By (\ref{Z:1})-(\ref{Z:3}) and allowing $\sum_{k=1}^{K}\|\boldsymbol{v}_{k(m)}\|^{2}$ to be sufficiently
large, both $Z_{n(m),1}(\boldsymbol{v}_{(m)})$ and $Z_{n(m),3}(\boldsymbol{v}_{(m)})$ are dominated by $Z_{n(m),2}(\boldsymbol{v}_{(m)})$ which
is positive with probability approaching $1$ . This implies that
$Z_{n(m)}(\boldsymbol{v}_{(m)})>0$ with probability approaching $1$. This proves (i).

(ii) Let
\be
\hat{\Delta}_{k(m)}&=&\sqrt{n}\left\{\left(\hat{b}_{k(m)},\hat{\bbeta}_{(m)}^{T}\right)^{T}-\left(b_{k(m)}^{*},\bbeta_{(m)}^{*T}\right)^{T}\right\},\non\\
\hat{\Delta}_{(m)}&=&\sqrt{n}\left\{\left(\hat{b}_{1(m)},\cdots,\hat{b}_{K(m)},\hat{\bbeta}_{(m)}^{T}\right)^{T}-\left(b_{1(m)}^{*},\cdots,b_{K(m)}^{*},\bbeta_{(m)}^{*T}\right)^{T}\right\},\non\\
\Delta_{k(m)}&=&\sqrt{n}\left\{\left(b_{k(m)},\bbeta_{(m)}^{T}\right)^{T}-\left(b_{k(m)}^{*},\bbeta_{(m)}^{*T}\right)^{T}\right\},\non\\
\Delta_{(m)}&=&\sqrt{n}\left\{\left(b_{1(m)},\cdots,b_{K(m)},\bbeta_{(m)}^{T}\right)^{T}-\left(b_{1(m)}^{*},\cdots,b_{K(m)}^{*},\bbeta_{(m)}^{*T}\right)^{T}\right\},\non\\
\Delta&=&\sqrt{n}\left\{\left(b_{1},\cdots,b_{K},\bbeta^{T}\right)^{T}-\left(b_{1}^{*},\cdots,b_{K}^{*},\bbeta^{*T}\right)^{T}\right\}\non\\ \text{and}\non\\
\Delta_{k}&=&\sqrt{n}\left\{\left(b_{k},\bbeta^{T}\right)^{T}-\left(b_{k}^{*},\bbeta^{*T}\right)^{T}\right\}.\non
\ee
It follows that $\hat{\Delta}_{(m)}=\mathop{\arg\min}_{\Delta_{(m)}}
\sum_{i=1}^{n}\sum_{k=1}^{K}\rho_{\tau_{k}}\left(y_{i}-\left(1,\boldsymbol{x}_{i(m)}^{T}\right)\left\{\left(b_{k(m)}^{*},\bbeta_{(m)}^{*T}\right)^{T}+n^{-\frac{1}{2}}\Delta_{k(m)}\right\}\right)$.
Let $V_{(m)}(\Delta)=n^{-\frac{1}{2}}\sum_{i=1}^{n}
\left(\psi_{\tau_{1}}\left(\varepsilon_{1i(m)}(\Delta_{1})\right),\cdots,
\psi_{\tau_{K}}\left(\varepsilon_{Ki(m)}(\Delta_{K})\right),
\left\{\sum_{k=1}^{K}\psi_{\tau_{k}}\left(\varepsilon_{ki(m)}(\Delta_{k})\right)\right\}\boldsymbol{x}_{i(m)}^{T}\right)^{T}$
and $\bar{V}_{(m)}(\Delta)=\mathrm{E}\left[V_{(m)}(\Delta)\right]$, where $\varepsilon_{ki(m)}(\Delta_{k})=y_{i}-\left(1,\boldsymbol{x}_{i(m)}^{T}\right)\left\{\left(b_{k(m)}^{*},\bbeta_{(m)}^{*T}\right)^{T}+n^{-\frac{1}{2}}\Delta_{k}\right\}$. Define the weight norm $\|\cdot\|_{c_{(m)}}$ by $\|A\|_{c_{(m)}}=\|c_{(m)}A\|$, where $c_{(m)}$ is an an arbitrary $l_{m}\times(\tilde{k}_{m}+K)$ matrix with $\|c_{(m)}\|\leq \underline{c}_{\bar{B}_{(m)}}^{-\frac{1}{2}}L_{c}$
for a large constant $L_{c}<\infty$. We want to show that for any large
constant $L<\infty$,
\be\label{V}
\sup_{\left\|\Delta\right\|\leq\sqrt{\tilde{k}_{m}+K}L}
\left\|V_{(m)}(\Delta)-V_{(m)}(0)-\bar{V}_{(m)}(\Delta)+\bar{V}_{(m)}(0)\right\|_{\boldsymbol{c}_{(m)}}=o_{p}(1),
\ee
\be\label{V:bar}
\sup_{\left\|\Delta\right\|\leq\sqrt{\tilde{k}_{m}+K}L}
\left\|\bar{V}_{(m)}(\Delta)-\bar{V}_{(m)}(0)+A_{(m)}\Delta\right\|_{\boldsymbol{c}_{(m)}}=o_{p}(1),
\ee
\be\label{V:hat}
\left\|V_{(m)}(\hat{\Delta}_{(m)})\right\|_{\boldsymbol{c}_{(m)}}=o_{p}(1).
\ee
(\ref{V})-(\ref{V:bar}) and the result in part (i) imply that
$\left\|V_{(m)}(\hat{\Delta}_{(m)})-V_{(m)}(0)+A_{(m)}\hat{\Delta}_{(m)}\right\|_{c_{(m)}}=o_{p}(1)$
and consequently we have by Assumption \ref{a2}(ii)-(iii),
\be
\hat{\Delta}_{(m)}&=&\sqrt{n}\left\{\left(\hat{b}_{1(m)},\cdots,\hat{b}_{K(m)},\hat{\bbeta}_{(m)}^{T}\right)^{T}-\left(b_{1(m)}^{*},\cdots,b_{K(m)}^{*},\bbeta_{(m)}^{*T}\right)^{T}\right\}\non\\
&=&A_{(m)}^{-1}V_{(m)}(0)+A_{(m)}^{-1}V_{(m)}(\hat{\Delta}_{(m)})+A_{(m)}^{-1}R_{(m)},
\ee and
\be
C_{(m)}V_{(m)}^{-\frac{1}{2}}\hat{\Delta}_{(m)}
&=&\sqrt{n}C_{(m)}V_{(m)}^{-\frac{1}{2}}\left\{\left(\hat{b}_{1(m)},\cdots,\hat{b}_{K(m)},\hat{\bbeta}_{(m)}^{T}\right)^{T}-\left(b_{1(m)}^{*},\cdots,b_{K(m)}^{*},\bbeta_{(m)}^{*T}\right)^{T}\right\}\non\\
&=&C_{(m)}V_{(m)}^{-\frac{1}{2}}A_{(m)}^{-1}V_{(m)}(0)+C_{(m)}V_{(m)}^{-\frac{1}{2}}A_{(m)}^{-1}V_{(m)}(\hat{\Delta}_{(m)})+C_{(m)}V_{(m)}^{-\frac{1}{2}}A_{(m)}^{-1}R_{(m)}\non\\
&=&T_{1(m)}+T_{2(m)}+T_{3(m)},\text{say},
\ee
where $\|R_{(m)}\|_{c_{(m)}}=o_{p}(1)$ for any $c_{(m)}$ with $\|c_{(m)}\|\leq \underline{c}_{\bar{B}_{(m)}}^{-\frac{1}{2}}L_{c}$, $V_{(m)}^{\frac{1}{2}}$ denotes the symmetric square root of $V_{(m)}$ and $V_{(m)}^{-\frac{1}{2}}$ the inverse of
$V_{(m)}^{\frac{1}{2}}$.

Let
\be
\boldsymbol\gamma_{ni}&=&n^{-\frac{1}{2}}C_{(m)}V_{(m)}^{-\frac{1}{2}}A_{(m)}^{-1}\tilde{\boldsymbol\gamma}_{ni},\ee
where $\tilde{\boldsymbol\gamma}_{ni}=
\left(\psi_{\tau_{1}}\left(\varepsilon_{i}+u_{1i(m)}\right),\cdots,\psi_{\tau_{K}}\left(\varepsilon_{i}+u_{Ki(m)}\right),
\sum_{k=1}^{K}\psi_{\tau_{k}}\left(\varepsilon_{i}+u_{ki(m)}\right)
\boldsymbol{x}_{i(m)}^{T}\right)^{T}$,
then $T_{1(m)}=\sum_{i=1}^{n}\boldsymbol\gamma_{ni}$. Noting that $\mathrm{E}(\boldsymbol\gamma_{ni})=0$ by (\ref{d:0}), we have
\be
\mathrm{Var}(T_{1(m)})&=&\sum_{i=1}^{n}\mathrm{Var}(\boldsymbol\gamma_{ni})\non\\
&=&n^{-1}\sum_{i=1}^{n}C_{(m)}V_{(m)}^{-\frac{1}{2}}A_{(m)}^{-1}
\mathrm{E}\left(\tilde{\boldsymbol\gamma}_{ni}\tilde{\boldsymbol\gamma}_{ni}^{T}\right)A_{(m)}^{-1}V_{(m)}^{-\frac{1}{2}}C_{(m)}^{T}\non\\
&=&C_{(m)}V_{(m)}^{-\frac{1}{2}}A_{(m)}^{-1}B_{(m)}A_{(m)}^{-1}V_{(m)}^{-\frac{1}{2}}C_{(m)}^{T}=C_{(m)}C_{(m)}^{T}.
\ee
By the fact that $\mathrm{tr}(AB)\leq\lambda_{\max}(A)\mathrm{tr}(B)$ for symmetric matrix $A$ and positive semi-definite matrix $B$ (e.g., Bernstein (2005, Proposition 8.4.13)), and
By the part (iii) of Assumption \ref{a1} and H\"{o}lder's inequalities,
$\mathrm{E}\left\|\boldsymbol{x}_{i(m)}\right\|^{4}
=\mathrm{E}\left(\sum_{j=1}^{\tilde{k}_{m}}x_{ij(m)}^{2}\right)^{2}
\leq\tilde{k}_{m}
\mathrm{E}\left(\sum_{j=1}^{\tilde{k}_{m}}x_{ij(m)}^{4}\right)
\leq\tilde{k}_{m}^{2}c_{\boldsymbol{x}}$, we have
\be \label{eta4}
&&\mathrm{E}\left\|\left(C_{(m)}C_{(m)}^{T}\right)^{-\frac{1}{2}}\boldsymbol\gamma_{ni}\right\|^{4}\non\\
&=&n^{-2}\mathrm{E}\left(\left[\mathrm{tr}\left\{\left(C_{(m)}C_{(m)}^{T}\right)^{-\frac{1}{2}}C_{(m)}V_{(m)}^{-\frac{1}{2}}A_{(m)}^{-1}
\tilde{\boldsymbol\gamma}_{ni}\tilde{\boldsymbol\gamma}_{ni}^{T}A_{(m)}^{-1}V_{(m)}^{-\frac{1}{2}}C_{(m)}^{T}\left(C_{(m)}C_{(m)}^{T}\right)^{-\frac{1}{2}}\right\}
\right]^{2}\right)\non\\
&=&n^{-2}\mathrm{E}\left(\left[\mathrm{tr}\left\{
\tilde{\boldsymbol\gamma}_{ni}\tilde{\boldsymbol\gamma}_{ni}^{T}A_{(m)}^{-1}V_{(m)}^{-\frac{1}{2}}C_{(m)}^{T}\left(C_{(m)}C_{(m)}^{T}\right)^{-1}C_{(m)}V_{(m)}^{-\frac{1}{2}}A_{(m)}^{-1}
\right\}\right]^{2}\right)\non\\
&\leq&n^{-2}\mathrm{E}\left(\left\{\mathrm{tr}\left(
\tilde{\boldsymbol\gamma}_{ni}\tilde{\boldsymbol\gamma}_{ni}^{T}\right)\right\}^{2}\left\{\lambda_{\max}\left(A_{(m)}^{-1}V_{(m)}^{-\frac{1}{2}}C_{(m)}^{T}\left(C_{(m)}C_{(m)}^{T}\right)^{-1}
C_{(m)}V_{(m)}^{-\frac{1}{2}}A_{(m)}^{-1}\right)
\right\}^{2}\right)\non\\
&\leq&
n^{-2}\mathrm{E}\left\{K+K^{2}\mathrm{tr}(\boldsymbol{x}_{i(m)}\boldsymbol{x}_{i(m)}^{T})\right\}^{2}
\left\{\lambda_{\max}\left(\left(C_{(m)}C_{(m)}^{T}\right)^{-1}C_{(m)}C_{(m)}^{T}\right)\right\}^{2}\non\\
&&\left\{\lambda_{\max}\left(A_{(m)}^{-1}V_{(m)}^{-1}A_{(m)}^{-1}\right)\right\}^{2}\non\\
&=&n^{-2}\mathrm{E}\left\{K+K^{2}\mathrm{tr}(\boldsymbol{x}_{i(m)}\boldsymbol{x}_{i(m)}^{T})\right\}^{2}
\left\{\lambda_{\max}\left(A_{(m)}^{-1}V_{(m)}^{-1}A_{(m)}^{-1}\right)\right\}^{2}\non\\
&\leq&n^{-2}\left(2K^{2}+2K^{4}\mathrm{E}\left\|\boldsymbol{x}_{i(m)}\right\|^{4}\right)
\left\{\lambda_{\max}\left(B_{(m)}^{-1}\right)\right\}^{2}\non\\
&=&O(n^{-2}K^{4}\tilde{k}_{m}^{2}\underline{c}_{\bar{B}_{(m)}}^{-2}).
\ee
Then for any $\epsilon>0$,
\be
&&\sum_{i=1}^{n}\mathrm{E}\left[\left\|\left(C_{(m)}C_{(m)}^{T}\right)^{-\frac{1}{2}}\boldsymbol\gamma_{ni}\right\|^{2}
\boldsymbol{1}\left\{\left\|\left(C_{(m)}C_{(m)}^{T}\right)^{-\frac{1}{2}}\boldsymbol\gamma_{ni}\right\|\geq\epsilon\right\}\right]\non\\
&=&n\mathrm{E}\left[\left\|\left(C_{(m)}C_{(m)}^{T}\right)^{-\frac{1}{2}}\boldsymbol\gamma_{ni}\right\|^{2}
\boldsymbol{1}\left\{\left\|\left(C_{(m)}C_{(m)}^{T}\right)^{-\frac{1}{2}}\boldsymbol\gamma_{ni}\right\|\geq\epsilon\right\}\right]\non\\
&\leq& n\mathrm{E}\left\{\left\|\left(C_{(m)}C_{(m)}^{T}\right)^{-\frac{1}{2}}\boldsymbol\gamma_{ni}\right\|^{4}\right\}^{1/2}
\left\{P\left(\left\|\left(C_{(m)}C_{(m)}^{T}\right)^{-\frac{1}{2}}\boldsymbol\gamma_{ni}\right\|\geq\epsilon\right)\right\}^{1/2}\non\\
&\leq& n\epsilon^{-2}\mathrm{E}\left(\left\|\left(C_{(m)}C_{(m)}^{T}\right)^{-\frac{1}{2}}\boldsymbol\gamma_{ni}\right\|^{4}\right)
=O\left(n^{-1}K^{4}\tilde{k}_{m}^{2}\underline{c}_{\bar{B}_{(m)}}^{-2}\right)=o(1)
\ee
by Assumption \ref{a3}(i).
Thus $\boldsymbol\gamma_{ni}$ satisfies the conditions of the Lindeberg-Feller central limit theorem and we have
\be \label{T1}T_{1(m)}\overd N(0,C_{0}).\ee
For $T_{2(m)}$ and $T_{3(m)}$, we take $c_{(m)}=C_{(m)}V_{(m)}^{-\frac{1}{2}}A_{(m)}^{-1}$. By the fact that $\mathrm{tr}(AB)\leq\lambda_{\max}(A)\mathrm{tr}(B)$ for symmetric matrix $A$ and positive semi-definite matrix $B$ and that $\lambda_{\max}(A^{T}A)=\lambda_{\max}(A A^{T})$ for any matrix $A$, we have
\be
\left\|c_{(m)}\right\|
&=&\left\{\mathrm{tr}\left(V_{(m)}^{-1/2}A_{(m)}^{-1}A_{(m)}^{-1}V_{(m)}^{-1/2}C_{(m)}^{T}C_{(m)}\right)\right\}^{1/2}\non\\
&\leq&\left\|C_{(m)}\right\|\left\{\lambda_{\max}\left(V_{(m)}^{-1/2}A_{(m)}^{-1}A_{(m)}^{-1}V_{(m)}^{-1/2}\right)\right\}^{1/2}\non\\
&=&\left\|C_{(m)}\right\|\left\{\lambda_{\max}\left(A_{(m)}^{-1}V_{(m)}^{-1}A_{(m)}^{-1}\right)\right\}^{1/2}\non\\
&=&\left\|C_{(m)}\right\|\left\{\lambda_{\max}\left(B_{(m)}^{-1}\right)\right\}^{1/2}
\leq\left\|C_{(m)}\right\|\underline{c}_{\bar{B}_{(m)}}^{-1/2}\leq L_{c}\underline{c}_{\bar{B}_{(m)}}^{-1/2}\non
\ee
for sufficiently large $L_{c}$. Then by (\ref{V:hat})
\be\label{T2}
\left\|T_{2(m)}\right\|=\left\|V_{(m)}(\hat{\Delta}_{(m)})\right\|_{C_{(m)}V_{(m)}^{-\frac{1}{2}}A_{(m)}^{-1}}=o_{p}(1)
\ee and
\be\label{T3}
\left\|T_{3(m)}\right\|=\left\|R_{2(m)}\right\|_{C_{(m)}V_{(m)}^{-\frac{1}{2}}A_{(m)}^{-1}}=o_{p}(1)
\ee
Combining (\ref{T1})-(\ref{T3}) yields the claim in part (ii).

Below we demonstrate (\ref{V})¨C(\ref{V:hat}) hold under Assumptions \ref{a1}-\ref{a3}. Since $l_{m}$ is fixed, without loss of generality we assume that $l_{m}=1$. Denote $\boldsymbol{c}_{(m)}=(c_{1(m)},\cdots,c_{K(m)},\tilde{\boldsymbol{c}}_{(m)})$, where
 $\tilde{\boldsymbol{c}}_{(m)}$ is a $\tilde{k}_{m}\times 1$ vector. First, we show (\ref{V}). Write $a_{ki(m)}=\left(c_{k(m)},\boldsymbol{\tilde{c}}_{(m)}\right)\left(1,\boldsymbol{x}_{ki(m)}^{T}\right)^{T}=a_{ki(m)}^{+}-a_{ki(m)}^{-}$,
where $a_{ki(m)}^{+}=\max\{a_{ki(m)},0\}$ and $a_{ki(m)}^{-}=\max\{-a_{ki(m)},0\}$. Then by Minkowski's inequality we have
\be\label{V1}
&&\sup_{\left\|\Delta\right\|\leq\sqrt{\tilde{k}_{m}+K}L}
\left\|V_{(m)}(\Delta)-V_{(m)}(0)-\bar{V}_{(m)}(\Delta)+\bar{V}_{(m)}(0)\right\|_{c_{(m)}}\non\\
&\leq&\sup_{\left\|\Delta\right\|\leq\sqrt{\tilde{k}_{m}+K}L}
\left|\Psi_{(m)}(\Delta)-\Psi_{(m)}(0)-\bar{\Psi}_{(m)}(\Delta)+\bar{\Psi}_{(m)}(0)\right|\non\\
&&+\sup_{\left\|\Delta\right\|\leq\sqrt{\tilde{k}_{m}+K}L}
\left|V_{(m)}^{+}(\Delta)-V_{(m)}^{+}(0)-\bar{V}_{(m)}^{+}(\Delta)+\bar{V}_{(m)}^{+}(0)\right|\non\\
&&+\sup_{\left\|\Delta\right\|\leq\sqrt{\tilde{k}_{m}+K}L}
\left|V_{(m)}^{-}(\Delta)-V_{(m)}^{-}(0)-\bar{V}_{(m)}^{-}(\Delta)+\bar{V}_{(m)}^{-}(0)\right|,
\ee
where
$\Psi_{(m)}(\Delta)=n^{-\frac{1}{2}}\sum_{i=1}^{n}\sum_{k=1}^{K}c_{k}\psi_{\tau_{k}}\left(\varepsilon_{ki(m)}(\Delta_{k})\right)$, $\bar{\Psi}_{(m)}(\Delta)=\mathrm{E}\left[\Psi_{(m)}(\Delta)\right]$,
$V_{(m)}^{+}(\Delta)=n^{-\frac{1}{2}}\sum_{i=1}^{n}\sum_{k=1}^{K}\psi_{\tau_{k}}\left(\varepsilon_{ki(m)}(\Delta_{k})\right)a_{ki(m)}^{+}$,
$\bar{V}_{(m)}^{+}(\Delta)=\mathrm{E}\left\{V_{(m)}^{+}(\Delta)\right\}$, and
$V_{(m)}^{-}(\Delta)$ and $\bar{V}_{(m)}^{-}(\Delta)$ are analogously defined.
It suffices to show that each term on the right hand side of
(\ref{V1}) is $o_{p}(1)$. We only show the first and second terms are $o_{p}(1)$ as the third term can be treated analogously.

Let $\boldsymbol{D}=\left\{\Delta\in \mathbb{R}^{\tilde{k}_{m}+K}:\left\|\Delta\right\|\leq\sqrt{\tilde{k}_{m}+K}L\right\}$ for some $L<\infty$.
Let $\left|t\right|_{\infty}$ denote the maximum of the absolute values of the coordinates of $t$. By selecting $N_{1}=(2n^{2})^{\tilde{k}_{m}+K}$ grid points, $\tilde{\Delta}_{1},\cdots,\tilde{\Delta}_{N_{1}}$, we can cover
$\boldsymbol{D}$ by cubes $\boldsymbol{D}_{s}=\left\{\Delta\in \mathbb{R}^{\tilde{k}_{m}+K}:\left|\Delta-\tilde{\Delta}_{s}\right|_{\infty}\leq\delta_{n}\right\}$ , $s=1,\cdots,N_{1}$, with sides of length $\delta_{n}$
where $\delta_{n}=L\left(\tilde{k}_{m}+K\right)^{1/2}/n^{2}$.
Denote $$\tilde{\Delta}_{(s)}=\sqrt{n}\left\{\left(\tilde{b}_{1 (s)},\cdots,\tilde{b}_{K (s)},\tilde{\bbeta}^{T}\right)^{T}-\left(\tilde{b}_{1 (s)}^{*},\cdots,\tilde{b}_{K (s)}^{*},\tilde{\bbeta}^{*T}\right)^{T}\right\} $$
and
$$
\tilde{\Delta}_{k (s)}=\sqrt{n}\left\{\left(\tilde{b}_{k (s)},\tilde{\bbeta}^{T}\right)^{T}-\left(\tilde{b}_{k (s)}^{*},\tilde{\bbeta}^{*T}\right)^{T}\right\}.
$$
Let $$\varepsilon_{ki(m)}=y_{i}-\left(1,\boldsymbol{x}_{i(m)}^{T}\right)\left(b_{k(m)}^{*},\bbeta_{(m)}^{*T}\right)^{T}$$
and $$\psi_{ksi(m)}(\delta)=\psi_{\tau_{k}}\left(\varepsilon_{ki(m)}-n^{-\frac{1}{2}}\left(1,\boldsymbol{x}_{i(m)}^{T}\right)\tilde{\Delta}_{k(s)}+n^{-\frac{1}{2}}\delta\left\|\left(1,\boldsymbol{x}_{i(m)}^{T}\right)^{T}\right\|\right).$$
In view of the fact that $\psi_{\tau_{k}}(\cdot)$ is monotone and by Minkowski's inequality, we can readily show that
\be
&&\sup_{\left\|\Delta\right\|\leq\sqrt{\tilde{k}_{m}+K}L}
\left|V_{(m)}^{+}(\Delta)-V_{(m)}^{+}(0)-\bar{V}_{(m)}^{+}(\Delta)+\bar{V}_{(m)}^{+}(0)\right|\non\\
&\leq&\mathrm{max}_{1\leq s\leq N_{1}}
\left|V_{(m)}^{+}(\tilde{\Delta}_{s})-V_{(m)}^{+}(0)-\bar{V}_{(m)}^{+}(\tilde{\Delta}_{s})+\bar{V}_{(m)}^{+}(0)\right|\non\\
&&+\mathrm{max}_{1\leq s\leq N_{1}}
\left|n^{-\frac{1}{2}}\sum_{i=1}^{n}\mathrm{E}\left\{\sum_{k=1}^{K}\psi_{ksi(m)}(\delta_{n})a_{ki(m)}^{+}\right\}
-\mathrm{E}\left\{\sum_{k=1}^{K}\psi_{ksi(m)}(-\delta_{n})a_{ki(m)}^{+}\right\}\right|\non\\
&&+\mathrm{max}_{1\leq s\leq N_{1}}
\left|n^{-\frac{1}{2}}\sum_{i=1}^{n}\left(
\sum_{k=1}^{K}\left\{\psi_{ksi(m)}(\delta_{n})-\psi_{ksi(m)}(0)\right\}a_{ki(m)}^{+}\right.\right.\non\\
&&\left.\left.-\mathrm{E}\left[\sum_{k=1}^{K}\left\{\psi_{ksi(m)}(\delta_{n})-\psi_{ksi(m)}(0)\right\}a_{ki(m)}^{+}
\right]\right)\right|\non\\
&=&I_{1(m)}+I_{2(m)}+I_{3(m)},~\text{say},
\ee
For $I_{1(m)}$, we apply Lagrange Mean Value Theorem, Assumption \ref{a2}(i), and the fact that
$a_{i(m)}^{+}\leq\left|\left(c_{k(m)},\boldsymbol{\tilde{c}}_{(m)}\right)\left(1,\boldsymbol{x}_{i(m)}^{T}\right)^{T}\right|
\leq \left\|\left(c_{k(m)},\boldsymbol{\tilde{c}}_{(m)}\right)\right\|\left\|\left(1,\boldsymbol{x}_{i(m)}^{T}\right)\right\|$ to obtain
\be
I_{2(m)}&=&\mathrm{max}_{1\leq s\leq N_{1}}
\Bigg|
n^{1/2}\mathrm{E}\left[ \sum_{k=1}^{K}  F\left(-u_{ki(m)}+n^{-1/2}\left(1,\boldsymbol{x}_{i(m)}^{T}\right)\tilde{\Delta}_{k(s)}\right.\right.\non\\
&&\left.\left.+n^{-1/2}\delta_{n}\left\|\left(1,\boldsymbol{x}_{i(m)}^{T}\right)^{T}\right\|\Big|\boldsymbol{x}_{i(m)}\right)a_{ki(m)}^{+}     \right]-\mathrm{E}\left[ \sum_{k=1}^{K} F\left(-u_{ki(m)}\right.\right.\non\\
&&\left.\left.+n^{-1/2}\left(1,\boldsymbol{x}_{i(m)}^{T}\right)\tilde{\Delta}_{k(s)}-n^{-1/2}\delta_{n}\left\|\left(1,\boldsymbol{x}_{i(m)}^{T}\right)^{T}\right\|\Big|\boldsymbol{x}_{i(m)}\right)a_{ki(m)}^{+}     \right]\Bigg|\non\\
&=&2f\left(\xi\Big|\boldsymbol{x}_{i(m)}\right)\delta_{n}\mathrm{E}\left\{\sum_{k=1}^{K}\left\|\left(1,\boldsymbol{x}_{i(m)}^{T}\right)^{T}\right\|a_{ki(m)}^{+}\right\}\non\\
&\leq& 2c_{f}\delta_{n}\mathrm{E}\left\{\sum_{k=1}^{K}\left\|\left(1,\boldsymbol{x}_{i(m)}^{T}\right)^{T}\right\|a_{ki(m)}^{+}\right\}\non\\
&\leq & 2c_{f}\delta_{n}\sum_{k=1}^{K}\left\|\left(c_{k(m)},\boldsymbol{\tilde{c}}_{(m)}\right)\right\|\mathrm{E}\left\|\left(1,\boldsymbol{x}_{i(m)}^{T}\right)^{T}\right\|^{2}\non\\
&=&O\left(\delta_{n}\underline{c}_{B_{(m)}}^{-1/2}\tilde{k}_{m}\right)=O\left(\underline{c}_{B_{(m)}}^{-1/2}(\tilde{k}_{m}+K)^{1/2}\tilde{k}_{m}/n^{2}\right)=o(1),
\ee
where $\xi$ is a value between $-u_{ki(m)}+n^{-1/2}\left(1,\boldsymbol{x}_{i(m)}^{T}\right)\tilde{\Delta}_{k(s)}-n^{-1/2}\delta_{n}\left\|\left(1,\boldsymbol{x}_{i(m)}^{T}\right)^{T}\right\|$
and $-u_{ki(m)}+n^{-1/2}\left(1,\boldsymbol{x}_{i(m)}^{T}\right)\tilde{\Delta}_{k(s)}+n^{-1/2}\delta_{n}\left\|\left(1,\boldsymbol{x}_{i(m)}^{T}\right)^{T}\right\|$.

For $I_{1(m)}$, let $$\eta_{kis(m)}=n^{1/2}\left\{\eta_{kis(m),0}-\mathrm{E}\left(\eta_{kis(m),0}\right)\right\},$$
$$\eta_{kis(m),0}=\left[\psi_{\tau_{k}}\left(\varepsilon_{ki(m)}-n^{-1/2}\left(1,\boldsymbol{x}_{i(m)}^{T}\right)\tilde{\Delta}_{k(s)}\right)-\psi_{\tau_{k}}\left(\varepsilon_{ki(m)}\right)\right]a_{ki(m)}^{+}$$
and $e_{1n}=\left(nK\tilde{k}_{m}^{2}\underline{c}_{B_{(m)}}^{-2}\right)^{1/4}$. Note that
\be
&&V_{(m)}^{+}(\Delta_{s})-V_{(m)}^{+}(0)-\bar{V}_{(m)}^{+}(\Delta_{s})+\bar{V}_{(m)}^{+}(0)\non\\
&&=n^{-1}\sum_{i=1}^{n}\sum_{k=1}^{K}\eta_{kis(m)}\boldsymbol{1}\left\{a_{ki(m)}^{+}\leq e_{1n}/n^{1/2}\right\}
+n^{-1}\sum_{i=1}^{n}\sum_{k=1}^{K}\eta_{kis(m)}\boldsymbol{1}\left\{a_{ki(m)}^{+}> e_{1n}/n^{1/2}\right\}\non\\
&&=D_{1s}+D_{2s},~ \text{say}.
\ee
It suffices to prove $I_{1(m)}=o_{p}(1)$ by showing that
\be\label{D12}\mathrm{max}_{1\leq s\leq N_{1}}\|D_{ls}\|=o_{p}(1)
\ \text{for}\ l=1\ \text{and}\ 2.\ee
Note that \be
&&\mathrm{Var}\left[\sum_{k=1}^{K}\eta_{kis(m)}\boldsymbol{1}\left\{a_{ki(m)}^{+}\leq e_{1n}/n^{1/2}\right\}\right]\non\\
&\leq& nK\mathrm{E}\left\{\left|\psi_{\tau_{k}}\left(\varepsilon_{ki(m)}
-n^{-1/2}\left(1,\boldsymbol{x}_{i(m)}^{T}\right)\tilde{\Delta}_{k(s)}\right)
-\psi_{\tau_{k}}\left(\varepsilon_{ki(m)}\right)\right|\left(a_{i(m)}^{+}\right)^{2}\right\}\non\\
&\leq& C_{1}\underline{c}_{B_{(m)}}^{-1}n^{1/2}K\tilde{k}_{m}
\ee
for some $C_{1}<\infty$.
By Boole's and Bernstein's inequalities, we have
\be
&&\mathrm{P}\left(\mathrm{max}_{1\leq s\leq N_{1}}\|D_{1s}\|\geq\epsilon\right)\non\\
&\leq& N_{1}\mathrm{max}_{1\leq s\leq N_{1}}
\mathrm{P}\left(\left\|\frac{1}{n}\sum_{i=1}^{n}\sum_{k=1}^{K}\eta_{kis(m)}\boldsymbol{1}\left\{a_{ki(m)}^{+}\leq e_{1n}/n^{1/2}\right\}\right\|\geq\epsilon\right)\non\\
&\leq& 2N_{1}\exp\left\{-\frac{n\epsilon^{2}}{2C_{1}\underline{c}_{B_{(m)}}^{-1}
n^{1/2}K\tilde{k}_{m}+4\epsilon n^{1/2}e_{1n}/3}\right\}\non\\
&\leq& 2\exp\left\{6\tilde{k}_{m}\log n\right\}\times\exp\left\{-7\tilde{k}_{m}\log n\right\}\non\\
&=&\exp\left\{-\tilde{k}_{m}\log n\right\}=o(1).
\ee
because $n/(\underline{c}_{B_{(m)}}^{-1}n^{1/2}K\tilde{k}_{m})
=n^{1/2}\underline{c}_{B_{(m)}}/(K\tilde{k}_{m})\gg K\tilde{k}_{m}\log n$
and
$n/(n^{1/2}e_{1n})
=n^{1/4}K^{-1/4}\tilde{k}_{m}^{-1/2}\underline{c}_{B_{(m)}}^{1/2}\gg K\tilde{k}_{m}\log n$
by Assumption \ref{a3}(i).
Let $\bar{a}_{i(m)}=a_{i(m)}\tilde{k}_{m}^{-1/2}\underline{c}_{B_{(m)}}^{1/2}$.
Noting that $\mathrm{E}\left\{\left|\boldsymbol{c}_{(m)}\left(1,\boldsymbol{x}_{i(m)}^{T}\right)\right|^{8}\right\}
=O(\tilde{k}_{m}^{4}\underline{c}_{B_{(m)}}^{-4})$
by arguments as used to obtain (\ref{eta4}), $\mathrm{E}\left(\left|\bar{a}_{i(m)}\right|^{8}\right)
=O(1)$.
By Boole's and Markov's inequalities, Assumption \ref{a1}(iii), the fact that
$nK\tilde{k}_{m}^{2}\underline{c}_{B_{(m)}}^{-2}/e_{1n}^{4}=O(1)$ by construction, and Cauchy-Schwarz inequalities,
\be
&&\mathrm{P}\left(\mathrm{max}_{1\leq s\leq N_{1}}\|D_{2s}\|\geq\epsilon\right)\non\\
&\leq&\mathrm{P}\left(\mathrm{max}_{1\leq i\leq n}\mathrm{max}_{1\leq k\leq K}a_{ki(m)}^{+}> e_{1n}/n^{1/2}\right)\non\\
&\leq&nK\mathrm{P}\left(|\bar{a}_{i(m)}|> \tilde{k}_{m}^{-1/2}\underline{c}_{B_{(m)}}^{1/2}e_{1n}\right)\non\\
&\leq&\frac{nK\tilde{k}_{m}^{2}\underline{c}_{B_{(m)}}^{-2}}{e_{1n}^{4}}
\mathrm{E}\left[\left|\bar{a}_{i(m)}\right|^{4}\boldsymbol{1}\left\{|\bar{a}_{i(m)}|> \tilde{k}_{m}^{-1/2}\underline{c}_{B_{(m)}}^{1/2}e_{1n}\right\}\right]\non\\
&\leq&\frac{nK\tilde{k}_{m}^{2}\underline{c}_{B_{(m)}}^{-2}}{e_{1n}^{4}}
\mathrm{E}\left(\left|\bar{a}_{i(m)}\right|^{8}\right)^{1/2}
\mathrm{E}\left[\boldsymbol{1}\left\{|\bar{a}_{i(m)}|> \tilde{k}_{m}^{-1/2}\underline{c}_{B_{(m)}}^{1/2}e_{1n}\right\}\right]^{1/2}
=o(1).
\ee
Thus (\ref{D12}) follows and we have shown $I_{1(m)}=o_{p}(1)$. By the same
token, we can show that $I_{3(m)}=o_{p}(1)$. Consequently (\ref{V}) follows.

Next, we show (\ref{V:bar}). By Assumptions \ref{a1} and \ref{a2},
\be
&&\sup_{\left\|\Delta\right\|\leq\sqrt{\tilde{k}_{m}+K}L}
\left\|\bar{V}_{(m)}(\Delta)-\bar{V}_{(m)}(0)+A_{(m)}\Delta\right\|_{\boldsymbol{c}_{(m)}}\non\\
&=&\sup_{\left\|\Delta\right\|\leq\sqrt{\tilde{k}_{m}+K}L}
\left\|n^{-1/2}\sum_{i=1}^{n}
\mathrm{E}\left(
\left(F\left(-u_{1i(m)}+n^{-1/2}\left(1,\boldsymbol{x}_{i(m)}^{T}\right)\Delta_{1}|\boldsymbol{x}_{i(m)}\right)\right.\right.\right.\non\\
&&-F\left(-u_{1i(m)}|\boldsymbol{x}_{i(m)}\right),\cdots,
F\left(-u_{Ki(m)}+n^{-1/2}\left(1,\boldsymbol{x}_{i(m)}^{T}\right)\Delta_{K}|\boldsymbol{x}_{i(m)}\right)\non\\
&&-F\left(-u_{Ki(m)}|\boldsymbol{x}_{i(m)}\right),
\sum_{k=1}^{K}\left[F\left(-u_{ki(m)}+n^{-1/2}\left(1,\boldsymbol{x}_{i(m)}^{T}\right)\Delta_{k}|\boldsymbol{x}_{i(m)}\right)\right.\non\\
&&\left.\left.\left.\left.-F\left(-u_{ki(m)}|\boldsymbol{x}_{i(m)}\right)\right]\boldsymbol{x}_{i(m)}\right)^{T}
\right)
-A_{(m)}\Delta\right\|_{\boldsymbol{c}_{(m)}}\non\\
&=&\sup_{\left\|\Delta\right\|\leq\sqrt{\tilde{k}_{m}+K}L}
\Bigg\|n^{-1}\sum_{i=1}^{n}
\mathrm{E}\left(\left[
\int_{0}^{1}
f\left(-u_{1i(m)}+sn^{-1/2}\left(1,\boldsymbol{x}_{i(m)}^{T}\right)\Delta_{1}|\boldsymbol{x}_{i(m)}\right)\right.\right.
\mathrm{d}s\left(1,\boldsymbol{x}_{i(m)}^{T}\right)\Delta_{1},\non\\
&&\cdots,\int_{0}^{1}
f\left(-u_{Ki(m)}+sn^{-1/2}\left(1,\boldsymbol{x}_{i(m)}^{T}\right)\Delta_{K}|\boldsymbol{x}_{i(m)}\right)
\mathrm{d}s\left(1,\boldsymbol{x}_{i(m)}^{T}\right)\Delta_{K},\non\\
&&\left.\left.\left.\int_{0}^{1}\sum_{k=1}^{K}f\left(-u_{ki(m)}+sn^{-1/2}\left(1,\boldsymbol{x}_{i(m)}^{T}\right)\Delta_{k}|\boldsymbol{x}_{i(m)}\right)
\mathrm{d}s\left(1,\boldsymbol{x}_{i(m)}^{T}\right)\Delta_{k}\boldsymbol{x}_{i(m)}
\right]\right)-A_{(m)}\Delta\right\|_{\boldsymbol{c}_{(m)}}\non\\
&=&\sup_{\left\|\Delta\right\|\leq\sqrt{\tilde{k}_{m}+K}L}
\sum_{k=1}^{K}\left\|n^{-1}\sum_{i=1}^{n}
\mathrm{E}\left[\int_{0}^{1}
f\left(-u_{ki(m)}+sn^{-1/2}\left(1,\boldsymbol{x}_{i(m)}^{T}\right)\Delta_{k}|\boldsymbol{x}_{i}\right)\right.\right.\non\\
&&\left.\left.-f\left(-u_{ki(m)}\right)\mathrm{d}s\left(1,\boldsymbol{x}_{i(m)}^{T}\right)^{T}\left(1,\boldsymbol{x}_{i(m)}^{T}\right)\Delta_{k}\right]\right\|_{\left(c_{k(m)},\boldsymbol{\tilde{c}}_{(m)}\right)}\non\\
&\leq&\overline{c}_{f}\sup_{\left\|\Delta\right\|\leq\sqrt{\tilde{k}_{m}+K}L}n^{-3/2}
\sum_{i=1}^{n}\sum_{k=1}^{K}\mathrm{E}\left\|
\Delta_{k}^{T}\left(1,\boldsymbol{x}_{i(m)}^{T}\right)^{T}\left(1,\boldsymbol{x}_{i(m)}^{T}\right)^{T}\left(1,\boldsymbol{x}_{i(m)}^{T}\right)\Delta_{k}\right\|_{\left(c_{k(m)},\boldsymbol{\tilde{c}}_{(m)}\right)}\non\\
&\leq&\overline{c}_{f}n^{-1/2}KL_{c}\underline{c}_{B_{(m)}}^{-1/2}\sup_{\left\|\Delta\right\|\leq\sqrt{\tilde{k}_{m}+K}L}
\left\{\mathrm{E}\left\|\Delta_{k}^{T}\left(1,\boldsymbol{x}_{i(m)}^{T}\right)^{T}\right\|^{2}\right\}^{1/2}
\left\{\mathrm{E}\left\|\left(1,\boldsymbol{x}_{i(m)}^{T}\right)^{T}\left(1,\boldsymbol{x}_{i(m)}^{T}\right)\Delta_{k}\right\|^{2}\right\}^{1/2}\non\\
&=&n^{-1/2}K\underline{c}_{B_{(m)}}^{-1/2}O\left(\bar{c}_{A_{(m)}}^{1/2}\tilde{k}_{m}^{1/2}\right)O\left(\tilde{k}_{m}^{3/2}\right)\non\\
&=&O\left(\underline{c}_{B_{(m)}}^{-1/2}\bar{c}_{A_{(m)}}^{1/2}\tilde{k}_{m}^{2}K/n^{1/2}\right)=o(1),
\ee
where we use the fact that $\mathrm{E}\left\|\Delta_{k}^{T}\left(1,\boldsymbol{x}_{i(m)}^{T}\right)^{T}\right\|^{2}
\leq\lambda_{\mathrm{max}}\left(\mathrm{E}\left\{\left(1,\boldsymbol{x}_{i(m)}^{T}\right)^{T}\left(1,\boldsymbol{x}_{i(m)}^{T}\right)\right\}\right)
\|\Delta_{k}\|^{2}=O\left(\bar{c}_{A_{(m)}}\tilde{k}_{m}\right)$ by Assumption \ref{a2}(ii).

Now we show (\ref{V:hat}). By the proof of Lemma A2 in Ruppert and
Carroll (1980) and Assumptions \ref{a2}(i)-(ii),
\be
\left\|V_{(m)}(\hat{\Delta}_{(m)})\right\|_{\boldsymbol{c}_{(m)}}
&=&\left\|n^{-\frac{1}{2}}\sum_{i=1}^{n}
\left(\psi_{\tau_{1}}\left(\varepsilon_{1i(m)}(\hat{\Delta}_{1(m)})\right),\cdots,
\psi_{\tau_{K}}\left(\varepsilon_{Ki(m)}(\hat{\Delta}_{K(m)})\right),\right.\right.\non\\
&&\left.\left.\sum_{k=1}^{K}\psi_{\tau_{k}}\left(\varepsilon_{ki(m)}(\Delta_{k})\right)\boldsymbol{x}_{i(m)}^{T}\right)^{T}\right\|_{c_{(m)}}\non\\
&=&\sum_{k=1}^{K}\left\|n^{-\frac{1}{2}}\sum_{i=1}^{n}
\psi_{\tau_{k}}\left(\varepsilon_{ki(m)}(\hat{\Delta}_{k(m)})\right)\left(1,\boldsymbol{x}_{i(m)}^{T}\right)^{T}\right\|_{\left(c_{k(m)},\boldsymbol{\tilde{c}}_{(m)}\right)}\non\\
&\leq& n^{-\frac{1}{2}}\sum_{k=1}^{K}\sum_{i=1}^{n}\boldsymbol{1}\left\{y_{i}-\left(1,\boldsymbol{x}_{i(m)}^{T}\right)\left(\hat{b}_{k(m)},\hat{\bbeta}_{(m)}^{T}\right)^{T}
=0\right\}\non\\
&&\left|\left(c_{k(m)},\boldsymbol{\tilde{c}}_{(m)}\right)\left(1,\boldsymbol{x}_{i(m)}^{T}\right)^{T}\right|\non\\
&\leq& n^{-\frac{1}{2}}\tilde{k}_{m}K\mathrm{max}_{1\leq i\leq n}\mathrm{max}_{1\leq k\leq K}\left|\left(c_{k(m)},\boldsymbol{\tilde{c}}_{(m)}\right)\left(1,\boldsymbol{x}_{i(m)}^{T}\right)^{T}\right|\non\\
&=&o_{p}(1).
\ee
Because by Boole's and Markov's inequalities
\be
&&\mathrm{P}\left(\mathrm{max}_{1\leq i\leq n}\mathrm{max}_{1\leq k\leq K}\left|\left(c_{k(m)},\boldsymbol{\tilde{c}}_{(m)}\right)\left(1,\boldsymbol{x}_{i(m)}^{T}\right)^{T}\right|\geq n^{1/2}/(\tilde{k}_{m}K)\right)\non\\
&\leq& nK\mathrm{P}\left(\left|\left(c_{k(m)},\boldsymbol{\tilde{c}}_{(m)}\right)\left(1,\boldsymbol{x}_{i(m)}^{T}\right)^{T}\right|\geq n^{1/2}/(\tilde{k}_{m}K)\right)\non\\
&\leq& \tilde{k}_{m}^{8}K^{9}n^{-3}\mathrm{E}\left\{\left|\left(c_{k(m)},\boldsymbol{\tilde{c}}_{(m)}\right)\left(1,\boldsymbol{x}_{i(m)}^{T}\right)^{T}\right|^{8}\right\}\non\\
&=&\tilde{k}_{m}^{8}K^{9}n^{-3}O\left(\tilde{k}_{m}^{4}\underline{c}_{B_{(m)}}^{-4}\right)
=O\left(n^{-3}\tilde{k}_{m}^{12}K^{9}\underline{c}_{B_{(m)}}^{-4}\right)=o(1)
\ee
by Assumption \ref{a3}(i). This completes the proof of part (ii).
\section{Proof of Lemma \ref{lem:order}}\label{sec:order}
We only prove (i) as the proof of (ii) is analogous. Let $\delta_{n}=Ln^{-1}K\bar{k}\log n$ for some large constant $L<\infty$.
Let $\bar{Q}_{(m)}\left\{\left(b_{1},\cdots,b_{K},\bbeta_{(m)}^{T}\right)^{T}\right\}
=\sum_{k=1}^{K}\mathrm{E}\left\{\rho_{\tau_{k}}\left(y_{i}-b_{k(m)}-\boldsymbol{x}_{i(m)}^{T}\bbeta_{(m)}\right)\right\}$. Define
\be\label{Ddelta}
D(\delta_{n})&=&\inf_{1\leq m\leq M}\inf_{\sum_{k=1}^{K}\left\|\left(b_{k(m)}-b_{k(m)}^{*},\left(\bbeta_{(m)}-\bbeta_{(m)}^{*}\right)^{T}\right)^{T}\right\|^{2}>\delta_{n}}
\left\{\bar{Q}_{(m)}\left(\left(b_{1},\cdots,b_{K},\bbeta_{(m)}^{T}\right)^{T}\right)\right.\non\\
&&\left.-\bar{Q}_{(m)}\left(\left(b_{1}^{*},\cdots,b_{K}^{*},\bbeta_{(m)^{*}}^{T}\right)^{T}\right)\right\}
\ee
and
\be
\vartheta_{m}(\delta_{n})&=&\left\{\left(b_{1},\cdots,b_{K},\bbeta_{(m)}^{T}\right)^{T}:
\sum_{k=1}^{K}\left\|\left(b_{k(m)}-b_{k(m)}^{*},\left(\bbeta_{(m)}-\bbeta_{(m)}^{*}\right)^{T}\right)^{T}\right\|^{2}>\delta_{n},\right.\non\\
&&\left.\sum_{k=1}^{K}\left\|\left(b_{k(m)}-b_{k(m)}^{*},\left(\bbeta_{(m)}-\bbeta_{(m)}^{*}\right)^{T}\right)^{T}\right\|^{2}=o(1)\right\} \non.\ee
By Knight's identity, the definition of $u_{ki(m)}\left(=\mu_{i}-\left(1,\boldsymbol{x}_{i(m)}^{T}\right)\left(b_{k(m)}^{*},\bbeta_{(m)}^{*T}\right)^{T}\right)$
, Assumption \ref{a2}(ii) for any $\left(b_{1},\cdots,b_{K},\bbeta_{(m)}^{T}\right)^{T}\in \vartheta_{m}(\delta_{n})$
and Taylor expansion,
we have
\be
&&\bar{Q}_{(m)}\left(\left(b_{1},\cdots,b_{K},\bbeta_{(m)}^{T}\right)^{T}\right)-\bar{Q}_{(m)}\left(\left(b_{1}^{*},\cdots,b_{K}^{*},\bbeta_{(m)}^{*T}\right)^{T}\right)\non\\
&=&\sum_{k=1}^{K}\mathrm{E}\left\{\rho_{\tau_{k}}\left(y_{i}-b_{k(m)}-\boldsymbol{x}_{i(m)}^{T}\bbeta_{(m)}\right)\right\}
-\sum_{k=1}^{K}\mathrm{E}\left\{\rho_{\tau_{k}}\left(y_{i}-b_{k(m)}-\boldsymbol{x}_{i(m)}^{T}\bbeta_{(m)}^{*}\right)\right\}\non\\
&=&\sum_{k=1}^{K}\mathrm{E}\left\{\rho_{\tau_{k}}\left(\varepsilon_{ki(m)}+u_{ki(m)}-\left(1,\boldsymbol{x}_{i(m)}^{T}\right)\left\{b_{k(m)}-b_{k(m)}^{*},\left(\bbeta_{(m)}-\bbeta_{(m)}^{*}\right)^{T}\right\}^{T}\right)
\right.\non\\
&&\left.-\rho_{\tau_{k}}\left(\varepsilon_{ki(m)}+u_{ki(m)}\right)\right\}\non\\
&=&\sum_{k=1}^{K}\mathrm{E}\left(\int_{0}^{\left(1,\boldsymbol{x}_{i(m)}^{T}\right)\left\{b_{k(m)}-b_{k(m)}^{*},\left(\bbeta_{(m)}-\bbeta_{(m)}^{*}\right)^{T}\right\}^{T}}
\left[\boldsymbol{1}\{\varepsilon_{ki(m)}+u_{ki(m)}\leq s\}-\boldsymbol{1}\{\varepsilon_{ki(m)}+u_{ki(m)}\leq0\}\right]\mathrm{d}s
\right)\non\\
&=&\sum_{k=1}^{K}\mathrm{E}\left[\int_{0}^{\left(1,\boldsymbol{x}_{i(m)}^{T}\right)\left\{b_{k(m)}-b_{k(m)}^{*},\left(\bbeta_{(m)}-\bbeta_{(m)}^{*}\right)^{T}\right\}^{T}}
\left\{F(-u_{ki(m)}+s|\boldsymbol{x}_{i})-F(-u_{ki(m)}|\boldsymbol{x}_{i})\right\}\mathrm{d}s
\right]\non\\
&=&\sum_{k=1}^{K}\mathrm{E}\left[\int_{0}^{\left(1,\boldsymbol{x}_{i(m)}^{T}\right)\left\{b_{k(m)}-b_{k(m)}^{*},\left(\bbeta_{(m)}-\bbeta_{(m)}^{*}\right)^{T}\right\}^{T}}
\left\{f(-u_{ki(m)}|\boldsymbol{x}_{i})s+f'(-u_{ki(m)}+\iota_{2}s|\boldsymbol{x}_{i})s^{2}\right\}\mathrm{d}s
\right],
\ee  where $\iota_{2}\in(0,1)$. By the Mean-Value Theorem and Assumption \ref{a2}, we also have
\be
&&\bar{Q}_{(m)}\left(\left(b_{1},\cdots,b_{K},\bbeta_{(m)}^{T}\right)^{T}\right)-\bar{Q}_{(m)}\left(\left(b_{1}^{*},\cdots,b_{K}^{*},\bbeta_{(m)}^{*T}\right)^{T}\right)\non\\
&=&\sum_{k=1}^{K}\mathrm{E}\left[\int_{0}^{\left(1,\boldsymbol{x}_{i(m)}^{T}\right)\left\{b_{k(m)}-b_{k(m)}^{*},\left(\bbeta_{(m)}-\bbeta_{(m)}^{*}\right)^{T}\right\}^{T}}
\left\{\int_{0}^{1}f(-u_{ki(m)}+ts|\boldsymbol{x}_{i})\mathrm{d}t\cdot s\right\}\mathrm{d}s
\right]\non\\
&\leq&\frac{\bar{c}_{f}}{2}\sum_{k=1}^{K}\mathrm{E}\left\{\left(1,\boldsymbol{x}_{i(m)}^{T}\right)\left(b_{k(m)}-b_{k(m)}^{*},\left(\bbeta_{(m)}-\bbeta_{(m)}^{*}\right)^{T}\right)^{T}\right\}^{2}\non\\
&=&\frac{\bar{c}_{f}}{2}\sum_{k=1}^{K}\left(b_{k(m)}-b_{k(m)}^{*},\left(\bbeta_{(m)}-\bbeta_{(m)}^{*}\right)^{T}\right)
\mathrm{E}\left\{\left(1,\boldsymbol{x}_{i(m)}^{T}\right)^{T}\left(1,\boldsymbol{x}_{i(m)}^{T}\right)\right\}\non\\
&&\left(b_{k(m)}-b_{k(m)}^{*},\left(\bbeta_{(m)}-\bbeta_{(m)}^{*}\right)^{T}\right)\non\\
&\leq&\frac{\bar{c}_{A}}{2}\sum_{k=1}^{K}\left\|
\left(b_{k(m)}-b_{k(m)}^{*},\left(\bbeta_{(m)}-\bbeta_{(m)}^{*}\right)^{T}\right)^{T}\right\|^{2}.
\ee
For any $\left(b_{1},\cdots,b_{K},\bbeta_{(m)}^{T}\right)^{T}\in \vartheta_{m}(\delta_{n})$, we have
\be \label{eq:maxo}
\max_{1\leq k\leq K}
\left\|\left(b_{k(m)}-b_{k(m)}^{*},\left(\bbeta_{(m)}-\bbeta_{(m)}^{*}\right)^{T}\right)^{T}\right\|^{2}=o(1).\ee
By the part (ii) of Assumption \ref{a2} and Cauchy-Schwarz inequalities,
\be
&&\mathrm{E}\left\{\left(1,\boldsymbol{x}_{i(m)}^{T}\right)\left(b_{k(m)}-b_{k(m)}^{*},\left(\bbeta_{(m)}-\bbeta_{(m)}^{*}\right)^{T}\right)^{T}\right\}^{3}\non\\
&\leq&\left[\mathrm{E}\left\{\left(1,\boldsymbol{x}_{i(m)}^{T}\right)\left(b_{k(m)}-b_{k(m)}^{*},\left(\bbeta_{(m)}-\bbeta_{(m)}^{*}\right)^{T}\right)^{T}\right\}^{4}\right]^{1/2}\non\\
&&\left[\mathrm{E}\left\{\left(1,\boldsymbol{x}_{i(m)}^{T}\right)\left(b_{k(m)}-b_{k(m)}^{*},\left(\bbeta_{(m)}-\bbeta_{(m)}^{*}\right)^{T}\right)^{T}\right\}^{2}\right]^{1/2}\non\\
&\leq&\left[\left\|\left(b_{k(m)}-b_{k(m)}^{*},\left(\bbeta_{(m)}-\bbeta_{(m)}^{*}\right)^{T}\right)^{T}\right\|^{2}\right.\non\\
&&\left.\mathrm{E}\left\{\left(1,\boldsymbol{x}_{i(m)}^{T}\right)^{T}\left(1,\boldsymbol{x}_{i(m)}^{T}\right)\left(b_{k(m)}-b_{k(m)}^{*},\left(\bbeta_{(m)}-\bbeta_{(m)}^{*}\right)^{T}\right)^{T}\right\}^{2}\right]^{1/2}\non\\
&&\left[\mathrm{E}\left\{\left(1,\boldsymbol{x}_{i(m)}^{T}\right)\left(b_{k(m)}-b_{k(m)}^{*},\left(\bbeta_{(m)}-\bbeta_{(m)}^{*}\right)^{T}\right)^{T}\right\}^{2}\right]^{1/2}\non\\
&\leq&\left[\left\|\left(b_{k(m)}-b_{k(m)}^{*},\left(\bbeta_{(m)}-\bbeta_{(m)}^{*}\right)^{T}\right)^{T}\right\|^{2}\left(b_{k(m)}-b_{k(m)}^{*},\left(\bbeta_{(m)}-\bbeta_{(m)}^{*}\right)^{T}\right)\right.\non\\
&&\left.\mathrm{E}\left\{\left(1,\boldsymbol{x}_{i(m)}^{T}\right)^{T}\left(1,\boldsymbol{x}_{i(m)}^{T}\right)\right\}^{2}\left(b_{k(m)}-b_{k(m)}^{*},\left(\bbeta_{(m)}-\bbeta_{(m)}^{*}\right)^{T}\right)^{T}\right]^{1/2}\non\\
&&\left[\mathrm{E}\left\{\left(1,\boldsymbol{x}_{i(m)}^{T}\right)\left(b_{k(m)}-b_{k(m)}^{*},\left(\bbeta_{(m)}-\bbeta_{(m)}^{*}\right)^{T}\right)^{T}\right\}^{2}\right]^{1/2}\non\\
&\leq&\lambda_{\max}\left(\mathrm{E}\left\{\left(1,\boldsymbol{x}_{i(m)}^{T}\right)^{T}\left(1,\boldsymbol{x}_{i(m)}^{T}\right)\right\}^{2}\right)
\lambda_{\max}\left(\mathrm{E}\left\{\left(1,\boldsymbol{x}_{i(m)}^{T}\right)^{T}\left(1,\boldsymbol{x}_{i(m)}^{T}\right)\right\}\right)\non\\
&&\left\|\left(b_{k(m)}-b_{k(m)}^{*},\left(\bbeta_{(m)}-\bbeta_{(m)}^{*}\right)^{T}\right)^{T}\right\|^{4}\non\\
&\leq&\bar{c}_{A}^{3}/c_{f}^{3}
\left\|\left(b_{k(m)}-b_{k(m)}^{*},\left(\bbeta_{(m)}-\bbeta_{(m)}^{*}\right)^{T}\right)^{T}\right\|^{4}
\ee
and then
\be
&&\frac{\bar{c}_{f}\sum_{k=1}^{K}\frac{1}{3}\mathrm{E}\left|\left\{\left(1,\boldsymbol{x}_{i(m)}^{T}\right)\left(b_{k(m)}-b_{k(m)}^{*},\left(\bbeta_{(m)}-\bbeta_{(m)}^{*}\right)^{T}\right)^{T}\right\}^{3}\right|}
{\sum_{k=1}^{K}\frac{1}{2}\mathrm{E}f(-u_{ki(m)}|\boldsymbol{x}_{i})\left\{\left(1,\boldsymbol{x}_{i(m)}^{T}\right)\left(b_{k(m)}-b_{k(m)}^{*},\left(\bbeta_{(m)}-\bbeta_{(m)}^{*}\right)^{T}\right)^{T}\right\}^{2}}\non\\
&\leq&\frac{\bar{c}_{A}^{3}/c_{f}^{2}\sum_{k=1}^{K}
\left\|\left(b_{k(m)}-b_{k(m)}^{*},\left(\bbeta_{(m)}-\bbeta_{(m)}^{*}\right)^{T}\right)^{T}\right\|^{4}
}
{6\sum_{k=1}^{K}\left(b_{k(m)}-b_{k(m)}^{*},\left(\bbeta_{(m)}-\bbeta_{(m)}^{*}\right)^{T}\right)A_{k(m)}
\left(b_{k(m)}-b_{k(m)}^{*},\left(\bbeta_{(m)}-\bbeta_{(m)}^{*}\right)^{T}\right)^{T}}\non\\
&\leq&\frac{\bar{c}_{A}^{3}/c_{f}^{2}\sum_{k=1}^{K}
\left\|\left(b_{k(m)}-b_{k(m)}^{*},\left(\bbeta_{(m)}-\bbeta_{(m)}^{*}\right)^{T}\right)^{T}\right\|^{4}
}
{6\underline{c}_{A_{(m)}}\sum_{k=1}^{K}\left\|\left(b_{k(m)}-b_{k(m)}^{*},\left(\bbeta_{(m)}-\bbeta_{(m)}^{*}\right)^{T}\right)^{T}\right\|^{2}}\non\\
&\leq&\frac{\bar{c}_{A}^{3}\max_{1\leq k\leq K}
\left\|\left(b_{k(m)}-b_{k(m)}^{*},\left(\bbeta_{(m)}-\bbeta_{(m)}^{*}\right)^{T}\right)^{T}\right\|^{2}}
{6\underline{c}_{A}c_{f}^{2}}\non\\
&\doteq&A_{n,\bar{k}}=o(1),
\ee
where the last upper bound is consistent for all $\left(b_{1},\cdots,b_{K},\bbeta_{(m)}^{T}\right)^{T}\in \vartheta_{m}(\delta_{n})$
and its order may be related to $n$ and $\bar{k}$, denoted as $A_{n,\bar{k}}$.
Then
\be
&&\bar{Q}_{(m)}\left(\left(b_{1i(m)},\cdots,b_{Ki(m)},\bbeta_{i(m)}^{T}\right)^{T}\right)-\bar{Q}_{(m)}\left(\left(b_{1}^{*},\cdots,b_{K}^{*},\bbeta_{(m)}^{*T}\right)^{T}\right)\non\\
&=&\sum_{k=1}^{K}\mathrm{E}\left[\int_{0}^{\left(1,\boldsymbol{x}_{i(m)}^{T}\right)\left(b_{k(m)}-b_{k(m)}^{*},\left(\bbeta_{(m)}-\bbeta_{(m)}^{*}\right)^{T}\right)^{T}}
\left\{f(-u_{ki(m)}|\boldsymbol{x}_{i})s+f'(-u_{ki(m)}+\iota_{2}s|\boldsymbol{x}_{i})s^{2}\right\}\mathrm{d}s
\right]\non\\
&\geq& \sum_{k=1}^{K}\mathrm{E}\left[\left|\int_{0}^{\left(1,\boldsymbol{x}_{i(m)}^{T}\right)\left(b_{k(m)}-b_{k(m)}^{*},\left(\bbeta_{(m)}-\bbeta_{(m)}^{*}\right)^{T}\right)^{T}}
\left\{f(-u_{ki(m)}|\boldsymbol{x}_{i})s\right\}\mathrm{d}s\right|\right.\non\\
&&-
\left.\left|\int_{0}^{\left(1,\boldsymbol{x}_{i(m)}^{T}\right)\left(b_{k(m)}-b_{k(m)}^{*},\left(\bbeta_{(m)}-\bbeta_{(m)}^{*}\right)^{T}\right)^{T}}
\left\{f'(-u_{ki(m)}+\iota_{2}s|\boldsymbol{x}_{i})s^{2}\right\}\mathrm{d}s\right|\right]\non\\
&=& \left\{\frac{1}{2}\sum_{k=1}^{K}\left(b_{ki(m)}-b_{k(m)}^{*},\left(\bbeta_{i(m)}-\bbeta_{(m)}^{*}\right)^{T}\right)
A_{k(m)}\left(b_{ki(m)}-b_{k(m)}^{*},\left(\bbeta_{i(m)}-\bbeta_{(m)}^{*}\right)^{T}\right)^{T}\right\}\non\\
&&\left(1-\frac
{\sum_{k=1}^{K}\mathrm{E}\left[\left|\int_{0}^{\left(1,\boldsymbol{x}_{i(m)}^{T}\right)\left(b_{k(m)}-b_{k(m)}^{*},\left(\bbeta_{(m)}-\bbeta_{(m)}^{*}\right)^{T}\right)^{T}}
\left\{f'(-u_{ki(m)}+\iota_{2}s|\boldsymbol{x}_{i})s^{2}\right\}\mathrm{d}s
\right|\right]}{\sum_{k=1}^{K}\mathrm{E}\left[\int_{0}^{\left(1,\boldsymbol{x}_{i(m)}^{T}\right)\left(b_{k(m)}-b_{k(m)}^{*},\left(\bbeta_{(m)}-\bbeta_{(m)}^{*}\right)^{T}\right)^{T}}
\left\{f(-u_{ki(m)}|\boldsymbol{x}_{i})s\right\}\mathrm{d}s
\right]}\right)\non\\
&\geq&\frac{\underline{c}_{A}\delta_{n}}{2}\left(1-A_{n,\bar{k}}\right).
\ee
Referring to the definition of $D(\delta_{n})$ in (\ref{Ddelta}), we know
$D(\delta_{n})\geq\frac{\underline{c}_{A}\delta_{n}}{2}\left(1-A_{n,\bar{k}}\right)$.
Then by Boole's inequality,
we obtain
\be\label{W:1}
&&\mathrm{P}\left(\max_{1\leq i\leq n}\max_{1\leq m\leq M}\sum_{k=1}^{K}\left\|\left(\hat{b}_{ki(m)}-b_{k(m)}^{*},\left(\hat{\bbeta}_{i(m)}-\bbeta_{(m)}^{*}\right)^{T}\right)^{T}\right\|\geq\delta_{n}\right)\non\\
&\leq&nM\max_{1\leq i\leq n}\max_{1\leq m\leq M}\mathrm{P}\left(\sum_{k=1}^{K}\left\|\left(\hat{b}_{ki(m)}-b_{k(m)}^{*},\left(\hat{\bbeta}_{i(m)}-\bbeta_{(m)}^{*}\right)^{T}\right)^{T}\right\|\geq\delta_{n}\right)\non\\
&\leq&nM\max_{1\leq i\leq n}\max_{1\leq m\leq M}\mathrm{P}\left(\bar{Q}_{(m)}\left(\left(\hat{b}_{1i(m)},\cdots,\hat{b}_{Ki(m)},\hat{\bbeta}_{i(m)}^{T}\right)^{T}\right)\right.\non\\
&&~~~~~~~~~~~~~~~~~~~~~~~~~~~~\left.-\bar{Q}_{(m)}\left(\left(b_{1}^{*},\cdots,b_{K}^{*},\bbeta_{(m)}^{*T}\right)^{T}\right)\geq D(\delta_{n})\right)\non\\
&\leq&nM\max_{1\leq i\leq n}\max_{1\leq m\leq M}\mathrm{P}\left(\frac{\bar{c}_{A}}{2}\mathbb{W}_{i(m)}\geq 2n\frac{\underline{c}_{A}\delta_{n}}{2}\left(1-A_{n,\bar{k}}\right)\right)\non\\
&=&nM\max_{1\leq i\leq n}\max_{1\leq m\leq M}\mathrm{P}\left(\mathbb{W}_{i(m)}\geq \frac{n\underline{c}_{A}\delta_{n}}{\bar{c}_{A}}\left(1-A_{n,\bar{k}}\right)\right),
\ee
where $$\mathbb{W}_{i(m)}=n\sum_{k=1}^{K}\left\|
\left(\hat{b}_{ki(m)}-b_{k(m)}^{*},\left(\hat{\bbeta}_{i(m)}-\bbeta_{(m)}^{*}\right)^{T}\right)\right\|^{2}
.$$
Following the proof of Lemma \ref{lem:distr} , we can also show that
$$\sqrt{n}C_ {(m)}V_{(m)}^{-1/2}\left\{\left(\hat{b}_{1i(m)},\cdots,\hat{b}_{Ki(m)},\hat{\bbeta}_{i(m)}^{T}\right)^{T}-\left(b_{1(m)}^{*},\cdots,b_{K(m)}^{*},\bbeta_{(m)}^{*T}\right)^{T}\right\}
\overd N(0,C_{0}),$$ which implies that
\be\label{W:2}
\tilde{\beta}_{i(m)}&=&\sqrt{n}\left(C_{(m)}C_{(m)}^{T}\right)^{-1/2}C_{(m)}V_{(m)}^{-1/2}\left\{\left(\hat{b}_{1i(m)},\cdots,\hat{b}_{Ki(m)},\hat{\bbeta}_{i(m)}^{T}\right)^{T}-\left(b_{1(m)}^{*},\cdots,b_{K(m)}^{*},\bbeta_{(m)}^{*T}\right)^{T}\right\}\non\\
&\overd& \mathcal{N}(0,I_{l_{m}}).\ee
Rewriting $\sqrt{n}\left\{\left(\hat{b}_{1i(m)},\cdots,\hat{b}_{Ki(m)},\hat{\bbeta}_{i(m)}^{T}\right)^{T}-\left(b_{1(m)}^{*},\cdots,b_{K(m)}^{*},\bbeta_{(m)}^{*T}\right)^{T}\right\}
$ in terms of $\tilde{\beta}_{i(m)}$ yields
\be
&&\sqrt{n}\left\{\left(\hat{b}_{1i(m)},\cdots,\hat{b}_{Ki(m)},\hat{\bbeta}_{i(m)}^{T}\right)^{T}-\left(b_{1(m)}^{*},\cdots,b_{K(m)}^{*},\bbeta_{(m)}^{*T}\right)^{T}\right\}\non\\
&=&\left\{\left(C_{(m)}C_{(m)}^{T}\right)^{-1/2}C_{(m)}V_{(m)}^{-1/2}\right\}^{-}\tilde{\beta}_{i(m)},
\ee
where $\left\{\left(C_{(m)}C_{(m)}^{T}\right)^{-1/2}C_{(m)}V_{(m)}^{-1/2}\right\}^{-}$ is the generalized inverse matrix
$\left(C_{(m)}C_{(m)}^{T}\right)^{-1/2}C_{(m)}V_{(m)}^{-1/2}$.
Let $G_{k}$ denote a $\left(\tilde{k}_{m}+1\right)\times\left(\tilde{k}_{m}+K\right)$ matrix such that
\be
&&\sqrt{n}G_{k}\left\{\left(\hat{b}_{1i(m)},\cdots,\hat{b}_{Ki(m)},\hat{\bbeta}_{i(m)}^{T}\right)^{T}-\left(b_{1(m)}^{*},\cdots,b_{K(m)}^{*},\bbeta_{(m)}^{*T}\right)^{T}\right\}\non\\
&=&\sqrt{n}\left\{\left(\hat{b}_{ki(m)},\hat{\bbeta}_{i(m)}^{T}\right)^{T}-\left(b_{k(m)}^{*},\bbeta_{(m)}^{*T}\right)^{T}\right\}.
\ee Note that $\sum_{k=1}^{K}G_{k}^{T}G_{k}=I_{\tilde{k}_{m}+K}$, where $I_{\tilde{k}_{m}+K}$ is a $\left(\tilde{k}_{m}+K\right)\times\left(\tilde{k}_{m}+K\right)$ unit matrix.
By Assumptions \ref{a2}-\ref{a3}, it follows that
\be
\mathbb{W}_{i(m)}&=&n\sum_{k=1}^{K}\left\{\hat{b}_{ki(m)}-b_{k(m)}^{*},\left(\hat{\bbeta}_{i(m)}-\bbeta_{(m)}^{*}\right)^{T}\right\}\left\{\hat{b}_{ki(m)}-b_{k(m)}^{*},\left(\hat{\bbeta}_{i(m)}-\bbeta_{(m)}^{*}\right)^{T}\right\}^{T}\non\\
&=&\sum_{k=1}^{K}\tilde{\beta}_{i(m)}^{T}\left[G_{k}\left\{\left(C_{(m)}C_{(m)}^{T}\right)^{-1/2}C_{(m)}V_{(m)}^{-1/2}\right\}^{-}\right]^{T}\non\\
&&G_{k}\left\{\left(C_{(m)}C_{(m)}^{T}\right)^{-1/2}C_{(m)}V_{(m)}^{-1/2}\right\}^{-}\tilde{\beta}_{i(m)}\non\\
&=& \sum_{k=1}^{K}G_{k}^{T}G_{k}
\tilde{\beta}_{i(m)}^{T}\left\{\left(C_{(m)}C_{(m)}^{T}\right)^{-1/2}C_{(m)}V_{(m)}^{-1}
C_{(m)}^{T}\left(C_{(m)}C_{(m)}^{T}\right)^{-1/2}\right\}^{-1}\tilde{\beta}_{i(m)}\non\\
&=&
\tilde{\beta}_{i(m)}^{T}\left\{\left(C_{(m)}C_{(m)}^{T}\right)^{1/2}\left(C_{(m)}V_{(m)}^{-1}
C_{(m)}^{T}\right)^{-1}\left(C_{(m)}C_{(m)}^{T}\right)^{1/2}\right\}\tilde{\beta}_{i(m)}\non\\
&\leq& \lambda_{\max}(V_{(m)})\tilde{\beta}_{i(m)}^{T}\tilde{\beta}_{i(m)}
\leq \left(\bar{c}_{\bar{B}}/\underline{c}_{\bar{A}}^{2}\right)\left\|\tilde{\beta}_{i(m)}\right\|^{2},
\ee
where we have used the fact that $A^{T}BA\leq \lambda_{\max}(B)A^{T}A$ for any real symmetric matrix B and conformable matrix $A$ and that $A^{-T}A=(AA^{T})^{-}$ (see, e.g., \cite{Bernstein2005Matrix} Proposition 6.1.6xvii). Let $c_{AB}=\bar{c}_{A}\bar{c}_{\bar{B}}/\underline{c}_{\bar{A}}^{2}$ and
$\bar{l}=\max_{1\leq m\leq M}l_{m}$. Then by Lemma 2.1 of \cite{Bartlett1981An}
\be\label{W:3}
&&nM\max_{1\leq i\leq n}\max_{1\leq m\leq M}\mathrm{P}\left(\mathbb{W}_{i(m)}\geq n\underline{c}_{A}\delta_{n}\left(1-A_{n,\bar{k}}\right)/\bar{c}_{A}\right)\non\\
&\leq&nM\max_{1\leq i\leq n}\max_{1\leq m\leq M}\mathrm{P}\left(\left\|\tilde{\beta}_{i(m)}\right\|^{2}\geq n\underline{c}_{A}\delta_{n}\left(1-A_{n,\bar{k}}\right)/c_{AB}\right)\non\\
&\leq&\limsup_{n\rightarrow\infty}nM\max_{1\leq m\leq M}\mathrm{P}\left(\chi^{2}(l_{m})\geq n\underline{c}_{A}\delta_{n}\left(1-A_{n,\bar{k}}\right)/c_{AB}\right)\non\\
&\leq&\limsup_{n\rightarrow\infty}nM\cdot\mathrm{P}\left(\chi^{2}(\bar{l})\geq n\underline{c}_{A}\delta_{n}\left(1-A_{n,\bar{k}}\right)/c_{AB}\right)\non\\
&\leq&\limsup_{n\rightarrow\infty}nM\cdot\mathrm{P}\left(\chi^{2}(\bar{l})\geq \bar{l}+\left\{n\underline{c}_{A}\delta_{n}\left(1-A_{n,\bar{k}}\right)/c_{AB}-\bar{l}\right\}\right)\non\\
&=&\limsup_{n\rightarrow\infty}nM\exp\left\{-\frac{\left(n\underline{c}_{A}\delta_{n}\left(1-A_{n,\bar{k}}\right)/c_{AB}-\bar{l}\right)}{2}\right.\non\\
&&\left.\cdot\left[1-\log\left\{n\underline{c}_{A}\delta_{n}\left(1-A_{n,\bar{k}}\right)/\left(\bar{l}c_{AB}\right)\right\}
/\left\{n\underline{c}_{A}\delta_{n}\left(1-A_{n,\bar{k}}\right)/\left(\bar{l}c_{AB}\right)-1\right\}\right]
\right\}\non\\
&=&0,
\ee
because $A_{n,\bar{k}}=o(1)$, $nM\exp\left\{-n\underline{c}_{A}\delta_{n}/\left(2c_{AB}\right)\right\}
=nMn^{-0.5LK\bar{k}\underline{c}_{A}/c_{AB}}=o(1)$ for
sufficiently large $L$ and $\log\left\{n\underline{c}_{A}\delta_{n}/\left(\bar{l}c_{AB}\right)\right\}
/\left\{n\underline{c}_{A}\delta_{n}/\left(\bar{l}c_{AB}\right)-1\right\}=o(1)$ under Assumption \ref{a3}(ii). Combining (\ref{W:1})-(\ref{W:3}), we have shown
$$\limsup_{n\rightarrow\infty}\mathrm{P}\left(\max_{1\leq i\leq n}\max_{1\leq m\leq M}\sum_{k=1}^{K}\left\|\hat{b}_{ki(m)}-b_{k(m)}^{*},\left(\hat{\bbeta}_{i(m)}-\bbeta_{(m)}^{*}\right)^{T}\right\|^{2}\geq\delta_{n}\right)=0$$
that and thus (i) follows.
\section{Proof of Theorem \ref{th3.1}}\label{sec:th3.1}
Following the proof of Theorem 2.1 in \cite{Li1987Asymptotic}, it suffices to show, as $n\rightarrow\infty$, that
\begin{eqnarray}
\label{eq5.23}
\sup_{\boldsymbol{w}\in\mathcal{W}}\left| \frac{CV_{n}(\boldsymbol{w})
-\mathrm{CPE}_{n}(\boldsymbol{w})}{\mathrm{CPE}_{n}(\boldsymbol{w})}\right|
=o_{p}(1)
\end{eqnarray}
in the weight space $\mathcal{W}$.

Similar to the proof of Theorem 3.3 in \cite{lu2015jackknife}, we have
\begin{eqnarray}
&&CV_{n}(\boldsymbol{w})-\mathrm{CPE}_{n}(\boldsymbol{w})\nonumber\\
&=&\frac{1}{n}\sum_{i=1}^{n}\left\{\sum_{k=1}^{K}\rho_{\tau_{k}}\left(y_{i}-\sum_{m=1}^{M}w_{m}\hat{b}_{ki(m)}-\sum_{m=1}^{M}w_{m}\boldsymbol{x}_{i(m)}^{T}\hat{\bbeta}_{i(m)}\right)\right\}\non\\
&&-\mathrm{E}\left\{\sum_{k=1}^{K}\rho_{\tau_{k}}\left(y-\sum_{m=1}^{M}w_{m}\hat{b}_{k,m}-\sum_{m=1}^{M}w_{m}\boldsymbol{x}_{m}^{T}\hat{\bbeta}_{(m)}\right)\Big|\mathcal{D}_{n}\right\}\non\\
&=&\frac{1}{n}\sum_{i=1}^{n}\left[\sum_{k=1}^{K}\left\{\rho_{\tau_{k}}\left(y_{i}-\sum_{m=1}^{M}w_{m}\hat{b}_{ki(m)}-\sum_{m=1}^{M}w_{m}\boldsymbol{x}_{i(m)}^{T}\hat{\bbeta}_{i(m)}\right)-\rho_{\tau_{k}}(\varepsilon_{i}-b_{k})\right\}\right]\non\\
&&-\left[\mathrm{CPE}_{n}(\boldsymbol{w})-\mathrm{E}\left\{\sum_{k=1}^{K}\rho_{\tau_{k}}(\varepsilon-b_{k})\right\}\right]\non\\
&&+\frac{1}{n}\sum_{i=1}^{n}\left[\sum_{k=1}^{K}\rho_{\tau_{k}}(\varepsilon_{i}-b_{k})-\mathrm{E}\left\{\sum_{k=1}^{K}\rho_{\tau_{k}}(\varepsilon-b_{k})\right\}\right]\non\\
&=&\frac{1}{n}\sum_{i=1}^{n}\left[\sum_{k=1}^{K}\left\{\rho_{\tau_{k}}\left(\varepsilon_{i}-b_{k}+b_{k}-\sum_{m=1}^{M}w_{m}\hat{b}_{ki(m)}+\mu_{i}-\sum_{m=1}^{M}w_{m}\boldsymbol{x}_{i(m)}^{T}\hat{\bbeta}_{i(m)}\right)-\rho_{\tau_{k}}(\varepsilon_{i}-b_{k})\right\}\right]\non\\
&&-\left[\mathrm{CPE}_{n}(\boldsymbol{w})-\mathrm{E}\left\{\sum_{k=1}^{K}\rho_{\tau_{k}}(\varepsilon-b_{k})\right\}\right]\non\\
&&+\frac{1}{n}\sum_{i=1}^{n}\left[\sum_{k=1}^{K}\rho_{\tau_{k}}(\varepsilon_{i}-b_{k})-\mathrm{E}\left\{\sum_{k=1}^{K}\rho_{\tau_{k}}(\varepsilon-b_{k})\right\}\right]\non\\
&=&CV_{1n}(\boldsymbol{w})+CV_{2n}(\boldsymbol{w})+CV_{3n}(\boldsymbol{w})+CV_{4n}(\boldsymbol{w})+CV_{5n},\nonumber
\end{eqnarray}
where
\be
&&CV_{1n}(\boldsymbol{w})=\frac{1}{n}\sum_{i=1}^{n}\sum_{k=1}^{K}\left\{\left(b_{k}-\sum_{m=1}^{M}w_{m}\hat{b}_{ki(m)}+\mu_{i}-\sum_{m=1}^{M}w_{m}\boldsymbol{x}_{i(m)}^{T}\hat{\bbeta}_{i(m)}\right)\psi_{\tau}(\varepsilon_{i}-b_{k})\right\},\non\\
&&CV_{2n}(\boldsymbol{w})=\frac{1}{n}\sum_{i=1}^{n}\sum_{k=1}^{K}\int_{0}^{\sum_{m=1}^{M}w_{m}\hat{b}_{ki(m)}-b_{k}+\sum_{m=1}^{M}w_{m}\boldsymbol{x}_{i(m)}^{T}\hat{\bbeta}_{i(m)}-\mu_{i}}
   \left[\boldsymbol{1}\{\varepsilon_{i}-b_{k}\leq s\}\right.\non\\&&\left.~~~~~~~~~~~~~~~~~~~~-\boldsymbol{1}\{\varepsilon_{i}-b_{k}\leq 0\}-F(s+b_{k}|\boldsymbol{x}_{i})+F(b_{k}|\boldsymbol{x}_{i})\right]\mathrm{d}s,\non\\
&&CV_{3n}(\boldsymbol{w})=\frac{1}{n}\sum_{i=1}^{n}\sum_{k=1}^{K}\left(\int_{0}^{\sum_{m=1}^{M}w_{m}\hat{b}_{ki(m)}-b_{k}+\sum_{m=1}^{M}w_{m}\boldsymbol{x}_{i(m)}^{T}\hat{\bbeta}_{i(m)}-\mu_{i}}\left\{F(s+b_{k}|\boldsymbol{x}_{i})-F(b_{k}|\boldsymbol{x}_{i})\right\}\mathrm{d}s\right.\non\\
    &&~~~~~~~~~~~~~~~~~~~~\left.-\mathrm{E}_{\boldsymbol{x}_{i}}\left[\int_{0}^{\sum_{m=1}^{M}w_{m}\hat{b}_{ki(m)}-b_{k}+\sum_{m=1}^{M}w_{m}\boldsymbol{x}_{i(m)}^{T}\hat{\bbeta}_{i(m)}-\mu_{i}}\left\{F(s+b_{k}|\boldsymbol{x}_{i})-F(b_{k}|\boldsymbol{x}_{i})\right\}\mathrm{d}s\right]\right),\non\\
&&CV_{4n}(\boldsymbol{w})=\frac{1}{n}\sum_{i=1}^{n}\sum_{k=1}^{K}\mathrm{E}_{\boldsymbol{x}_{i}}\left(\int_{0}^{\sum_{m=1}^{M}w_{m}\hat{b}_{ki(m)}-b_{k}+\sum_{m=1}^{M}w_{m}\boldsymbol{x}_{i(m)}^{T}\hat{\bbeta}_{i(m)}-\mu_{i}}\left\{F(s+b_{k}|\boldsymbol{x}_{i})-F(b_{k}|\boldsymbol{x}_{i})\right\}\mathrm{d}s\right.\non\\
    &&~~~~~~~~~~~~~~~~~~~~\left.-\mathrm{E}_{\boldsymbol{x}_{i}}\left[\int_{0}^{\sum_{m=1}^{M}w_{m}\hat{b}_{k(m)}-b_{k}+\sum_{m=1}^{M}w_{m}\boldsymbol{x}_{i(m)}^{T}\hat{\bbeta}_{(m)}-\mu_{i}}\left\{F(s+b_{k}|\boldsymbol{x}_{i})-F(b_{k}|\boldsymbol{x}_{i})\right\}\mathrm{d}s\right]\right),\non\\
\mathrm{and}\non\\
&&CV_{5n}=\frac{1}{n}\sum_{i=1}^{n}\sum_{k=1}^{K}\left\{\rho_{\tau_{k}}(\varepsilon_{i}-b_{k})-\mathrm{E}\rho_{\tau_{k}}(\varepsilon_{i}-b_{k})\right\}\non.
\ee
Hence
\allowdisplaybreaks
\be
&&\sup_{\boldsymbol{w}\in\mathcal{W}}\left| \frac{CV_{n}(\boldsymbol{w})
-\mathrm{CPE}_{n}(\boldsymbol{w})}{\mathrm{CPE}_{n}(\boldsymbol{w})}\right|\non\\
&=&\sup_{\boldsymbol{w}\in\mathcal{W}}\left| \frac{CV_{1n}(\boldsymbol{w})+CV_{2n}(\boldsymbol{w})+CV_{3n}(\boldsymbol{w})+CV_{4n}(\boldsymbol{w})+CV_{5n}}
{\mathrm{CPE}_{n}(\boldsymbol{w})}\right|\non\\
&\leq& \frac{\sup_{\boldsymbol{w}\in\mathcal{W}}\left|CV_{1n}(\boldsymbol{w})+CV_{2n}(\boldsymbol{w})+CV_{3n}(\boldsymbol{w})+CV_{4n}(\boldsymbol{w})+CV_{5n}\right|}
{\min_{\boldsymbol{w}\in\mathcal{W}}\left|\mathrm{CPE}_{n}(\boldsymbol{w})\right|}\non\\
&\leq& \left\{\sup_{\boldsymbol{w}\in\mathcal{W}}\left|CV_{1n}(\boldsymbol{w})\right|+\sup_{\boldsymbol{w}\in\mathcal{W}}\left|CV_{2n}(\boldsymbol{w})\right|
+\sup_{\boldsymbol{w}\in\mathcal{W}}\left|CV_{3n}(\boldsymbol{w})\right|\right.\non\\
&&\left.~~~+\sup_{\boldsymbol{w}\in\mathcal{W}}\left|CV_{4n}(\boldsymbol{w})\right|
+\sup_{\boldsymbol{w}\in\mathcal{W}}\left|CV_{5n}\right|\right\}\Big/
\min_{\boldsymbol{w}\in\mathcal{W}}\left|\mathrm{CPE}_{n}(\boldsymbol{w})\right|.\non
\ee
Therefore, to prove \ref{eq5.23}, we need to prove
\begin{equation*}
\begin{aligned}
&(\mathrm{i})\mathrm{min}_{\boldsymbol{w}\in\mathcal{W}}
\mathrm{CPE}_{n}(\boldsymbol{w})\geq \sum_{k=1}^{K}\mathrm{E}\left\{\rho_{\tau_{k}}(\varepsilon-b_{k})\right\}-o_{p}(1);\\
&(\mathrm{ii})\mathrm{sup}_{\boldsymbol{w}\in\mathcal{W}}|CV_{1n}(\boldsymbol{w})|=o_{p}(1);\\
&(\mathrm{iii})\mathrm{sup}_{\boldsymbol{w}\in\mathcal{W}}|CV_{2n}(\boldsymbol{w})|=o_{p}(1);\\
&(\mathrm{iv})\mathrm{sup}_{\boldsymbol{w}\in\mathcal{W}}|CV_{3n}(\boldsymbol{w})|=o_{p}(1);\\
&(\mathrm{v})\mathrm{sup}_{\boldsymbol{w}\in\mathcal{W}}|CV_{4n}(\boldsymbol{w})|=o_{p}(1);\\
&(\mathrm{vi})CV_{5n}=o_{p}(1).
\end{aligned}
\end{equation*}
We establish (i)-(v) merely, because (vi) follows by the weak law of large
numbers.

Firstly, let us establish (i). Let $u_{k}(\boldsymbol{w})=\mu-\sum_{m=1}^{M}w_{m}\boldsymbol{x}_{m}^{T}\bbeta_{(m)}^{*}
+b_{k}-\sum_{m=1}^{M}w_{m}b_{k(m)}^{*}$. Then
\be
&&\mathrm{CPE}_{n}(\boldsymbol{w})-\mathrm{E}\left\{\sum_{k=1}^{K}\rho_{\tau_{k}}\left(\varepsilon-b_{k}+u_{k}(\boldsymbol{w})\right)\right\}\non\\
&=&\mathrm{E}\left\{\sum_{k=1}^{K}\rho_{\tau_{k}}\left(y-\sum_{m=1}^{M}w_{m}\hat{b}_{k(m)}-\sum_{m=1}^{M}w_{m}\boldsymbol{x}_{m}^{T}\hat{\bbeta}_{(m)}\right)\Big|\mathcal{D}_{n}\right\}
-\mathrm{E}\left\{\sum_{k=1}^{K}\rho_{\tau_{k}}\left(\varepsilon-b_{k}+u_{k}(\boldsymbol{w})\right)\right\}\non\\
&=&\mathrm{E}\left[\sum_{k=1}^{K}\left\{\rho_{\tau_{k}}\left(y-\sum_{m=1}^{M}w_{m}\hat{b}_{k,m}-\sum_{m=1}^{M}w_{m}\boldsymbol{x}_{m}^{T}\hat{\bbeta}_{(m)}\right)
-\rho_{\tau_{k}}\left(\varepsilon-b_{k}+u_{k}(\boldsymbol{w})\right)\right\}\Big|\mathcal{D}_{n}\right].\non
\ee
By Knight's identity,
\be
&&\rho_{\tau_{k}}\left(y-\sum_{m=1}^{M}w_{m}\hat{b}_{k(m)}-\sum_{m=1}^{M}w_{m}\boldsymbol{x}_{m}^{T}\hat{\bbeta}_{(m)}\right)
-\rho_{\tau_{k}}\left(\varepsilon-b_{k}+u_{k}(\boldsymbol{w})\right)\non\\
&=&\rho_{\tau_{k}}\left(\varepsilon-b_{k}+u_{k}(\boldsymbol{w})-\sum_{m=1}^{M}w_{m}\left(\hat{b}_{k(m)}-b_{k(m)}^{*}\right)-\sum_{m=1}^{M}w_{m}\boldsymbol{x}_{m}^{T}\left(\hat{\bbeta}_{(m)}-\bbeta_{(m)}^{*}\right)\right)
-\rho_{\tau_{k}}\left(\varepsilon-b_{k}+u_{k}(\boldsymbol{w})\right)\non\\
&=&-\left\{\sum_{m=1}^{M}w_{m}\left(\hat{b}_{k(m)}-b_{k(m)}^{*}\right)+\sum_{m=1}^{M}w_{m}\boldsymbol{x}_{m}^{T}\left(\hat{b}_{k(m)}-b_{k(m)}^{*}\right)\right\}
\psi_{\tau_{k}}\left(\varepsilon-b_{k}+u_{k}(\boldsymbol{w})\right)\non\\
&&+\int_{0}^{\sum_{m=1}^{M}w_{m}\left(\hat{b}_{k(m)}-b_{k(m)}^{*}\right)+\sum_{m=1}^{M}w_{m}\boldsymbol{x}_{m}^{T}\left(\hat{b}_{k(m)}-b_{k(m)}^{*}\right)}
\left[\boldsymbol{1}\{\varepsilon-b_{k}+u_{k}(\boldsymbol{w})\leq s\}-\boldsymbol{1}\{\varepsilon-b_{k}+u_{k}(\boldsymbol{w})\leq 0\}\right]\mathrm{d}s,\non
\ee
where $\psi_{\tau_{k}}(\cdot)=\tau_{k}-\boldsymbol{1}\{\cdot <0\}$.
And the first order condition for the population minimization
problem (\ref{problem:2}) implies that
$\mathrm{E}\left[\sum_{k=1}^{K}\psi_{\tau_{k}}\left\{\varepsilon-b_{k}+u_{k}(\boldsymbol{w})\right\}(1,\boldsymbol{x}_{m}^{T})^{T}\right]=0$, therefore,
\be
&&\mathrm{CPE}_{n}(\boldsymbol{w})-\mathrm{E}\left\{\sum_{k=1}^{K}\rho_{\tau_{k}}\left(\varepsilon-b_{k}+u_{k}(\boldsymbol{w})\right)\right\}\non\\
&=&\mathrm{E}\left(\sum_{k=1}^{K}\int_{0}^{\sum_{m=1}^{M}w_{m}\left(\hat{b}_{k(m)}-b_{k(m)}^{*}\right)+\sum_{m=1}^{M}w_{m}\boldsymbol{x}_{m}^{T}\left(\hat{\bbeta}_{(m)}-\bbeta_{(m)}^{*}\right)}
\left[\boldsymbol{1}\{\varepsilon-b_{k}+u_{k}(\boldsymbol{w})\leq s\}\right.\right.\non\\
&&\left.\left.-\boldsymbol{1}\{\varepsilon-b_{k}+u_{k}(\boldsymbol{w})\leq 0\}\right]\mathrm{d}s\mid\mathcal{D}_{n}\right)\non\\
&=&\mathrm{E}_{\boldsymbol{x}}\left[\sum_{k=1}^{K}\int_{0}^{\sum_{m=1}^{M}w_{m}\left(\hat{b}_{k(m)}-b_{k(m)}^{*}\right)+\sum_{m=1}^{M}w_{m}\boldsymbol{x}_{m}^{T}\left(\hat{\bbeta}_{(m)}-\bbeta_{(m)}^{*}\right)}
\left\{F\left(s+b_{k}-u_{k}(\boldsymbol{w})\mid\boldsymbol{x}\right)\right.\right.\non\\
&&\left.\left.-F\left(b_{k}-u_{k}(\boldsymbol{w})\mid\boldsymbol{x}\right)\right\}\mathrm{d}s\right]\non.
\ee
Further, by Taylor expansion, Jensen inequality, Assumption \ref{a2}-\ref{a3} and Lemma \ref{lem:distr},
\be
&&\mathrm{CPE}_{n}(\boldsymbol{w})-\mathrm{E}\left\{\sum_{k=1}^{K}\rho_{\tau_{k}}\left(\varepsilon-b_{k}+u_{k}(\boldsymbol{w})\right)\right\}\non\\
&=&\mathrm{E}_{\boldsymbol{x}}\left[\sum_{k=1}^{K}\int_{0}^{\sum_{m=1}^{M}w_{m}\left(\hat{b}_{k(m)}-b_{k(m)}^{*}\right)+\sum_{m=1}^{M}w_{m}\boldsymbol{x}_{m}^{T}\left(\hat{\bbeta}_{(m)}-\bbeta_{(m)}^{*}\right)}
\left\{F\left(s+b_{k}-u_{k}(\boldsymbol{w})\mid\boldsymbol{x}\right)\right.\right.\non\\
&&\left.\left.-F(b_{k}-u_{k}(\boldsymbol{w})\mid\boldsymbol{x})\right\}\mathrm{d}s\right]\non\\
&=&\mathrm{E}_{\boldsymbol{x}}\left[\sum_{k=1}^{K}\int_{0}^{\sum_{m=1}^{M}w_{m}\left(\hat{b}_{k(m)}-b_{k(m)}^{*}\right)+\sum_{m=1}^{M}w_{m}\boldsymbol{x}_{m}^{T}\left(\hat{\bbeta}_{(m)}-\bbeta_{(m)}^{*}\right)}
\left\{f\left(b_{k}-u_{k}(\boldsymbol{w})\mid\boldsymbol{x}\right)s
\right.\right.\non\\
&&\left.\left.+f'\left(\kappa s+b_{k}-u_{k}(\boldsymbol{w})\mid\boldsymbol{x}\right)s^{2}\right\}\mathrm{d}s\right]\non\\
&=&\mathrm{E}_{\boldsymbol{x}}\left\{\sum_{k=1}^{K}\int_{0}^{\sum_{m=1}^{M}w_{m}\left(\hat{b}_{k(m)}-b_{k(m)}^{*}\right)+\sum_{m=1}^{M}w_{m}\boldsymbol{x}_{m}^{T}\left(\hat{\bbeta}_{(m)}-\bbeta_{(m)}^{*}\right)}
f\left(b_{k}-u_{k}(\boldsymbol{w})\mid\boldsymbol{x}\right)s\mathrm{d}s\right\}\non\\
&&+\mathrm{E}_{\boldsymbol{x}}\left\{\sum_{k=1}^{K}\int_{0}^{\sum_{m=1}^{M}w_{m}\left(\hat{b}_{k(m)}-b_{k(m)}^{*}\right)+\sum_{m=1}^{M}w_{m}\boldsymbol{x}_{m}^{T}\left(\hat{\bbeta}_{(m)}-\bbeta_{(m)}^{*}\right)}
f'\left(\iota_{3} s+b_{k}-u_{k}(\boldsymbol{w})\mid\boldsymbol{x}\right)s^{2}\mathrm{d}s\right\}\non,
\ee where $\iota_{3} \in(0,1)$.
By Cauchy-Schwarz inequalities and the part (ii) of Lemma \ref{lem:distr}, we obtain that \be
&&\max_{1\leq k\leq K}\max_{1\leq m\leq M}
\left\|\left(\hat{b}_{k(m)}-b_{k(m)}^{*},\left(\hat{\bbeta}_{(m)}-\bbeta_{(m)}^{*}\right)^{T}\right)^{T}\right\|\non\\
&\leq&\sum_{k=1}^{K}\max_{1\leq m\leq M}
\left\|\left(\hat{b}_{k(m)}-b_{k(m)}^{*},\left(\hat{\bbeta}_{(m)}-\bbeta_{(m)}^{*}\right)^{T}\right)^{T}\right\|
\non\\
&\leq&\left\{K\sum_{k=1}^{K}\max_{1\leq m\leq M}
\left\|\left(\hat{b}_{k(m)}-b_{k(m)}^{*},\left(\hat{\bbeta}_{(m)}-\bbeta_{(m)}^{*}\right)^{T}\right)^{T}\right\|^{2}
\right\}^{1/2}\non\\
&=&O_{p}\left(n^{-1/2}K(\bar{k}\log n)^{1/2} \right).
\ee
Then combing with the part (i) of Assumption \ref{a2}, Jensen inequality and assumption $MK^{2}\bar{k}^{2}\log n/n\rightarrow0$, we have
\be
&&\mathrm{E}_{\boldsymbol{x}}\left\{\sum_{k=1}^{K}\int_{0}^{\sum_{m=1}^{M}w_{m}\left(\hat{b}_{k(m)}-b_{k(m)}^{*}\right)+\sum_{m=1}^{M}w_{m}\boldsymbol{x}_{m}^{T}\left(\hat{\bbeta}_{(m)}-\bbeta_{(m)}^{*}\right)}
f'\left(\iota_{3} s+b_{k}-u_{k}(\boldsymbol{w})\mid\boldsymbol{x}\right)s^{2}\mathrm{d}s\right\}\non\\
&\leq&\frac{\bar{c}_{f}}{3}\sum_{k=1}^{K}
\mathrm{E}_{\boldsymbol{x}}\left\{\sum_{m=1}^{M}w_{m}\left(\hat{b}_{k(m)}-b_{k(m)}^{*}\right)+\sum_{m=1}^{M}w_{m}\boldsymbol{x}_{m}^{T}\left(\hat{\bbeta}_{(m)}-\bbeta_{(m)}^{*}\right)\right\}^{3}\non\\
&\leq&\frac{\bar{c}_{f}}{3}\sum_{k=1}^{K}
\mathrm{E}_{\boldsymbol{x}}\left\{\sum_{m=1}^{M}w_{m}\left\|\left(1,\boldsymbol{x}_{m}^{T}\right)^{T}\right\|
\left\|\left(\hat{b}_{k(m)}-b_{k(m)}^{*},\left(\hat{\bbeta}_{(m)}-\bbeta_{(m)}^{*}\right)^{T}\right)^{T}\right\|\right\}^{3}\non\\
&\leq&\frac{\bar{c}_{f}}{3}\sum_{k=1}^{K}
\mathrm{E}_{\boldsymbol{x}}\sum_{m=1}^{M}w_{m}\left\{\left\|\left(1,\boldsymbol{x}_{m}^{T}\right)^{T}\right\|
\left\|\left(\hat{b}_{k(m)}-b_{k(m)}^{*},\left(\hat{\bbeta}_{(m)}-\bbeta_{(m)}^{*}\right)^{T}\right)^{T}\right\|\right\}^{3}\non\\
&\leq&\frac{\bar{c}_{f}}{3}
\max_{1\leq m\leq M}
\mathrm{E}_{\boldsymbol{x}}\left\|\left(1,\boldsymbol{x}_{m}^{T}\right)^{T}\right\|^{3}
\max_{1\leq k\leq K}\max_{1\leq m\leq M}
\left\|\left(\hat{b}_{k(m)}-b_{k(m)}^{*},\left(\hat{\bbeta}_{(m)}-\bbeta_{(m)}^{*}\right)^{T}\right)^{T}\right\|\non\\
&&\sum_{k=1}^{K}\max_{1\leq m\leq M}
\left\|\left(\hat{b}_{k(m)}-b_{k(m)}^{*},\left(\hat{\bbeta}_{(m)}-\bbeta_{(m)}^{*}\right)^{T}\right)^{T}\right\|^{2}\non\\
&=&O_{p}\left(n^{-3/2}\bar{k}^{3}K^{2}\left(\log n\right)^{3/2}\right)=o_{p}(1).
\ee
Therefore,
\be
&&\mathrm{CPE}_{n}(\boldsymbol{w})-\mathrm{E}\left\{\sum_{k=1}^{K}\rho_{\tau_{k}}\left(\varepsilon-b_{k}+u_{k}(\boldsymbol{w})\right)\right\}\non\\
&=&\frac{1}{2}\mathrm{E}_{\boldsymbol{x}}\left(\sum_{k=1}^{K}f\left\{b_{k}-u_{k}(\boldsymbol{w})\mid\boldsymbol{x}\right\}
\left[\sum_{m=1}^{M}w_{m}\left\{(1,\boldsymbol{x}_{m}^{T})\left(\hat{b}_{k(m)}-b_{k(m)}^{*},\left(\hat{\bbeta}_{(m)}-\bbeta_{(m)}^{*}\right)^{T}\right)^{T}\right\}\right]^{2}\right)+o_{p}(1).\non
\ee
Noting that
\be
&&\frac{1}{2}\mathrm{E}_{\boldsymbol{x}}\left(\sum_{k=1}^{K}f\left\{b_{k}-u_{k}(\boldsymbol{w})\mid\boldsymbol{x}\right\}
\left[\sum_{m=1}^{M}w_{m}\left\{(1,\boldsymbol{x}_{m}^{T})\left(\hat{b}_{k(m)}-b_{k(m)}^{*},\left(\hat{\bbeta}_{(m)}-\bbeta_{(m)}^{*}\right)^{T}\right)^{T}\right\}\right]^{2}\right)\non\\
&\leq&\frac{1}{2}\mathrm{E}_{\boldsymbol{x}}\left(\sum_{k=1}^{K}f\left\{b_{k}-u_{k}(\boldsymbol{w})\mid\boldsymbol{x}\right\}
\sum_{m=1}^{M}w_{m}\left\{(1,\boldsymbol{x}_{m}^{T})\left(\hat{b}_{k(m)}-b_{k(m)}^{*},\left(\hat{\bbeta}_{(m)}-\bbeta_{(m)}^{*}\right)^{T}\right)^{T}\right\}^{2}\right)\non\\
&=&\frac{1}{2}\left[\sum_{k=1}^{K}\sum_{m=1}^{M}w_{m}\left(\hat{b}_{k(m)}-b_{k(m)}^{*},\left(\hat{\bbeta}_{(m)}-\bbeta_{(m)}^{*}\right)^{T}\right)
\right.\non\\
&&\left.\mathrm{E}\left\{f\left(b_{k}-u_{k}(\boldsymbol{w})\mid\boldsymbol{x}\right)\left(1,\boldsymbol{x}_{m}^{T}\right)\left(1,\boldsymbol{x}_{m}^{T}\right)^{T}\right\}\left(\hat{b}_{k(m)}-b_{k(m)}^{*},\left(\hat{\bbeta}_{(m)}-\bbeta_{(m)}^{*}\right)^{T}\right)^{T}
\right]\non\\
&\leq&\frac{\bar{c}_{A}}{2}\max_{1\leq m\leq M}\sum_{k=1}^{K}\left\|\left(\hat{b}_{k(m)}-b_{k(m)}^{*},\left(\hat{\bbeta}_{(m)}-\bbeta_{(m)}^{*}\right)^{T}\right)^{T}\right\|^{2}\non\\
&=&O_{p}\left(\bar{c}_{A}n^{-1}K\bar{k}\log n\right)=o_{p}(1),
\ee
Then we can concluded that
\be
\mathrm{CPE}_{n}(\boldsymbol{w})-\mathrm{E}\left\{\sum_{k=1}^{K}\rho_{\tau_{k}}\left(\varepsilon-b_{k}+u_{k}(\boldsymbol{w})\right)\right\}
=o_{p}(1).
\ee
Let $D(t)=\mathrm{E}\left\{\sum_{k=1}^{K}\rho_{\tau_{k}}(\varepsilon-b_{k}+t)-\sum_{k=1}^{K}\rho_{\tau_{k}}(\varepsilon-b_{k})\right\}$
where $t\in\mathbb{R}$. It is well known that $D(t)$ has a global minimum at $t=0$. This implies that
$$\min_{\boldsymbol{w}\in\mathcal{W}}\mathrm{E}\left\{\sum_{k=1}^{K}\rho_{\tau_{k}}\left(\varepsilon-b_{k}+u_{k}(\boldsymbol{w})\right)\right\}
\geq\mathrm{E}\left\{\sum_{k=1}^{K}\rho_{\tau_{k}}(\varepsilon-b_{k})\right\}.$$
 Then, we have
\be
\min_{\boldsymbol{w}\in\mathcal{W}}\mathrm{CPE}_{n}(\boldsymbol{w})
&=&\min_{\boldsymbol{w}\in\mathcal{W}}\mathrm{E}\left\{\sum_{k=1}^{K}\rho_{\tau_{k}}\left(\varepsilon-b_{k}+u_{k}(\boldsymbol{w})\right)\right\}
-o_{p}(1)\non\\
&\geq&\mathrm{E}\left\{\sum_{k=1}^{K}\rho_{\tau_{k}}(\varepsilon-b_{k})\right\}-o_{p}(1).\non\ee

Secondly, we establish (ii). Decompose $$CV_{1n}(\boldsymbol{w})=CV_{1n,1}(\boldsymbol{w})-CV_{1n,2}(\boldsymbol{w}),$$
where $$CV_{1n,1}(\boldsymbol{w})=\frac{1}{n}\sum_{i=1}^{n}\sum_{k=1}^{K}\left\{\mu_{i}-\sum_{m=1}^{M}w_{m}\left(1,\boldsymbol{x}_{i(m)}^{T}\right)\left(b_{k(m)}^{*},\bbeta_{(m)}^{*T}\right)^{T}\right\}\psi_{\tau_{k}}(\varepsilon_{i}-b_{k})$$
and $$CV_{1n,2}(\boldsymbol{w})=\frac{1}{n}\sum_{i=1}^{n}\sum_{k=1}^{K}\left\{\sum_{m=1}^{M}w_{m}
\left(1,\boldsymbol{x}_{i(m)}^{T}\right)\left(\hat{b}_{ki(m)}-b_{k(m)}^{*},\left(\hat{\bbeta}_{i(m)}-\bbeta_{(m)}^{*}\right)^{T}\right)^{T}\right\}\psi_{\tau_{k}}(\varepsilon_{i}-b_{k}).$$
To prove (ii), it suffices to show
$\mathrm{sup}_{\boldsymbol{w}\in\mathcal{W}}|CV_{1n,1}(\boldsymbol{w})|=o_{p}(1)$ and $\mathrm{sup}_{\boldsymbol{w}\in\mathcal{W}}|CV_{1n,2}(\boldsymbol{w})|=o_{p}(1).$
When both
$M$ and
$\bar{k}=\mathrm{max}_{1\leq m\leq M}\tilde{k}_{m}$ are finite, based on the proof of Theorem 3.3
in \cite{lu2015jackknife}, we can apply the Glivenko-Cantelli Theorem \citep[e.g., Theorem 2.4.1 in][]{van1996weak} to show that $\mathrm{sup}_{\boldsymbol{w}\in\mathcal{W}}|CV_{1n,1}(\boldsymbol{w})|=o_{p}(1)$.
Following \cite{lu2015jackknife}, let us define the class of functions
$$\mathcal{G}=\{g(\cdot,\cdot;\boldsymbol{w}): \boldsymbol{w}\in\mathcal{W}\},$$
where $g(\cdot,\cdot;\boldsymbol{w}):\mathbb{R}\times\mathbb{R}^{d_{\boldsymbol{x}}}\rightarrow\mathbb{R}$ is $$g(\varepsilon_{i},\boldsymbol{x}_{i};\boldsymbol{w})=\sum_{k=1}^{K}\left\{\mu_{i}-\sum_{m=1}^{M}w_{m}\left(1,\boldsymbol{x}_{i(m)}^{T}\right)\left(b_{k(m)}^{*},\bbeta_{(m)}^{*T}\right)^{T}\right\}\psi_{\tau_{k}}(\varepsilon_{i}-b_{k}).$$
Define the metric $|\cdot|_{1}$ on $\mathcal{W}$, where $|\boldsymbol{w}-\overline{\boldsymbol{w}}|_{1}=\sum_{m=1}^{M}|w_{m}-\overline{w}_{m}|,$ for any $\boldsymbol{w}=(w_{1},\ldots,w_{M})\in\mathcal{W}$ and $\overline{\boldsymbol{w}}=(\overline{w}_{1},\ldots,\overline{w}_{M})\in\mathcal{W}$.
The $\delta$-covering number
of $\mathcal{W}$ in composite quantile model averaging is $\mathcal{N}(\delta,\mathcal{W},|\cdot|_{1})=O\left(1/\delta^{M-1}\right)$.
Note that $|g(\varepsilon_{i},\boldsymbol{x}_{i};\boldsymbol{w})-g(\varepsilon_{i},\boldsymbol{x}_{i};\overline{\boldsymbol{w}})|\leq
c_{\Theta}|\boldsymbol{w}-\overline{\boldsymbol{w}}|_{1}\max_{1\leq m\leq M}\left\|\left(1,\boldsymbol{x}_{i(m)}^{T}\right)^{T}\right\|$, where $c_{\Theta}=K\max_{1\leq m\leq M}\left\|\left(b_{k(m)}^{*},\bbeta_{(m)}^{*T}\right)^{T}\right\|=O\left(K\bar{k}^{1/2}\right)$, and that $\mathrm{E}\max_{1\leq m\leq M}\left\|\left(1,\boldsymbol{x}_{i(m)}^{T}\right)^{T}\right\|<\infty$ in the case of finite $M$ and $\bar{k}$, implying that the  $\varepsilon$-bracketing number  of $\mathcal{G}$ with respect to the $L_{1}(P)$-norm is given by
$\mathcal{N}_{[~]}\left(\delta,\mathcal{G},L_{1}(P)\right)\leq C/\delta^{M-1}$ for some finite $C$. Thus, by applying Theorem 2.4.1 in \cite{van1996weak}, we can conclude that $\mathcal{G}$ is Glivenko-Cantelli.

When either $M\rightarrow\infty$ or $\bar{k}\rightarrow\infty$ as
$n\rightarrow\infty$,
Let $h_{n}=1/\left\{K\left(\overline{k}+1\right)\log n\right\}$ and create grids using regions of the form $W_{j}=\{\boldsymbol{w}:|\boldsymbol{w}-\boldsymbol{w}_{j}|_{1}\leq h_{n}\},$ where $\boldsymbol{w}_{j}=(w_{j1},\cdots,w_{jM})$. $\mathcal{W}$ can be covered with
$N=O\left(1/h_{n}^{M-1}\right)$ regions $W_{j},\ j=1,\ldots,N$.
Observe that the following result holds uniformly in $j$,
\be
&&\mathrm{sup}_{\boldsymbol{w}\in\mathcal{W}_{j}}|CV_{1n,1}(\boldsymbol{w})-CV_{1n,1}(\boldsymbol{w}_{j})|\non\\
&=&\mathrm{sup}_{\boldsymbol{w}\in\mathcal{W}_{j}}\left|\frac{1}{n}\sum_{i=1}^{n}\sum_{k=1}^{K}\left\{\sum_{m=1}^{M}(w_{m}-w_{jm})\left(1,\boldsymbol{x}_{i(m)}^{T}\right)^{T}\left(b_{k(m)}^{*},\bbeta_{(m)}^{*T}\right)^{T}\right\}\psi_{\tau_{k}}(\varepsilon_{i}-b_{k})\right|\non\\
&\leq&\mathrm{sup}_{\boldsymbol{w}\in\mathcal{W}_{j}}\left|\frac{1}{n}\sum_{i=1}^{n}\sum_{k=1}^{K}\left\{\sum_{m=1}^{M}|w_{m}-w_{jm}|\left\|\left(1,\boldsymbol{x}_{i(m)}^{T}\right)^{T}\right\|\left\|\left(b_{k(m)}^{*},\bbeta_{(m)}^{*T}\right)^{T}\right\|\right\}\right|\non\\
&\leq&Kc_{\bbeta}\max_{1\leq m\leq M}
\frac{1}{n}\sum_{i=1}^{n}\left\|\left(1,\boldsymbol{x}_{i(m)}^{T}\right)^{T}\right\|
\mathrm{sup}_{\boldsymbol{w}\in\mathcal{W}_{j}}\sum_{m=1}^{M}|w_{m}-w_{jm}|\non\\
&\leq&Kc_{\bbeta}O_{p}\left(\overline{k}^{1/2}\right)h_{n}=o_{p}(1),
\ee
where $c_{\bbeta}=\max_{1\leq m\leq M,1\leq k\leq K}\left\|\left(b_{k(m)}^{*},\bbeta_{(m)}^{*T}\right)^{T}\right\|=O\left(\overline{k}^{1/2}\right)$ and
here we have used the fact that
\be
\max_{1\leq m\leq M}\frac{1}{n}\sum_{i=1}^{n}\left\|\left(1,\boldsymbol{x}_{i(m)}^{T}\right)^{T}\right\|
&\leq&\max_{1\leq m\leq M}\frac{1}{n}\sum_{i=1}^{n}\mathrm{E}\left\|\left(1,\boldsymbol{x}_{i(m)}^{T}\right)^{T}\right\|\non\\
&&+\max_{1\leq m\leq M}\left|\frac{1}{n}\sum_{i=1}^{n}\left\{\left\|\left(1,\boldsymbol{x}_{i(m)}^{T}\right)^{T}\right\|-\mathrm{E}\left\|\left(1,\boldsymbol{x}_{i(m)}^{T}\right)^{T}\right\|\right\}\right|\non\\
&=&O\left(\bar{k}^{1/2}\right)+o_{p}(1)=O_{p}\left(\bar{k}^{1/2}\right).
\ee
Therefore,
\be
\mathrm{sup}_{\boldsymbol{w}\in\mathcal{W}}\left|CV_{1n,1}(\boldsymbol{w})\right|
&=&\max_{1\leq j\leq N}\mathrm{sup}_{\boldsymbol{w}\in W_{j}}\left|CV_{1n,1}(\boldsymbol{w})\right|\non\\
&\leq&\max_{1\leq j\leq N}\left|CV_{1n,1}(\boldsymbol{w}_{j})\right|+\max_{1\leq j\leq N}\mathrm{sup}_{\boldsymbol{w}\in W_{j}}\left|CV_{1n,1}(\boldsymbol{w})-CV_{1n,1}(\boldsymbol{w}_{j})\right|\non\\
&=&\max_{1\leq j\leq N}\left|CV_{1n,1}(\boldsymbol{w}_{j})\right|+o_{p}(1)
\ee
Let $u_{ki}(\boldsymbol{w}_{j})=\mu_{i}+b_{k}-\sum_{m=1}^{M}w_{jm}\left(1,\boldsymbol{x}_{i(m)}^{T}\right)\left(b_{k(m)}^{*},\bbeta_{(m)}^{*T}\right)^{T}$ and $e_{n}=\left(nK^{2}\bar{k}\right)^{1/2}$. Noting that $$u_{ki}(\boldsymbol{w}_{j})=\sum_{m=1}^{M}w_{jm}\left\{\mu_{i}+b_{k}-\left(1,\boldsymbol{x}_{i(m)}^{T}\right)\left(b_{k(m)}^{*},\bbeta_{(m)}^{*T}\right)^{T}\right\}
\leq\max_{1\leq m\leq M}\left|u_{ki(m)}\right|,$$
where $u_{ki(m)}=\mu_{i}+b_{k}-\left(1,\boldsymbol{x}_{i(m)}^{T}\right)\left(b_{k(m)}^{*},\bbeta_{(m)}^{*T}\right)^{T}$.
Let $U_{i}=\sum_{k=1}^{K}\max_{1\leq m\leq M}\left|u_{ki(m)}\right|$,
then for any $\epsilon>0$,
\be
 &&\mathrm{P}\left(\max_{1\leq j\leq N}\left|CV_{1n,1}(\boldsymbol{w}_{j})\right|\geq 2\epsilon\right)\non\\
 &=&\mathrm{P}\left(\max_{1\leq j\leq N}\left|\frac{1}{n}\sum_{i=1}^{n}\sum_{k=1}^{K}u_{ki}(\boldsymbol{w}_{j})\psi_{\tau_{k}}(\varepsilon_{i}-b_{k})\right|\geq 2\epsilon\right)\non\\
 &\leq&\mathrm{P}\left(\left|\frac{1}{n}\sum_{i=1}^{n}\sum_{k=1}^{K}\max_{1\leq m\leq M}\left|u_{ki(m)}\right|\psi_{\tau_{k}}(\varepsilon_{i}-b_{k})\right|\geq 2\epsilon\right)\non\\
 &\leq& P\left(\left|\frac{1}{n}\sum_{i=1}^{n}\sum_{k=1}^{K}\max_{1\leq m\leq M}\left|u_{ki(m)}\right|\psi_{\tau_{k}}(\varepsilon_{i}-b_{k})\cdot \boldsymbol{1}\left\{U_{i}\leq e_{n}\right\}\right|\geq\epsilon\right)\non\\
 &&~~~~~~+P\left(\left|\frac{1}{n}\sum_{i=1}^{n}\sum_{k=1}^{K}\max_{1\leq m\leq M}\left|u_{ki(m)}\right|\psi_{\tau_{k}}(\varepsilon_{i}-b_{k})\cdot \boldsymbol{1}\left\{U_{i}\geq e_{n}\right\}\right| \geq\epsilon\right)\non\\
 &=&T_{11}+T_{12},~\text{say}.
\ee

We prove $\max_{1\leq j\leq N}|CV_{1n,1}(\boldsymbol{w}_{j})|=o_{p}(1)$ by showing that $T_{11}=o(1)$ and $T_{12}=o(1)$.
Noting that
\be
&&\mathrm{E}\left[\sum_{k=1}^{K}\max_{1\leq m\leq M}\left|u_{ki(m)}\right|\boldsymbol{1}\left\{\left|U_{i}\right|\leq e_{n}\right\}\psi_{\tau_{k}}(\varepsilon_{i}-b_{k})\right]\non\\
&=&\mathrm{E}\left[\mathrm{E}\left[\sum_{k=1}^{K}\max_{1\leq m\leq M}\left|u_{ki(m)}\right|\boldsymbol{1}\left\{\left|U_{i}\right|\leq e_{n}\right\}\psi_{\tau_{k}}(\varepsilon_{i}-b_{k})\big|\boldsymbol{x}_{i}\right]\right]\non\\
&=&\mathrm{E}\left[\sum_{k=1}^{K}\max_{1\leq m\leq M}\left|u_{ki(m)}\right|\boldsymbol{1}\left\{\left|U_{i}\right|\leq e_{n}\right\}\mathrm{E}\left\{\psi_{\tau_{k}}(\varepsilon_{i}-b_{k})\big|\boldsymbol{x}_{i}\right\}\right]\non\\
&=&0,
\ee
where the last equality follows the fact that $\mathrm{E}\left\{\psi_{\tau_{k}}(\varepsilon_{i}-b_{k})\big|\boldsymbol{x}_{i}\right\}=0$.
Then \be
&&\mathrm{Var}\left[\sum_{k=1}^{K}\max_{1\leq m\leq M}\left|u_{ki(m)}\right|\psi_{\tau_{k}}(\varepsilon_{i}-b_{k})\cdot \boldsymbol{1}\left\{U_{i}\leq e_{n}\right\}\right]\non\\
&=&\mathrm{E}\left[\sum_{k=1}^{K}\max_{1\leq m\leq M}\left|u_{ki(m)}\right|\psi_{\tau_{k}}(\varepsilon_{i}-b_{k})\cdot \boldsymbol{1}\left\{U_{i}\leq e_{n}\right\}\right]^{2}\non\\
&\leq&\mathrm{E}\left[\sum_{k=1}^{K}\max_{1\leq m\leq M}\left|u_{ki(m)}\right|\psi_{\tau_{k}}(\varepsilon_{i}-b_{k})\right]^{2}\non\\
&\leq&\mathrm{E}\left(\sum_{k=1}^{K}\max_{1\leq m\leq M}\left|u_{ki(m)}\right|\right)^{2}\non\\
&=&\mathrm{E}\left\{\sum_{k=1}^{K}\max_{1\leq m\leq M}\left|\mu_{i}+b_{k}-\left(1,\boldsymbol{x}_{i(m)}^{T}\right)\left(b_{k(m)}^{*},\bbeta_{(m)}^{*T}\right)^{T}\right|\right\}^{2}\non\\
&\leq&K^{2}\bar{k}\bar{\sigma}^{2}\text{~for~some~}\bar{\sigma}^{2}<\infty,
\ee
where $\bar{k}=\max_{1\leq m\leq M}\tilde{k}_{m}$.
Therefore, by Bernstein's inequalities in \cite{serfling1980approximation}, we have
\be\label{T11}
T_{11}
&\leq&2\exp\left\{-\frac{n^{2}\varepsilon^{2}}{2nK^{2}\bar{k}\bar{\sigma}^{2}+2n\varepsilon e_{n}/3}\right\}\non\\
&=&2\exp\left\{-\frac{n\varepsilon^{2}}{2K^{2}\bar{k}\bar{\sigma}^{2}+2\varepsilon e_{n}/3}\right\}\non\\
&=&2\exp\left\{-\frac{n\varepsilon^{2}}{2K^{2}\bar{k}\bar{\sigma}^{2}+2\varepsilon (nK^{2}\bar{k})^{1/2}/3}\right\}=o(1).
\ee
In fact,
by noting that $M K^{2}\bar{k}^{2}\log n/n\rightarrow0$, we have
$n/(K^{2}\bar{k})\rightarrow \infty $, and thus $K^{2}\bar{k}/(nK^{2}\bar{k})^{1/2}\rightarrow0 $. Hence we have $n\epsilon^{2}/\{2K^{2}\bar{k}\bar{\sigma}^{2}+2\epsilon (nK^{2}\bar{k})^{1/2}/3\}\rightarrow \infty$ and thus $T_{11}=o(1)$.

Noting that $\mathrm{E}\left(\left|U_{i}K^{-1}\bar{k}^{-1/2}\right|^{4}\right)=O(1)$
and $\frac{nK^{2}\bar{k}}{e_{n}^{2}}=O(1)$, by Boole's, Markov's and Cauchy-Schwarz inequalities,
\be
T_{12}&=&\mathrm{P}\left(\left|\frac{1}{n}\sum_{i=1}^{n}\sum_{k=1}^{K}\max_{1\leq m\leq M}\left|u_{ki(m)}\right|\psi_{\tau_{k}}(\varepsilon_{i}-b_{k})\cdot \boldsymbol{1}\left\{U_{i}\geq e_{n}\right\}\right| \geq\varepsilon\right)\non\\
&\leq&\mathrm{P}\left(\frac{1}{n}\sum_{i=1}^{n}\sum_{k=1}^{K}\max_{1\leq m\leq M}\left|u_{ki(m)}\right|\cdot \boldsymbol{1}\left\{U_{i}\geq e_{n}\right\} \geq\varepsilon\right)\non\\
&\leq&P\left(\max_{1\leq i\leq n}\sum_{k=1}^{K}\max_{1\leq m\leq M}\left|u_{ki(m)}\right|\cdot \boldsymbol{1}\left\{U_{i}\geq e_{n}\right\}\geq \varepsilon\right)\non\\
&\leq&\sum_{i=1}^{n}\mathrm{P}\left(\sum_{k=1}^{K}\max_{1\leq m\leq M}\left|u_{ki(m)}\right|\cdot \boldsymbol{1}\left\{U_{i}\geq e_{n}\right\}\geq \varepsilon\right)\non\\
&\leq&\sum_{i=1}^{n}\mathrm{P}\left(U_{i}K^{-1}\bar{k}^{-1/2}\geq K^{-1}\bar{k}^{-1/2}e_{n}\right)\non\\
&\leq&\frac{nK^{2}\bar{k}}{e_{n}^{2}}
\mathrm{E}\left[\left|U_{i}K^{-1}\bar{k}^{-1/2}\right|^{2}\boldsymbol{1}\left\{U_{i}K^{-1}\bar{k}^{-1/2}\geq K^{-1}\bar{k}^{-1/2}e_{n}\right\}\right]\non\\
&\leq&\frac{nK^{2}\bar{k}}{e_{n}^{2}}\mathrm{E}\left(\left|U_{i}K^{-1}\bar{k}^{-1/2}\right|^{4}\right)^{1/2}\mathrm{E}\left[\boldsymbol{1}\left\{U_{i}K^{-1}\bar{k}^{-1/2}\geq K^{-1}\bar{k}^{-1/2}e_{n}\right\}\right]^{1/2}\non\\
&=&\frac{nK^{2}\bar{k}}{e_{n}^{2}}\mathrm{E}\left(\left|U_{i}K^{-1}\bar{k}^{-1/2}\right|^{4}\right)^{1/2}
\mathrm{P}\left(U_{i}K^{-1}\bar{k}^{-1/2}\geq K^{-1}\bar{k}^{-1/2}e_{n}\right)^{1/2}\non\\
&\leq&\frac{nK^{2}\bar{k}}{e_{n}^{2}}\mathrm{E}\left(\left|U_{i}K^{-1}\bar{k}^{-1/2}\right|^{4}\right)^{1/2}
\left\{\frac{\mathrm{E}\left(\left|U_{i}K^{-1}\bar{k}^{-1/2}\right|^{4}\right)}{(K^{-1}\bar{k}^{-1/2}e_{n})^{4}}
\right\}^{1/2}.\non
\ee
It follows that $\max_{1\leq j\leq N}\left|CV_{1n,1}(\boldsymbol{w}_{j})\right|=o_{p}(1)$ and thus $\mathrm{sup}_{\boldsymbol{w}\in\mathcal{W}}|CV_{1n,1}(\boldsymbol{w})|=o_{p}(1)$.

By the absolute value inequality, Lemma \ref{lem:order} and assumption $M K^{2}\bar{k}^{2}\log n/n\rightarrow0$, we have
\be
\sup_{\boldsymbol{w}\in\mathcal{W}}|CV_{1n,2}(\boldsymbol{w})|&\leq &
\sup_{\boldsymbol{w}\in\mathcal{W}}\frac{1}{n}\sum_{i=1}^{n}\sum_{k=1}^{K}
\sum_{m=1}^{M}w_{m}\left|\left(1,\boldsymbol{x}_{i(m)}^{T}\right)\left(\hat{b}_{ki(m)}-b_{k(m)}^{*},\left(\hat{\bbeta}_{i(m)}-\bbeta_{(m)}^{*}\right)^{T}\right)^{T}\psi_{\tau_{k}}(\varepsilon_{i}-b_{k})\right|\non\\
&\leq& \max_{1\leq i\leq n}\max_{1\leq m\leq M}\sum_{k=1}^{K}\left\|\left(\hat{b}_{ki(m)}-b_{k(m)}^{*},\left(\hat{\bbeta}_{i(m)}-\bbeta_{(m)}^{*}\right)^{T}\right)^{T}\right\|
\max_{1\leq m\leq M}\frac{1}{n}\sum_{i=1}^{n}\left\|\left(1,\boldsymbol{x}_{i(m)}^{T}\right)^{T}\right\|\non\\
&\leq&O_{p}\left(\sqrt{n^{-1}K\bar{k}\log n}\right)O_{p}\left(\overline{k}^{1/2}\right)=o_{p}(1).
\ee
Consequently $\sup_{\boldsymbol{w}\in\mathcal{W}}|CV_{1n,2}(\boldsymbol{w})|=o_{p}(1).$

Thirdly, we establish (iii). Decompose  $$CV_{2n}(\boldsymbol{w})=CV_{2n,1}(\boldsymbol{w})+CV_{2n,2}(\boldsymbol{w}),$$
where \be
CV_{2n,1}(\boldsymbol{w})&=&\frac{1}{n}\sum_{i=1}^{n}\sum_{k=1}^{K}\int_{0}^{\sum_{m=1}^{M}w_{m}\hat{b}_{k(m)}^{*}-b_{k}+\sum_{m=1}^{M}w_{m}\boldsymbol{x}_{i(m)}^{T}\hat{\bbeta}_{(m)}^{*}-\mu_{i}}\left[\boldsymbol{1}\{\varepsilon_{i}-b_{k}\leq s\}-\boldsymbol{1}\{\varepsilon_{i}-b_{k}\leq 0\}\right.\non\\
&&\left.-F(s+b_{k}|\boldsymbol{x}_{i})+F(b_{k}|\boldsymbol{x}_{i})\right]\mathrm{d}s \non
\ee
   and
\be
CV_{2n,2}(\boldsymbol{w})&=&\frac{1}{n}\sum_{i=1}^{n}\sum_{k=1}^{K}\int_{\sum_{m=1}^{M}w_{m}\hat{b}_{k(m)}^{*}-b_{k}+\sum_{m=1}^{M}w_{m}\boldsymbol{x}_{i(m)}^{T}\hat{\bbeta}_{(m)}^{*}-\mu_{i}}^{\sum_{m=1}^{M}w_{m}\hat{b}_{ki(m)}-b_{k}+\sum_{m=1}^{M}w_{m}\boldsymbol{x}_{i(m)}^{T}\hat{\bbeta}_{i(m)}-\mu_{i}}\left[\boldsymbol{1}\{\varepsilon_{i}-b_{k}\leq s\}-\boldsymbol{1}\{\varepsilon_{i}-b_{k}\leq 0\}\right.\non\\
&&\left.-F(s+b_{k}|\boldsymbol{x}_{i})+F(b_{k}|\boldsymbol{x}_{i})\right]\mathrm{d}s.\non\ee
We need to show that $\sup_{\boldsymbol{w}\in\mathcal{W}}|CV_{2n,1}(\boldsymbol{w})|=o_{p}(1)$ and
$\sup_{\boldsymbol{w}\in\mathcal{W}}|CV_{2n,2}(\boldsymbol{w})|=o_{p}(1).$
The proof of $\sup_{\boldsymbol{w}\in\mathcal{W}}|CV_{2n,1}(\boldsymbol{w})|=o_{p}(1)$ is analogous to that of $\sup_{\boldsymbol{w}\in\mathcal{W}}|CV_{1n,1}(\boldsymbol{w})|=o_{p}(1)$ when both $M$ and $\bar{k}=\mathrm{max}_{1\leq m\leq M}\tilde{k}_{m}$ are finite.  The proof is thus omitted for brevity. When either $M\rightarrow\infty$ or $\bar{k}\rightarrow\infty$ as $n\rightarrow\infty$, we also obtain
$$\mathrm{sup}_{\boldsymbol{w}\in\mathcal{W}}|CV_{2n,1}(\boldsymbol{w})|\leq \max_{1\leq j\leq N}|CV_{2n,1}(\boldsymbol{w}_{j})|+o_{p}(1).$$ Hence we only need to prove that $\max_{1\leq j\leq N}|CV_{2n,1}(\boldsymbol{w}_{j})|=o_{p}(1)$.

Let $e_{n}=(nM^{2}\bar{k})^{1/2}$. Then for any $\epsilon>0$,
\be
 &&\mathrm{P}\left(\max_{1\leq j\leq N}\left|CV_{2n,1}(\boldsymbol{w}_{j})\right|\geq 2\epsilon\right)\non\\
 &=&\mathrm{P}\left(\max_{1\leq j\leq N}\left|\frac{1}{n}\sum_{i=1}^{n}\sum_{k=1}^{K}\int_{0}^{\sum_{m=1}^{M}w_{m}\hat{b}_{k(m)}^{*}-b_{k}+\sum_{m=1}^{M}w_{m}\boldsymbol{x}_{i(m)}^{T}\hat{\bbeta}_{(m)}^{*}-\mu_{i}}\left[\boldsymbol{1}\{\varepsilon_{i}-b_{k}\leq s\}-\boldsymbol{1}\{\varepsilon_{i}-b_{k}\leq 0\}\right.\right.\right.\non\\
&&\left.\left.\left.-F(s+b_{k}|\boldsymbol{x}_{i})+F(b_{k}|\boldsymbol{x}_{i})\right]\mathrm{d}s\right|\geq 2\epsilon\right)\non\\
   &\leq&N\cdot\mathrm{P}\left(\left|\sum_{i=1}^{n}\sum_{k=1}^{K}\int_{0}^{\sum_{m=1}^{M}w_{m}\hat{b}_{k(m)}^{*}-b_{k}+\sum_{m=1}^{M}w_{m}\boldsymbol{x}_{i(m)}^{T}\hat{\bbeta}_{(m)}^{*}-\mu_{i}}\left[\boldsymbol{1}\{\varepsilon_{i}-b_{k}\leq s\}-\boldsymbol{1}\{\varepsilon_{i}-b_{k}\leq 0\}\right.\right.\right.\non\\
&&\left.\left.\left.-F(s+b_{k}|\boldsymbol{x}_{i})+F(b_{k}|\boldsymbol{x}_{i})\right]\mathrm{d}s\cdot\boldsymbol{1}\left\{\left|U_{i}\right|\leq e_{n}\right\}\right|\geq n\epsilon\right)\non\\
   &&+\mathrm{P}\left(\max_{1\leq j\leq N}\left|\frac{1}{n}\sum_{i=1}^{n}\sum_{k=1}^{K}\int_{0}^{\sum_{m=1}^{M}w_{m}\hat{b}_{k(m)}^{*}-b_{k}+\sum_{m=1}^{M}w_{m}\boldsymbol{x}_{i(m)}^{T}\hat{\bbeta}_{(m)}^{*}-\mu_{i}}\left[\boldsymbol{1}\{\varepsilon_{i}-b_{k}\leq s\}-\boldsymbol{1}\{\varepsilon_{i}-b_{k}\leq 0\}\right.\right.\right.\non\\
&&\left.\left.\left.-F(s+b_{k}|\boldsymbol{x}_{i})+F(b_{k}|\boldsymbol{x}_{i})\right]\mathrm{d}s\cdot\boldsymbol{1}\left\{\left|U_{i}\right|\geq e_{n}\right\}\right|\geq \epsilon\right)\non\\
 &=&T_{21}+T_{22},\mathrm{say}.
\ee
We will prove $\max_{1\leq j\leq N}|CV_{2n,1}(\boldsymbol{w}_{j})|=o_{p}(1)$ by showing that $T_{21}=o(1)$ and $T_{22}=o(1)$.

Note that
\be
&&\mathrm{E}\left(\sum_{k=1}^{K}\int_{0}^{\sum_{m=1}^{M}w_{m}\hat{b}_{k(m)}^{*}-b_{k}+\sum_{m=1}^{M}w_{m}\boldsymbol{x}_{i(m)}^{T}\hat{\bbeta}_{(m)}^{*}-\mu_{i}}\left[\boldsymbol{1}\{\varepsilon_{i}-b_{k}\leq s\}-\boldsymbol{1}\{\varepsilon_{i}-b_{k}\leq 0\}\right.\right.\non\\
&&\left.\left.-F(s+b_{k}|\boldsymbol{x}_{i})+F(b_{k}|\boldsymbol{x}_{i})\right]\mathrm{d}s\cdot\boldsymbol{1}\left\{\left|U_{i}\right|\leq e_{n}\right\}\right)\non\\
&=&\mathrm{E}\left\{\mathrm{E}\left(\sum_{k=1}^{K}\int_{0}^{\sum_{m=1}^{M}w_{m}\hat{b}_{k(m)}^{*}-b_{k}+\sum_{m=1}^{M}w_{m}\boldsymbol{x}_{i(m)}^{T}\hat{\bbeta}_{(m)}^{*}-\mu_{i}}\left[\boldsymbol{1}\{\varepsilon_{i}-b_{k}\leq s\}-\boldsymbol{1}\{\varepsilon_{i}-b_{k}\leq 0\}\right.\right.\right.\non\\
&&\left.\left.\left.-F(s+b_{k}|\boldsymbol{x}_{i})+F(b_{k}|\boldsymbol{x}_{i})\right]\mathrm{d}s\cdot\boldsymbol{1}\left\{\left|U_{i}\right|\leq e_{n}\right\}|\boldsymbol{x}_{i}\right)\right\}\non\\
&=&\mathrm{E}\left\{\left(\sum_{k=1}^{K}\int_{0}^{\sum_{m=1}^{M}w_{m}\hat{b}_{k(m)}^{*}-b_{k}+\sum_{m=1}^{M}w_{m}\boldsymbol{x}_{i(m)}^{T}\hat{\bbeta}_{(m)}^{*}-\mu_{i}}\mathrm{E}\left[\boldsymbol{1}\{\varepsilon_{i}-b_{k}\leq s\}-\boldsymbol{1}\{\varepsilon_{i}-b_{k}\leq 0\}\right.\right.\right.\non\\
&&\left.\left.\left.-F(s+b_{k}|\boldsymbol{x}_{i})+F(b_{k}|\boldsymbol{x}_{i})|\boldsymbol{x}_{i}\right]\mathrm{d}s\cdot\boldsymbol{1}\left\{\left|U_{i}\right|\leq e_{n}\right\}\right)\right\}\non\\
   &=&0,
\ee
where the last equality follows from the fact that $$\mathrm{E}\left[\boldsymbol{1}\{\varepsilon_{i}-b_{k}\leq s\}-\boldsymbol{1}\{\varepsilon_{i}-b_{k}\leq 0\}-F(s+b_{k}|\boldsymbol{x}_{i})+F(b_{k}|\boldsymbol{x}_{i})|\boldsymbol{x}_{i}\right]=0.$$
Recognizing that $\boldsymbol{1}\{\varepsilon_{i}-b_{k}\leq s\}-\boldsymbol{1}\{\varepsilon_{i}-b_{k}\leq 0\}-F(s+b_{k}|\boldsymbol{x}_{i})+F(b_{k}|\boldsymbol{x}_{i})\leq2$, \be
&&\int_{0}^{\sum_{m=1}^{M}w_{m}\hat{b}_{k(m)}^{*}-b_{k}+\sum_{m=1}^{M}w_{m}\boldsymbol{x}_{i(m)}^{T}\hat{\bbeta}_{(m)}^{*}-\mu_{i}}\left[\boldsymbol{1}\{\varepsilon_{i}-b_{k}\leq s\}-\boldsymbol{1}\{\varepsilon_{i}-b_{k}\leq 0\}\right.\non\\
&&\left.-F(s+b_{k}|\boldsymbol{x}_{i})+F(b_{k}|\boldsymbol{x}_{i})\right]\mathrm{d}s\non\\
&\leq&2\max_{1\leq m\leq M}\left|u_{ki(m)}\right|.\ee
Then \be
&&\mathrm{Var}\left(\sum_{k=1}^{K}\int_{0}^{\sum_{m=1}^{M}w_{m}\hat{b}_{k(m)}^{*}-b_{k}+\sum_{m=1}^{M}w_{m}\boldsymbol{x}_{i(m)}^{T}\hat{\bbeta}_{(m)}^{*}-\mu_{i}}\left[\boldsymbol{1}\{\varepsilon_{i}-b_{k}\leq s\}-\boldsymbol{1}\{\varepsilon_{i}-b_{k}\leq 0\}\right.\right.\non\\
&&\left.\left.-F(s+b_{k}|\boldsymbol{x}_{i})+F(b_{k}|\boldsymbol{x}_{i})\right]\mathrm{d}s\cdot\boldsymbol{1}\left\{\left|U_{i}\right|\leq e_{n}\right\}\right)\non\\
&=&\mathrm{E}\left(\sum_{k=1}^{K}\int_{0}^{\sum_{m=1}^{M}w_{m}\hat{b}_{k(m)}^{*}-b_{k}+\sum_{m=1}^{M}w_{m}\boldsymbol{x}_{i(m)}^{T}\hat{\bbeta}_{(m)}^{*}-\mu_{i}}\left[\boldsymbol{1}\{\varepsilon_{i}-b_{k}\leq s\}-\boldsymbol{1}\{\varepsilon_{i}-b_{k}\leq 0\}\right.\right.\non\\
&&\left.\left.-F(s+b_{k}|\boldsymbol{x}_{i})+F(b_{k}|\boldsymbol{x}_{i})\right]\mathrm{d}s\cdot\boldsymbol{1}\left\{\left|U_{i}\right|\leq e_{n}\right\}\right)^{2}\non\\
&\leq&4\mathrm{E}\left(U_{i}\right)^{2}\leq K^{2}\bar{k}\bar{\sigma}^{2}.
\ee
Therefore, by Bernstein's inequalities \citep[][p.95]{serfling1980approximation}, we have
\be\label{T21}
T_{21}
&\leq&2N\exp\left\{-\frac{n^{2}\epsilon^{2}}{2nK^{2}\bar{k}\bar{\sigma}^{2}+2n\epsilon e_{n}/3}\right\}\non\\
&=&2N\exp\left\{-\frac{n\epsilon^{2}}{2K^{2}\bar{k}\bar{\sigma}^{2}+2\epsilon e_{n}/3}\right\}\non\\
&=&\exp\left\{-\left\{\frac{n\epsilon^{2}}{2K^{2}\bar{k}\bar{\sigma}^{2}+2\epsilon (nK^{2}\bar{k})^{1/2}/3}-M\log(\bar{k}\log n)\right\}\right\}=o(1).
\ee
In fact, by the proof of (\ref{T11}), we have $n\epsilon^{2}/(2K^{2}\bar{k}\bar{\sigma}^{2}+2\epsilon (nK^{2}\bar{k})^{1/2}/3)\rightarrow \infty$.
$n/(K^{2}\bar{k})\gg M\log(\bar{k}\log n)$ under assumption $MK^{2}\bar{k}^{2}\log n/n\rightarrow0$ and thus $T_{21}=o(1)$.

Noting that $\mathrm{E}\left(\left|U_{i}K^{-1}\bar{k}^{-1/2}\right|^{4}\right)=O(1)$
and $\frac{nK^{2}\bar{k}}{e_{n}^{2}}=O(1)$, by Boole's, Markov's and Cauchy-Schwarz inequalities,
\be
T_{22}&=&\mathrm{P}\left(\max_{1\leq j\leq N}\left|\frac{1}{n}\sum_{i=1}^{n}\sum_{k=1}^{K}\int_{0}^{\sum_{m=1}^{M}w_{m}\hat{b}_{k(m)}^{*}-b_{k}+\sum_{m=1}^{M}w_{m}\boldsymbol{x}_{i(m)}^{T}\hat{\bbeta}_{(m)}^{*}-\mu_{i}}\left[\boldsymbol{1}\{\varepsilon_{i}-b_{k}\leq s\}\right.\right.\right.\non\\
&&\left.\left.\left.-\boldsymbol{1}\{\varepsilon_{i}-b_{k}\leq 0\}-F(s+b_{k}|\boldsymbol{x}_{i})+F(b_{k}|\boldsymbol{x}_{i})\right]\mathrm{d}s\cdot\boldsymbol{1}\left\{\left|U_{i}\right|\geq e_{n}\right\}\right|\geq \epsilon\right)\non\\
&\leq&\mathrm{P}\left(\left|\frac{1}{n}\sum_{i=1}^{n}2U_{i}\cdot \boldsymbol{1}\left\{\left|U_{i}\right|\geq e_{n}\right\}\right|\geq \epsilon\right)\non\\
&\leq&\mathrm{P}\left(\max_{1\leq i\leq n}\left|U_{i}\cdot \boldsymbol{1}\left\{\left|U_{i}\right|\geq e_{n}\right\}\right|\geq \epsilon/2\right)\non\\
&\leq&\sum_{i=1}^{n}\mathrm{P}\left(\left|U_{i}K^{-1}\bar{k}^{-1/2}\right|\geq K^{-1}\bar{k}^{-1/2}e_{n}\right)\non\\
&\leq&\frac{nK^{2}\bar{k}}{e_{n}^{2}}\mathrm{E}\left[\left|U_{i}K^{-1}\bar{k}^{-1/2}\right|^{2}
\boldsymbol{1}\left\{\left|U_{i}K^{-1}\bar{k}^{-1/2}\right|\geq K^{-1}\bar{k}^{-1/2}e_{n}\right\}\right]\non\\
&\leq&\frac{nK^{2}\bar{k}}{e_{n}^{2}}\mathrm{E}\left(\left|U_{i}K^{-1}\bar{k}^{-1/2}\right|^{4}\right)^{1/2}
\mathrm{E}\left[\boldsymbol{1}\left\{\left|U_{i}K^{-1}\bar{k}^{-1/2}\right|\geq K^{-1}\bar{k}^{-1/2}e_{n}\right\}\right]^{1/2}\non\\
&=&o(1).
\ee
It follows that $\max_{1\leq j\leq N}\left|CV_{2n,1}(\boldsymbol{w}_{j})\right|=o_{p}(1)$ and thus $\mathrm{sup}_{\boldsymbol{w}\in\mathcal{W}}|CV_{2n,1}(\boldsymbol{w})|=o_{p}(1)$.

Applying the fact that $\boldsymbol{1}\{\varepsilon_{i}-b_{k}\leq s\}-\boldsymbol{1}\{\varepsilon_{i}-b_{k}\leq 0\}-F(s+b_{k}|\boldsymbol{x}_{i})+F(b_{k}|\boldsymbol{x}_{i})\leq2,$ Lemma \ref{lem:order},
Cauchy-Schwarz inequality and assumption $MK^{2}\bar{k}^{2}\log n/n\rightarrow0$, we have
\be\label{CQRCV:2n2}
CV_{2n,2}(\boldsymbol{w})&\leq& \frac{2}{n}\sum_{i=1}^{n}\sum_{k=1}^{K}\left|\sum_{m=1}^{M}w_{m}\left(1,\boldsymbol{x}_{i(m)}^{T}\right)\left(\hat{b}_{ki(m)}-b_{k(m)}^{*},\left(\hat{\bbeta}_{i(m)}-\bbeta_{(m)}^{*}\right)^{T}\right)^{T}\right|\non\\
&\leq&2 \max_{1\leq i\leq n}\max_{1\leq m\leq M}\sum_{k=1}^{K}\left\|\left(\hat{b}_{k(m)}-b_{ki(m)}^{*},\left(\hat{\bbeta}_{i(m)}-\bbeta_{(m)}^{*}\right)^{T}\right)\right\|\non\\
&&\max_{1\leq m\leq M}\frac{1}{n}\sum_{i=1}^{n}\left\|\left(1,\boldsymbol{x}_{i(m)}^{T}\right)\right\|\non\\
&\leq&2K^{1/2}\left[\max_{1\leq i\leq n}\max_{1\leq m\leq M}\sum_{k=1}^{K}\left\|\left(\hat{b}_{k(m)}-b_{ki(m)}^{*},\left(\hat{\bbeta}_{i(m)}-\bbeta_{(m)}^{*}\right)^{T}\right)\right\|^{2}\right]^{1/2}\non\\
&&\max_{1\leq m\leq M}\frac{1}{n}\sum_{i=1}^{n}\left\|\left(1,\boldsymbol{x}_{i(m)}^{T}\right)\right\|\non\\
&=&O_{p}\left(K^{1/2}\sqrt{n^{-1}K\bar{k}\log n}\right)O_{p}(\overline{k}^{1/2})=o_{p}(1).
\ee

Thirdly, we establish (iv). Decompose $$CV_{3n}(\boldsymbol{w})=CV_{3n,1}(\boldsymbol{w})+CV_{3n,2}(\boldsymbol{w})$$
where
\be
CV_{3n,1}(\boldsymbol{w})&=&\frac{1}{n}\sum_{i=1}^{n}\sum_{k=1}^{K}\left(\int_{0}^{\sum_{m=1}^{M}w_{m}b_{k(m)}^{*}-b_{k}+\sum_{m=1}^{M}w_{m}\boldsymbol{x}_{i(m)}^{T}\bbeta_{(m)}^{*}-\mu_{i}}\left\{F(s+b_{k}|\boldsymbol{x}_{i})-F(b_{k}|\boldsymbol{x}_{i})\right\}\mathrm{d}s\right.\non\\
&&\left.-\mathrm{E}_{\boldsymbol{x}_{i}}\left[\int_{0}^{\sum_{m=1}^{M}w_{m}b_{k(m)}^{*}-b_{k}+\sum_{m=1}^{M}w_{m}\boldsymbol{x}_{i(m)}^{T}\bbeta_{(m)}^{*}-\mu_{i}}\left\{F(s+b_{k}|\boldsymbol{x}_{i})-F(b_{k}|\boldsymbol{x}_{i})\right\}\mathrm{d}s\right]\right)\non\\
CV_{3n,2}(\boldsymbol{w})&=&\frac{1}{n}\sum_{i=1}^{n}\sum_{k=1}^{K}\left(\int_{\sum_{m=1}^{M}w_{m}b_{k(m)}^{*}-b_{k}+\sum_{m=1}^{M}w_{m}\boldsymbol{x}_{i(m)}^{T}\bbeta_{(m)}^{*}-\mu_{i}}^{\sum_{m=1}^{M}w_{m}\hat{b}_{ki(m)}-b_{k}+\sum_{m=1}^{M}w_{m}\boldsymbol{x}_{i(m)}^{T}\hat{\bbeta}_{i(m)}-\mu_{i}}\left\{F(s+b_{k}|\boldsymbol{x}_{i})-F(b_{k}|\boldsymbol{x}_{i})\right\}\mathrm{d}s\right.\non\\
&&\left.-\mathrm{E}_{\boldsymbol{x}_{i}}\left[\int_{\sum_{m=1}^{M}w_{m}b_{k(m)}^{*}-b_{k}+\sum_{m=1}^{M}w_{m}\boldsymbol{x}_{i(m)}^{T}\bbeta_{(m)}^{*}-\mu_{i}}^{\sum_{m=1}^{M}w_{m}\hat{b}_{ki(m)}-b_{k}+\sum_{m=1}^{M}w_{m}\boldsymbol{x}_{i(m)}^{T}\hat{\bbeta}_{i(m)}-\mu_{i}}\left\{F(s+b_{k}|\boldsymbol{x}_{i})-F(b_{k}|\boldsymbol{x}_{i})\right\}\mathrm{d}s\right]\right).\non
\ee
In view of the fact that $|F(s+b_{k}|\boldsymbol{x}_{i})-F(b_{k}|\boldsymbol{x}_{i})|\leq 1$, we have
$$CV_{3n,2}(\boldsymbol{w})\leq CV_{3n,21}(\boldsymbol{w})+CV_{3n,22}(\boldsymbol{w}),$$
where
\be
CV_{3n,21}(\boldsymbol{w})&=&\frac{1}{n}\sum_{i=1}^{n}\sum_{k=1}^{K}\left|\sum_{m=1}^{M}w_{m}\left(1,\boldsymbol{x}_{i(m)}^{T}\right)\left(\hat{b}_{ki(m)}-b_{k(m)}^{*},\left(\hat{\bbeta}_{i(m)}-\bbeta_{(m)}^{*}\right)^{T}\right)^{T}\right|\non
\ee
and
\be
CV_{3n,22}(\boldsymbol{w})&=&\frac{1}{n}\sum_{i=1}^{n}\mathrm{E}_{\boldsymbol{x}_{i}}\sum_{k=1}^{K}
\left|\sum_{m=1}^{M}w_{m}\left(1,\boldsymbol{x}_{i(m)}^{T}\right)\left(\hat{b}_{ki(m)}-b_{k(m)}^{*},\left(\hat{\bbeta}_{(m)}-\bbeta_{i(m)}^{*}\right)^{T}\right)^{T}\right|.\non
\ee
We have to show $$\sup_{w\in \mathcal{W}}|CV_{3n,1}(\boldsymbol{w})|=o_{p}(1),$$ $$\sup_{w\in \mathcal{W}}|CV_{3n,21}(\boldsymbol{w})|=o_{p}(1),$$
 $$\sup_{w\in \mathcal{W}}|CV_{3n,22}(\boldsymbol{w})|=o_{p}(1).$$
 Analogous to the proof of $\sup_{w\in \mathcal{W}}|CV_{2n,1}(\boldsymbol{w})|=o_{p}(1),$ we can show that $\sup_{w\in \mathcal{W}}|CV_{3n,1}(\boldsymbol{w})|=o_{p}(1).$
The equation $\sup_{w\in \mathcal{W}}|CV_{3n,21}(\boldsymbol{w})|=o_{p}(1)$ can be concluded from (\ref{CV:2n2}).
By the triangle and Cauchy-Schwarz inequalities, the fact $A^{T}BA\leq \lambda_{\max}(B)A^{T}A$ for
any symmetric matrix B, Lemma \ref{lem:order} and assumption $MK^{2}\bar{k}^{2}\log n/n\rightarrow0$, we have
\be
\sup_{w\in \mathcal{W}}CV_{3n,22}(\boldsymbol{w})
&\leq& \sup_{w\in \mathcal{W}}
\frac{1}{n}\sum_{i=1}^{n}\sum_{k=1}^{K}\sum_{m=1}^{M}w_{m}\mathrm{E}_{\boldsymbol{x}_{i}}\left|\left(1,\boldsymbol{x}_{i(m)}^{T}\right)\left(\hat{b}_{ki(m)}-b_{k(m)}^{*},\left(\hat{\bbeta}_{i(m)}-\bbeta_{(m)}^{*}\right)^{T}\right)^{T}\right|\nonumber\\
&\leq& \sup_{w\in \mathcal{W}}
\frac{1}{n}\sum_{i=1}^{n}\sum_{k=1}^{K}\sum_{m=1}^{M}w_{m}\left[\left(\hat{b}_{ki(m)}-b_{k(m)}^{*},\left(\hat{\bbeta}_{i(m)}-\bbeta_{(m)}^{*}\right)^{T}\right)\right.\non\\
&&\left.\mathrm{E}\left\{\left(1,\boldsymbol{x}_{i(m)}^{T}\right)\left(1,\boldsymbol{x}_{i(m)}^{T}\right)^{T}\right\}
\left(\hat{b}_{ki(m)}-b_{k(m)}^{*},\left(\hat{\bbeta}_{i(m)}-\bbeta_{(m)}^{*}\right)^{T}\right)^{T}\right]^{\frac{1}{2}}\non\\
&\leq& \max_{1\leq m\leq M}\left\{\lambda_{\max}\left(\mathrm{E}\left\{\left(1,\boldsymbol{x}_{i(m)}^{T}\right)\left(1,\boldsymbol{x}_{i(m)}^{T}\right)^{T}\right\}\right)\right\}^{1/2}\non\\
&&\max_{1\leq i\leq n}\max_{1\leq m\leq M}\sum_{k=1}^{K}\left\|\left(\hat{b}_{ki(m)}-b_{k(m)}^{*},\left(\hat{\bbeta}_{i(m)}-\bbeta_{(m)}^{*}\right)^{T}\right)^{T}\right\|\non\\
&\leq&\max_{1\leq m\leq M}\left\{\lambda_{\max}\left(\mathrm{E}\left\{\left(1,\boldsymbol{x}_{i(m)}^{T}\right)\left(1,\boldsymbol{x}_{i(m)}^{T}\right)^{T}\right\}\right)\right\}^{1/2}\non\\
&&\left\{K\max_{1\leq i\leq n}\max_{1\leq m\leq M}\sum_{k=1}^{K}\left\|\left(\hat{b}_{k(m)}-b_{ki(m)}^{*},\left(\hat{\bbeta}_{i(m)}-\bbeta_{(m)}^{*}\right)^{T}\right)^{T}\right\|^{2}\right\}^{1/2}\non\\
&=&O_{p}\left(\overline{k}^{1/2}\right)O_{p}\left(K^{1/2}\sqrt{n^{-1}K\bar{k}\log n}\right)=o_{p}(1)
\ee

(v)For $CV_{4n}(\boldsymbol{w}),$ noting that $|F(s|\boldsymbol{x}_{i})-F(0|\boldsymbol{x}_{i})|\leq1$ and by
the study of  $\sup_{w\in \mathcal{W}}CV_{3n,22}(\boldsymbol{w})$ we have
\be
&&\sup_{w\in \mathcal{W}}CV_{4n}(\boldsymbol{w})
\leq \sup_{w\in \mathcal{W}}
\frac{1}{n}\sum_{i=1}^{n}\sum_{k=1}^{K}\sum_{m=1}^{M}w_{m}\mathrm{E}_{\boldsymbol{x}_{i}}\left|\left(1,\boldsymbol{x}_{i(m)}^{T}\right)\left(\hat{b}_{ki(m)}-\hat{b}_{k(m)},\left(\hat{\bbeta}_{i(m)}-\hat{\bbeta}_{(m)}\right)^{T}\right)^{T}\right|
=o_{p}(1)\non.\ee

Thus, we finish the proof.
\bibliographystyle{asa}
\bibliography{CQR}
\end{sloppypar}
\end{document}